\documentclass{article}
\usepackage[utf8]{inputenx}
\usepackage{graphicx} % Required for inserting images
\usepackage{indentfirst}
\usepackage{microtype}
\usepackage{yfonts,color}
\usepackage{minted}

\usemintedstyle{borland}
\usepackage{authblk}
\usepackage{comment}
\usepackage{subfig}
\usepackage{amssymb}
\usepackage{algorithm}
\usepackage{algpseudocode}
\usepackage{amsfonts}
\usepackage{amsmath,bm}

\usepackage{enumitem}
\numberwithin{equation}{section}
\usepackage{lettrine}
\usepackage{amsthm}
\usepackage{multirow}
\usepackage{hyperref}

%My commands

\newcommand{\bolda}{\tilde{\bm{a}}}
\newcommand{\minCZ}{\underline{C}_{\, \ell;Z}}
\newcommand{\maxCZ}{\overline{C}_{\ell;Z}}
\newcommand{\phiproxi}{\bm{\phi}^*}
\newcommand{\qproxi}{q^*}
\newcommand{\LN}{L}
\newcommand{\FN}{\mathcal{F}_{\ell}}
\newcommand{\lambdaN}{\lambda_{\ell}}
\newcommand{\qN}{p}

\newcommand{\phihat}{\bm{\widehat{\phi}}}

\newcommand{\brutti}{C_L}
\newcommand{\Gammastar}{\Gamma^*_{\ell}}

\newtheorem{theorem}{Theorem}[section]

\newtheorem{condition}[theorem]{Condition}

\newtheorem{definition}[theorem]{Definition}

\newtheorem{lemma}[theorem]{Lemma}

\newtheorem{proposition}[theorem]{Proposition}
\newtheorem{remark}[theorem]{Remark}

\counterwithin{algorithm}{section}

\usepackage{chngcntr}
\usepackage{cite}
\usepackage{balance}
\linespread{1}
\usepackage[nottoc, notlof, notlot]{tocbibind}
%\onehalfspacing
%\counterwithout{footnote}{chapter}
\usepackage{hyperref}
\usepackage{mathrsfs}
\usepackage{listings}
\usepackage{xcolor}
\usepackage{pythonhighlight}
\definecolor{codegreen}{rgb}{0,0.6,0}
\definecolor{codegray}{rgb}{0.5,0.5,0.5}
\definecolor{codepurple}{rgb}{0.58,0,0.82}
\definecolor{backcolour}{rgb}{0.95,0.95,0.92}

\title{Change Point Detection for Functional Autoregressive Processes on the Sphere}
\author{Federica Spoto\thanks{Department of Biostatistics, Harvard T.H. Chan School of Public Health, Boston, USA; e-mail: \href{mailto:fspoto@hsph.harvard.edu}{fspoto@hsph.harvard.edu}}, Alessia Caponera\thanks{Department of AI, Data and Decision Sciences, LUISS Guido Carli, Rome, Italy; e-mail: \href{mailto:acaponera@luiss.it}{acaponera@luiss.it}}, Pierpaolo Brutti\thanks{Department of Statistical Sciences, Sapienza Universit\`a di Roma, Rome, Italy; e-mail: \href{mailto:pierpaolo.brutti@uniroma1.it}{pierpaolo.brutti@uniroma1.it}}}

\date{}

\begin{document}

\maketitle

\section{Introduction}

Understanding and modeling changes or transitions in the behavior of natural and social phenomena is a fundamental aspect across scientific disciplines, encompassing domains such as climate science, economics, and medicine (see \cite{Chen2012}). These changes may be abrupt or gradual, arising from a multitude of mechanisms, and their timely detection is vital for uncovering the driving forces underlying dynamic processes. For instance, a shift in climatic conditions can have significant repercussions on ecosystems, agriculture, and human societies, while sudden transitions in investor sentiment can profoundly influence financial markets. Detecting and analyzing such changes affords researchers a deeper comprehension of the temporal and possibly spatial evolution inherent to complex systems.

The statistical framework of change point analysis identifies and characterizes abrupt changes in the generative mechanisms of observed data, to address applications ranging from signal processing to finance and environmental monitoring \cite{Burg2020AnEO, Truong2020, fearnhead2022detecting}. The core objective is to detect points in time where the underlying probabilistic structure of a process changes, often manifesting as shifts in distribution, mean, variance, or model parameters \cite{Kim2009, Tartakovsky2008, Reeves2007, Talih, basseville1993detection, adams2007bayesian}. Foundational work includes classical methods like the Likelihood Ratio (LR) test \cite{Hinkley, Sen1975, Horvath1993, Chen1997} and the Cumulative Sum (CUSUM) test \cite{Page1955, caponera2024multi}, which have provided a basis for later developments. For data exhibiting multiple change points, recursive partitioning algorithms such as Binary Segmentation \cite{Scott1974, vostrikova1981detecting}, Wild Binary Segmentation \cite{Fryzlewicz2014}, and dynamic programming techniques including PELT \cite{Killick} are widely used, offering a balance of computational efficiency and statistical rigor.

For time series characterized by temporal dependence, particularly autoregressive (AR) and ARMA models, the presence of parameter changes violates assumptions of stationarity central to classical analysis \cite{brockwell1991time, Robbins2016, fearnhead2022detecting}. This has led to specialized methodologies for detecting changes in autoregressive coefficients or innovation variances \cite{Inclan1994, Trapani2022}, as well as for more complex structures such as vector autoregressive (VAR) processes (\cite{Sims80, Var}). Penalized estimation methods, including fused lasso and total variation regularization (\cite{Safikhani, Chan2014}) and penalized dynamic programming \cite{Localizing_ar}, have been successfully employed to simultaneously identify change points and estimate segment-specific model parameters, even in high-dimensional contexts \cite{ChoFryzlewicz, Cho2016, WangSamworth2018, Safikhani}. Bayesian techniques \cite{Chernoff1964, Carlin1992, ALTIERI2015197} and universal inference frameworks \cite{larry_universal, spoto2022universal} further enhance robustness, accommodating model misspecification and finite-sample uncertainty.

Despite the maturity of change point analysis for classical Euclidean time series, analogous methods for non-Euclidean data -- in particular, for spherical or manifold-valued autoregressive processes -- remain largely undeveloped. This gap is especially notable given the growing importance of random fields on the sphere in areas ranging from cosmology and astrophysics to geophysics and climate research \cite{baldi2009, Berg2016}. Spherical random fields frequently exhibit both spatial and temporal dependencies, with models for their analysis ranging from purely spatial constructions to spatio-temporal and covariance-based approaches. Of particular relevance is the class of spherical functional autoregressive models (SPHAR($p$)), as introduced in \cite{caponera2019asymptotics}, which provides a flexible framework for capturing the dual spatio-temporal dependence structure inherent in spherical-valued data. Notably, however, these models assume stationarity -- an assumption that may be critically violated in applications where long-term regime changes or abrupt shifts are expected.

To the best of our knowledge, no prior work has developed a general change point methodology for functional autoregressive models on the sphere. A related contribution is \cite{caponera2024multi}, which proposes a CUSUM test statistic to assess non-stationarity in the mean structure across a wide range of alternatives, including abrupt changes.  
As a result, in many domains of application -- such as the detection of changes in global temperature patterns -- the existing literature either employs purely temporal analyses or ignores the spatial component altogether, potentially missing important structural breaks and their spatial manifestations (see, for example, \cite{Reeves2007} and references therein).

The primary objective of this paper is to address this methodological gap by generalizing the SPHAR($p$) model to accommodate non-stationarity via the integration of the change point framework. We moreover develop a penalized estimation procedure and provide theoretical guarantees for consistent detection and accurate localization of change points in this setting. 
Our approach generalizes classical penalized change point techniques to the setting of piecewise stationary, spherical functional autoregressive processes -- a fundamentally infinite-dimensional and non-Euclidean context. This allows joint detection and localization of regime shifts in the spatio-temporal structure of spherical random fields. This extension, by inherently synthesizing spatial and temporal information, is anticipated to offer novel insights into both global and localized changes in spherical random fields, thereby broadening the scope and applicability of change point methodologies in the context of manifold-valued time series.

The structure of the paper is as follows. Section \ref{sec:spharcp_gen} presents the spherical change-point autoregressive model, discusses its spectral characteristics, and outlines the main assumptions required for the subsequent theoretical developments.
In Section \ref{sec:est_main} we then present the methodology used to estimate the number of change points and their locations, and we state the main result on the consistency of the corresponding estimators. Section \ref{sec:theor_results} contains some additional theoretical results on the prediction error and the optimal partition. In Section \ref{sec:num_res} we show the predictive performance of the proposed procedure through simulations.
Finally in Appendix \ref{sec:est_singleinterval} and Appendix \ref{sec:proofs} we collect some auxiliary results and all the proofs.

\section{Model Specification and Assumptions}\label{sec:spharcp_gen}

\subsection{Model Components} \label{sec:def_model}

In this section we refer to several definitions introduced in \cite{caponera2019asymptotics} in order to define the $K+1$ processes building the piecewise-stationary functional autoregressive model with function domain the unit sphere $\mathbb{S}^2$. In particular, we deal with 2-weakly \emph{isotropic-stationary processes} $$\{T_{k}(x, t), (x,t) \in \mathbb{S}^2\times \mathbb{Z}\}, \qquad k = 0, \dots, K,$$ as defined in Section 3 in \cite{caponera2019asymptotics}, independent of each other, and \emph{spherical white noise processes} (Definition 3 in \cite{caponera2019asymptotics}) $$\{Z_{k}(x, t), (x,t) \in \mathbb{S}^2\times \mathbb{Z}\}, \qquad k = 0, \dots, K.$$
All the processes are defined on a common probability space $(\Omega, \mathscr{F}, P)$.

Let $L^2(\mathbb{S}^2) = L^2(\mathbb{S}^2, dx)$ be the space of real-valued square-integrable functions on the sphere, $dx$ being the spherical Lebesgue measure, and let $\{Y_{\ell,m}: \ell \ge 0, m = -\ell, \dots, \ell\}$ be the standard orthonormal basis of real spherical harmonics for $L^2(\mathbb{S}^2)$.

We model the $T_k$'s as spherical functional autoregressive processes of order $p$ (SPHAR($p$)) with innovation processes the $Z_k$'s (see also Chapters 3 and 5 in \cite{bosq}). This means that for the generic $k-$th process there exist $p$ linear and bounded operators $\Phi_{k,j}: L^2(\mathbb{S}^2) \to L^2(\mathbb{S}^2)$, $j=1,\dots,p$, such that 
\begin{equation}
    T_{k}(x,t) = \sum_{j = 1}^p(\Phi_{k,j} T_{k, t-j})(x) + Z_{k}(x,t) ,\qquad (x,t) \in \mathbb{S}^2\times \mathbb{Z},
    \label{eq:model}
\end{equation}
the equality holding in $L^2(\Omega)$ and $L^2(\Omega \times \mathbb{S}^2)$. In particular, the autoregressive kernel operator $\Phi_{k,j}$ is defined as
$$
(\Phi_{k,j} f)(x) = \int_{\mathbb{S}^2} h_{k,j}(\langle x,y \rangle)f(y)dy, \qquad x \in \mathbb{S}^2,
$$
for some continuous and isotropic autoregressive kernel $h_{k,j}:[-1,1] \rightarrow \mathbb{R}$, $j=1,\dots,p$ (see also Definition 5 in \cite{caponera2019asymptotics}). 
Under isotropy, the random fields $T_k$ and $Z_k$, $k = 0, \dots, K$, admit the expansions
$$T_{k}(x,t) =\sum_{\ell=0}^{\infty} \sum_{m=-\ell}^{\ell} a^{(k)}_{\ell,m}(t) \cdot Y_{\ell,m}(x), \qquad (x,t) \in \mathbb{S}^2\times \mathbb{Z},$$
$$Z_{k}(x,t) =\sum_{\ell=0}^{\infty} \sum_{m=-\ell}^{\ell} a^{(k)}_{\ell,m;Z}(t) \cdot Y_{\ell,m}(x), \qquad (x,t) \in \mathbb{S}^2\times \mathbb{Z},$$
in $L^2(\Omega)$ and $L^2(\Omega \times \mathbb{S}^2)$.
As a consequence, for $\ell \ge 0,\  m = -\ell, \dots, \ell, \ t \in \mathbb{Z},$ the random harmonic coefficients, defined as
\begin{equation}\label{eq:alm_def}  
    a^{(k)}_{\ell,m}(t) = \int_{\mathbb{S}^2} T_k(x,t) Y_{\ell,m}(x)dx,\qquad 
    a^{(k)}_{\ell,m;Z}(t) = \int_{\mathbb{S}^2} Z_k(x,t) Y_{\ell,m}(x)dx,
\end{equation}
satisfy the $p-$th order autoregressive equation 
\begin{equation} \label{eq::ar(p)}
    a_{\ell, m }^{(k)}(t) = \sum_{j = 1}^p\phi^{(k)}_{\ell;j} a_{\ell, m }^{(k)}(t-j) + a_{\ell, m; Z}^{(k)}(t)
\end{equation}
with $\{\phi^{(k)}_{\ell;j},\ \ell \ge 0\}$ being the eigenvalues of the operator $\Phi_{k,j}$. Indeed, it holds that
$$
\Phi_{k,j} Y_{\ell,m}= \phi^{(k)}_{\ell;j}Y_{\ell,m}.
$$
Moreover, for any $\ell, \ell' \in  \mathbb{N}, \ m = -\ell, \dots, \ell, \ m' = - \ell', \dots, \ell'$, and $k, k' = 0, \dots, K,$ we have that 
$$
\mathbb{E}[a^{(k)}_{\ell,m}(t)a^{(k')}_{\ell',m'}(t+\tau)] = \begin{cases}
    C_{\ell}^{(k)}(\tau) &\text{if } \ell = \ell', m=m', k=k',\\
    0 & \text{otherwise}.
\end{cases} \qquad t, \tau \in \mathbb{Z}.
$$
Observe that, for $\tau = 0$, $\{C_{\ell}^{(k)}(0), \ \ell \ge 0\}$ corresponds to the so-called angular power spectrum, the spectral decomposition of the covariance function of a purely spatial spherical random field. From now on, we will use the notation $C_{\ell}^{(k)}(0) = C_{\ell}^{(k)}$.

In addition, we define $\{C_{\ell;Z}^{(k)}, \ell \ge 0\}$, which is the power spectrum of the spherical white noise $\{Z_k(x,t), \ (x,t) \in \mathbb{S}^2\times \mathbb{Z}\}$, and its maximum and minimum over the processes $$
\maxCZ = \underset{k= 0, \dots,  K}{\max} C_{\ell;Z}^{(k)}, \qquad \minCZ = \underset{k= 0, \dots,  K}{\min} C_{\ell;Z}^{(k)}.$$

\begin{condition}\label{cond:ratio_cl} The ratio between the maximum $\maxCZ$ and minimum $\minCZ$ is assumed to be square-summable in $\ell$, i.e.,
        $$\sum_{\ell= 0}^\infty \left|\frac{\maxCZ }{\minCZ }\right|^2< \infty.$$
\end{condition}
Note that, in this context, the latter condition highlights one of the differences with the finite dimensional setting. Indeed, while in the finite dimensional case this condition is automatically satisfied, in the infinite dimensional set up it is required, since $\minCZ \to 0$ as $\ell \to \infty$. This suggests that the noise variance at higher multipoles does not differ too much between the $K+1$ segments.
Moreover, this implies that 
$$
\sum_{\ell=0}^\infty \maxCZ (2\ell+1) \le \sup_{\ell\ge 0} \frac{\maxCZ }{\minCZ } \sum_{\ell=0}^\infty \minCZ (2\ell+1) \le \sup_{\ell\ge 0} \frac{\maxCZ }{\minCZ } \sum_{\ell=0}^\infty C_{\ell;Z}^{(k)} (2\ell+1) < \infty,
$$
for any $k=0,\dots,K$.
\\

For the sections to follow, it will be useful to introduce the $p-$dimensional vectors $\bm{\phi}_{\ell}^{(k)} = (\phi^{(k)}_{\ell;1}, \dots, \phi^{(k)}_{\ell;p})'$ and $\bolda_{\ell,m}^{(k)}(t) = (a_{\ell,m}^{(k)}(t), \dots, a_{\ell,m}^{(k)}(t-p+1))'$. The model for the random harmonic coefficients at given $(\ell,m)$ can be written as
\begin{equation*}\label{eq::armodelk}
    a^{(k)}_{\ell,m}(t) = \bm{\phi}_{\ell}^{(k)}{}' \bolda_{\ell,m}^{(k)}(t-1) + a^{(k)}_{\ell,m;Z}(t), \qquad t \in \mathbb{Z}.
\end{equation*}

\subsection{Spherical Change Point Autoregressive Model}\label{sec:obs_model}
Let $1=\eta_0 <\eta_1 <\dots<\eta_K <n<\eta_{K+1}=n+1$ be a sequence of change points. We denote the minimal spacing between consecutive change points as
\begin{equation}\label{eq:delta}
    \Delta = \underset{k= 0, \dots,  K}{\min}(\eta_{k+1} - \eta_{k}).
\end{equation}
We assume that $\Delta > p \ge 1$.

Consider a time-varying spherical random field $\{T(x,t),\ x \in \mathbb{S}^2, \ t \in \mathbb{Z}\}$. For $t \in \{1, \dots, n\}$, we assume to be able to observe a time-stretch of the spherical random field such that 
\begin{equation}\label{eq:obs_field}
   T(x,t) = T_k(x,t) \qquad \text{for } t \in [\eta_k, \eta_{k+1}) \cap \mathbb{Z},\ x \in \mathbb{S}^2. 
\end{equation}
Consequently, we assume to be able to observe for every timestamp $t \in \{1, \dots, n\}$ the corresponding harmonic coefficients $\{a_{\ell,m}(t), \ \ell \ge 0, \ m=-\ell,\dots,\ell\}$ (possibly up to a finite multipole $L>0$). The given coefficients are such that
$$
a_{\ell, m}(t) = a_{\ell, m}^{(k)}(t), \qquad \text{for } t \in [\eta_k, \eta_{k+1})\cap \mathbb{Z}.
$$
It will be useful later to consider the $p-$dimensional vectors 
$$
\bolda_{\ell,m}(t) = (a_{\ell,m}(t), \dots, a_{\ell,m}(t-p+1))', \qquad t \in \{p, \dots, n\}.
$$
Moreover, we can define the $p-$dimensional vectors $\bm{\phi}_\ell(t) = (\phi_{\ell;1}(t),\dots,\phi_{\ell,p}(t))'$ satisfying
$$
\bm{\phi}_\ell(t)  = \bm{\phi}^{(k)}_\ell, \qquad \text{for } t \in [\eta_k, \eta_{k+1}).
$$

The spherical change point process as described in \eqref{eq:obs_field} allows us to handle the non-stationarity of the spherical process $\{T(x,t),\ x \in \mathbb{S}^2, \ t \in \mathbb{Z}\}$ by modeling the process as the concatenation of $K+1$ different stationary spherical processes, where the transition happens at specific timestamps $1<\eta_1 <\dots<\eta_K <n$. 

\subsection{Assumptions}
We introduce some definitions and conditions necessary in the upcoming 
%to the model to ensure the properties and to achieve our 
theoretical results. The following definitions and assumptions refer to the generic $k-$th process but they are assumed to hold for all $ k \in \{0, \ \dots, K\}$. For additional details on the assumptions, see \cite{CAPONERA2021167}.

For a vector $v= (v_1,\dots,v_p)' \in \mathbb{R}^p$, $\|v\|_{2}$ indicates the Euclidean norm (or $2-$norm), while  $$\|v\|_{0}=\sum_{i=1}^p \mathbf{1}\{v_i\neq 0\}, \quad\|v\|_{1}=\sum_{i=1}^p|v_i|, \quad  \|v\|_{\infty} = \max_{i=1,\ldots,p} |v_i|,$$. We say that $v$ is a $r-$sparse vector, $1\le r \le p$, if $\|v\|_{0}=r$.

\begin{definition}[Sparsity set]\label{def:sparsity_set} For any $\ell \ge 0$ and $\bm{\phi}_{\ell}^{(k)}$, we define $q_{\ell}^{(k)} = \|\bm{\phi}_{\ell}^{(k)}\|_0$ the $\ell-$th sparsity index of segment $k$, which satisfies $0 \le q_{\ell}^{(k)} \le p$. We call $\{q_{\ell}^{(k)}:  \ell \ge 0, \ k = 0, \dots, K\}$ the sparsity set. 
\end{definition}

\begin{remark}
In order to ensure identifiability, we assume that there exists at least one $\ell \ge 0$ such that $\phi_{\ell;p}^{(k)}\neq 0$, so that $P(\Phi_{k,p}T_k(\cdot,t)\neq 0) >0$, for all $t \in \mathbb{Z}$, see \cite{bosq}. As a consequence, for
some $\ell \ge 0$, we can have $\bm{\phi}_{\ell}^{(k)}=0$ and hence $\|\bm{\phi}_{\ell}^{(k)}\|_0 = q_{\ell}^{(k)} = 0$; however, $\underset{\ell \ge 0 }{\max} \ q_{\ell}^{(k)} \ge 1.$     
\end{remark}

\begin{condition}[Gaussianity and identifiability]\label{cond:identifiability} Let $\{Z_k(x,t), \ (x,t) \in \mathbb{S}^2\times \mathbb{Z}\}$, $k=0,\dots, K$, be the spherical white noise processes in \eqref{eq:model}. Assume that they are Gaussian and $C^{(k)}_{\ell;Z} >0,$ $\forall \ell \ge 0$.
%$$
%\int_{\mathbb{S}^2 \times \mathbb{S}^2} \operatorname{Cov} \left(Z_k(x,t), Z_k(y,t)\right)\  f(x) \ f(y) \ dx \ dy >0,
%$$
%for any $f \in L^2(\mathbb{S}^2)$ such that $f(\cdot) \neq 0$. 
%Note that this is equivalent to ask that the covariance functions are strictly positive definite. 
\end{condition}
Now, for the generic $k-$th spherical process, we consider the equations associated with the subprocesses in \ref{eq::ar(p)}
\begin{equation}\label{eq::poly}
1-\phi^{(k)}_{\ell;1}z-\cdots-\phi^{(k)}_{\ell;p}z^p = 0, \qquad \ell \ge 0.
\end{equation}

\begin{condition}[Causality/Stationarity]\label{cond:causality}
For all $k \in \{0,\dots,K\}$, the roots of \eqref{eq::poly} lie outside the unit circle, i.e.,
$$
|z| \le 1\; \Rightarrow \; 1-\phi^{(k)}_{\ell;1}z-\cdots-\phi^{(k)}_{\ell;p}z^p \neq 0, \qquad \forall \ell \ge 0.
$$
\end{condition}
This condition ensures the existence of a unique isotropic and stationary solution for every process defined in \eqref{eq:model}; for rigorous proofs, see \cite{bosq,spharma20,tesi_ale}.
\begin{condition}[Smoothness]\label{cond:smoothness} The operators $\Phi_{k,j}$, $j=1, \ldots,p$, are nuclear, that is,
$$
\sum_{\ell \ge 0}(2\ell +1)|\phi^{(k)}_{\ell;j}| < \infty.
$$
\end{condition}
For a general discussion on operators on Hilbert spaces, see \cite{hsing}.
\begin{condition}[Boundedness]\label{cond:boundedness} For some absolute constant $C_{\Phi}>0$ it holds $$\max_{k=0,\dots,K}\sup_{\ell \ge 0}\|\bm{\phi}_\ell^{(k)}\|^2_2 \le C_{\Phi}.$$
\end{condition}

\subsection{Spectral Densities and Stability Measures}\label{sec:stability_meas}
In the following, we discuss local (to each process) and global stability measures for SPHAR($p$) random fields.
%from both considering each process separately and defining global measures over the set of processes. 
Stability measures analyze how stable the autocovariance matrix is, quantifying the dependency level between the variables of the process. \\

Let $k \in \{0, \dots, K\}$ and fix $(\ell,m)$.
Recall that $\{a^{(k)}_{\ell,m}(t),\ t \in \mathbb{Z}\}$ can be read as a real-valued autoregressive process of order $p$, see \eqref{eq::ar(p)}. Under standard stationarity assumptions (see \cite{brockwell1991time}), we can define its spectral density as
$$
f_{\ell}^{(k)}(\nu)= \frac{1}{2\pi}\sum_{\tau = -\infty}^{\infty} C_{\ell}^{(k)}(\tau)e^{-i\nu \tau} = \frac{1}{2\pi}\frac{C_{\ell;Z}^{(k)}}{|\bm{\phi}_{\ell}(e^{-i\nu})|^2}, \qquad\nu \in [-\pi, \pi],
$$
which is bounded and continuous. Upper and lower extrema of the spectral density over the unit circle are defined as
\begin{align*}
    & \mathcal{M}\left(f_{\ell}^{(k)}\right) = \underset{\nu \in [-\pi,\pi]}{\max}f_{\ell}^{(k)}(\nu), \qquad \mathfrak{m}\left(f_{\ell}^{(k)}\right) = \underset{\nu \in [-\pi,\pi]}{\min}f_{\ell}^{(k)}(\nu).
\end{align*}
Only referring to multipole $\ell$, $\mathcal{M}\left(f_{\ell}^{(k)}\right)$ can be considered as a stability measure of the process $\{a^{(k)}_{\ell,m}(t), t\in \mathbb{Z}\}$. Following \cite{CAPONERA2021167}, a global stability measure can be obtained by considering jointly all the segments via the following definition
\begin{align*}
    & \mathcal{M}_{\ell} = \underset{k= 0, \dots,  K}{\max}\,\mathcal{M}\left(f_{\ell}^{(k)}\right), \qquad
    %& \mathcal{M}_T = \underset{\ell \ge 0}{\max}\,\mathcal{M}_\ell. \\
    \mathfrak{m}_\ell = \underset{k= 0, \dots,  K}{\min}\,\mathfrak{m}(f_{\ell}^{(k)}).
\end{align*}

The following quantities are also well-defined both at process and global level
$$
\mu_{\min; \ell}(\bm{\phi}_\ell^{(k)})= \underset{z \in \mathbb{C}: |z| =1}{\min}|1-\phi^{(k)}_{\ell;1}z-\cdots-\phi^{(k)}_{\ell;p}z^p |^2,
$$
$$
\mu_{\max; \ell}(\bm{\phi}_\ell^{(k)}) = \underset{z \in \mathbb{C}: |z| =1}{\max}|1-\phi^{(k)}_{\ell;1}z-\cdots-\phi^{(k)}_{\ell;p}z^p |^2,
$$
$$
\mu_{\min; \ell}= \underset{k=0,\dots,K}{\min}\ \mu_{\min; \ell}(\bm{\phi}_\ell^{(k)}), \qquad \mu_{\max; \ell} = \underset{k=0,\dots,K}{\max} \ \mu_{\max; \ell}(\bm{\phi}_\ell^{(k)}).
$$

For every $k-$th spherical process, we define $\Gamma^{(k)}_{\ell}$ as the $p \times p$ matrix 
with generic $ij-$th element given by $C^{(k)}_\ell(i-j)$. Then, it holds
$$
2\pi \mathfrak{m}_\ell \le \min_{k = 0,\dots,K}\lambda_{\min} (\Gamma^{(k)}_\ell) \le \max_{k = 0,\dots,K} \lambda_{\max} (\Gamma^{(k)}_\ell) \le 2 \pi \mathcal{M}_\ell. 
$$
This is a general result that holds true for covariance matrices of any given dimension of the described type (see \cite{basu}). 

\section{Estimation and Main Result}\label{sec:est_main}

In this section, we introduce the proposed method to identify the number of change points and their locations, find the optimal data partition, and estimate the model parameters for each estimated segment. Specifically, following \cite{CAPONERA2021167, Rinaldo2020LocalizingCI}, we focus on a nested LASSO-type minimization problem which employs the so-called \emph{minimal partitioning problem}. We then state our main result which consists of a consistency result for the number of change points and their locations.

Let $\mathcal{P}$ be an interval partition of $\{1, \dots, n\}$ into $K_{\mathcal{P}}+1$ intervals, i.e.,
$$
\mathcal{P} = \{\{1, \dots, i_1-1\},\{i_1, \dots, i_2-1\}, \dots, \{i_{K_{\mathcal{P}}}, \dots, n\}\},
$$
for some integers $1= i_0 < i_1 < \dots < i_{K_{\mathcal{P}}} < n$, where $K_{\mathcal{P}}\ge 0$. Moreover, consider the generic interval $I = [s,e]$, where $[s,e]$ is a simplified notation for the set of consecutive integers $\{s, \dots, e \}$, with $s,e \in \{1,\dots,n\}$. 

\begin{definition}[Penalized estimator]\label{def:pen_estim}
    For a non-negative tuning parameter $\gamma\ge 0$, let
\begin{align}
    \widehat{\mathcal{P}} \in \underset{\mathcal{P}}{\mathrm{argmin} }\left \{\sum_{I\in \mathcal{P}}\mathcal{L}(I) + \gamma |\mathcal{P}|\right\}, \label{eq:min_problem}
\end{align}
where $\mathcal{L}(\cdot)$ is a loss function to be specified below, $|\mathcal P|$ is the cardinality of $\mathcal P$ and the minimization is taken over all possible interval partitions of  $\{1, \dots, n\}$.
Let $I = [s,e]$, $|I| = N_I = e - s - p + 1$, and $L>0$. Then, the dynamic programming algorithm can be deployed by setting the objective function to be \eqref{eq:min_problem} with
\begin{align}
\mathcal{L}(I) =
    \sum\limits_{t = s+p}^e  \sum\limits_{\ell = 0}^{\LN-1} \sum\limits_{m=-\ell}^{\ell}\left| a_{\ell, m}(t) - \bolda'_{\ell,m}(t-1) \bm\phihat_{\ell,I}  \right|^2,\label{eq:loss}
    \end{align}
%where
%$$
%(\widehat{\Phi}_{1}^{I},\dots, \widehat{\Phi}_{p}^{I})' = \sum_{\ell =0 }^{\LN-1} \bm{\widehat{\phi}}_{\ell, I}^{I} Y_{\ell,m} \otimes Y_{\ell,m},
%$$
%and 
where
\begin{align}
     \phihat_{\ell,I}&= \underset{\bm{\phi}_{\ell}
     \in \mathbb{R}^p}{\mathrm{argmin}} \ \sum_{t = s+p}^e  \sum_{m=-\ell}^{\ell}\left| a_{\ell, m}(t) - \bolda'_{\ell,m}(t-1) \bm{\phi}_{\ell}  \right|^2 + \lambda_\ell \sqrt{N_I(2\ell+1)}\|\bm{\phi}_{\ell}\|_1, \label{eq:phi_est}
\end{align}
with penalty parameter $\lambda_\ell \ge 0.$
\end{definition} 

The optimization problem in \eqref{eq:min_problem}, known as the minimal partitioning problem, can be solved using dynamic programming. The parameter $\gamma$ penalizes over-partitioning, while the scalar $\lambda$ is a tuning parameter controlling the norm of the spectral parameters and may depend on the multipole $\ell$. In \eqref{eq:loss} and \eqref{eq:phi_est}, the summation over timestamps is restricted to $t = s + p, \dots, e$, so that the loss function $\mathcal{L}(\cdot)$ for each interval $I$ depends exclusively on the data within that interval. For the estimation of the autoregressive spectral parameters when no change point occurs, see \cite{CAPONERA2021167}.

\begin{remark}
The estimators of the spectral parameters in \eqref{eq:phi_est} differ from those presented in \cite{CAPONERA2021167}. In particular, in \cite[Equation (3.1)]{CAPONERA2021167}, the penalty parameter $\lambda$ does not depend on $\ell$, and the LASSO penalty is scaled by $N_I(2\ell+1)$ rather than by $\sqrt{N_I(2\ell+1)}$. These modifications also affect the corresponding theoretical results on the concentration of the estimators around the true values, which are therefore included for completeness in Appendix \ref{sec:est_singleinterval}.
\end{remark}

Before stating the main result, we need to introduce an additional condition on the \emph{minimal signal-to-noise ratio}. Accordingly, we define the following quantity

$$
\brutti = \max\{48,32p\} \max\{C_\Phi, 1\}
 \max_{k=0,\dots,K} \sum_{\ell=0}^{\LN-1} \max \{q_\ell^{(k)},1\} \frac{\lambdaN^2}{\alpha_\ell},
$$
where $\alpha_{\ell} = \frac{1}{2} \frac{\minCZ}{\mu_{\max; \ell}}$. Note that, by assumption \ref{cond:ratio_cl}, $C_L$ is bounded when $L\to \infty$ under suitable conditions on the $\lambda_\ell$'s (see Theorem \ref{th:main_res} below).

\begin{condition}[Signal-to-noise ratio]\label{cond:snr} Let us define the minimal jump size $$\kappa_L= \underset{k =0, \dots, K}{\min}\sum_{\ell = 0}^{\LN-1}(2\ell+1) \| \bm{\phi}_\ell^{(k+1)} - \bm{\phi}_\ell^{(k)} \|^2_2,$$ that is the minimum distance between the autoregressive parameters of two consecutive segments. It holds the following bound
$$
\Delta \kappa_L\ge C_\epsilon(K\gamma+ (2K+1)\brutti).
$$
where $\gamma$ is the tuning parameters given in Definition \ref{def:pen_estim}, $\Delta >p\ge 1$ is the minimal spacing between consecutive change points, and $C_\epsilon>0$ is an absolute constant. 
\end{condition}

\begin{theorem}\label{th:main_res} Consider the spherical change point autoregressive process $\{T(x,t),\ x \in \mathbb{S}^2, \ t \in \mathbb{Z}\}$ defined as in Section \ref{sec:def_model}, such that Conditions \ref{cond:identifiability}, \ref{cond:causality}, \ref{cond:smoothness}, \ref{cond:boundedness}, \ref{cond:snr} hold. 
Let $$
 \FN= c_0 a_0 \,\maxCZ\left(1+ \frac{1}{\mu_{\min; \ell}} \right),
$$
for some  absolute constants $a_0,c_0 > 0$. 

Consider the change point estimators $\{\hat{\eta}_k\}_{k = 1}^{\hat{K}}$ obtained as a solution of the dynamic programming optimization problem given in \eqref{eq:min_problem}, \eqref{eq:loss} and \eqref{eq:phi_est} with tuning parameters 
$$
\gamma  \ge \brutti(10K+4)
$$
and
$$
\lambdaN\ge  4 \FN \sqrt{\log p\LN} \cdot \sqrt{\frac{32p^2 \sum_{\ell=0}^{\LN-1} (2\ell+1)\alpha_\ell}{\alpha_\ell}}, \quad \ell =0,\dots,\LN-1.
$$
Then, with probability at least $1 - c_1 e^{-c_2 \log(pL)}$,
$$
\hat{K} = K \quad \text{and} \quad |\hat{\eta}_k - \eta_k | \le C_\epsilon \left( \frac{K\gamma+ (2K+1)\brutti}{\kappa_\LN}\right)
$$
for some absolute constants $C_\epsilon,c_1, c_2 > 0 $.
\end{theorem}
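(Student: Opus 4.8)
The plan is to combine the single‑interval concentration bounds for the penalized SPHAR($p$) estimator (Appendix~\ref{sec:est_singleinterval}, adapting \cite{CAPONERA2021167}) with the dynamic‑programming change‑point argument of \cite{Rinaldo2020LocalizingCI}, transported to the present infinite‑dimensional, non‑Euclidean setting.

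\emph{Step 1 (good event).} First I would fix an event $\mathcal{A}$ on which, simultaneously for every subinterval $I\subseteq\{1,\dots,n\}$ and every multipole $\ell=0,\dots,\LN-1$: (i) the empirical Gram matrix $N_I^{-1}\sum_{t\in I}\sum_m\bolda_{\ell,m}(t-1)\bolda_{\ell,m}(t-1)'$ obeys a restricted‑eigenvalue/compatibility bound with curvature of order $\alpha_\ell$ whenever $|I|$ exceeds a threshold comparable to $\Delta$; (ii) the noise gradient $\|\sum_{t\in I}\sum_m\bolda_{\ell,m}(t-1)a_{\ell,m;Z}(t)\|_\infty$ is dominated by a constant multiple of $N_I\lambdaN$; (iii) the aggregated quadratic forms $\sum_{\ell,m}\sum_{t\in I}|a_{\ell,m;Z}(t)|^2$ and the relevant cross terms concentrate around their expectations. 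These are exactly the quantities behind the single‑interval results; Gaussianity (Condition~\ref{cond:identifiability}) gives sub‑exponential tails, and a union bound over the $O(n^2)$ intervals and the $\LN$ multipoles, with $\lambdaN$ chosen as in the statement (the $\sqrt{\log p\LN}$ factor absorbing the union bound), yields $P(\mathcal{A})\ge\pdev$. On $\mathcal{A}$, the single‑interval oracle inequality gives, for any $I$ contained in one stationary segment $k$ with $|I|\gtrsim\Delta$, an aggregated estimation error $\sum_{\ell}(2\ell+1)\|\phihat_{\ell,I}-\bm{\phi}_\ell^{(k)}\|_2^2$ controlled, after the relevant normalization, by $\brutti$; here Condition~\ref{cond:ratio_cl} is what keeps $\sum_\ell\max\{q_\ell^{(k)},1\}\lambdaN^2/\alpha_\ell$ finite as $\LN\to\infty$.

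\emph{Step 2 (basic inequality and two bias estimates).} On $\mathcal{A}$, comparing the minimiser $\widehat{\mathcal P}$ with the oracle partition $\mathcal P^\star=\{[\eta_k,\eta_{k+1}-1]\}_{k=0}^{K}$ in \eqref{eq:min_problem} gives $\sum_{I\in\widehat{\mathcal P}}\mathcal L(I)+\gamma|\widehat{\mathcal P}|\le\sum_{I\in\mathcal P^\star}\mathcal L(I)+\gamma(K+1)$. I would then prove two estimates. An \emph{upper bound}: expanding $\mathcal L(I)$ for an interval inside one segment, it exceeds the pure‑noise energy $\mathrm{Noise}(I)=\sum_{\ell,m}\sum_{t\in I}|a_{\ell,m;Z}(t)|^2$ by at most $O(\brutti)$ (LASSO excess risk plus a controlled cross term), so $\sum_{I\in\mathcal P^\star}\mathcal L(I)\le\mathrm{Noise}+(K+1)\,O(\brutti)$. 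A \emph{detection lower bound}: if $I\in\widehat{\mathcal P}$ contains a true change point $\eta_k$ and extends by $n_-$ to its left and $n_+$ to its right of $\eta_k$, then the best single AR fit on $I$ cannot reconcile $\bm{\phi}_\ell^{(k)}$ with $\bm{\phi}_\ell^{(k+1)}$, and on $\mathcal{A}$
$$
\mathcal L(I)\ \ge\ \sum_{\ell,m}\sum_{t\in I}|a_{\ell,m;Z}(t)|^2\ +\ c\,\frac{n_-n_+}{n_-+n_+}\sum_{\ell=0}^{\LN-1}\alpha_\ell(2\ell+1)\big\|\bm{\phi}_\ell^{(k+1)}-\bm{\phi}_\ell^{(k)}\big\|_2^2\ -\ O(\brutti),
$$
whose middle term is at least $c'\min(n_-,n_+)\,\kappa_\LN$ up to the same $O(\brutti)$ slack. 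Conditions~\ref{cond:causality}, \ref{cond:smoothness} and \ref{cond:boundedness} enter here to keep the straddling‑interval bias and the curvatures $\alpha_\ell$ uniformly meaningful, and Condition~\ref{cond:ratio_cl} again to sum over $\ell$. Letting $\mathcal{K}_{\mathrm{bad}}$ be the set of true change points whose containing interval in $\widehat{\mathcal P}$ straddles them, plugging both estimates into the basic inequality and using the penalty to absorb any surplus intervals (after a crude a priori bound $\widehat K=O(K)$), one reaches a master inequality of the form
$$
\sum_{k\in\mathcal{K}_{\mathrm{bad}}}\min\!\big(n_-^{(k)},n_+^{(k)}\big)\,\kappa_\LN\ \le\ C\big(K\gamma+(2K+1)\brutti\big).
$$

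\emph{Step 3 (from the master inequality to $\widehat K=K$ and localisation).} Set $\delta=C_\epsilon(K\gamma+(2K+1)\brutti)/\kappa_\LN$; by Condition~\ref{cond:snr}, $\delta<\Delta/2$, so the windows $(\eta_k-\delta,\eta_k+\delta)$ are pairwise disjoint and strictly interior to their segments. Then: (a) \emph{no under‑detection} --- if some $\eta_k$ had no estimated change point within $\delta$, the interval of $\widehat{\mathcal P}$ containing it would straddle $\eta_k$ with $\min(n_-,n_+)>\delta$, contradicting the master inequality for $C_\epsilon$ large enough, so each $\eta_k$ lies within $\delta$ of some $\hat\eta_j$; (b) \emph{no over‑detection} --- if $\widehat K>K$, some interval of $\widehat{\mathcal P}$ is a proper subinterval of a single segment; merging it with a neighbour changes $\sum_I\mathcal L(I)$ by at most $O(\brutti)$ (Step~1) while saving $\gamma\ge\brutti(10K+4)$, contradicting optimality, so $\widehat K\le K$. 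Combining (a), (b) and $\delta<\Delta/2$ forces a bijection $\hat\eta_k\leftrightarrow\eta_k$ with $\widehat K=K$ and $|\hat\eta_k-\eta_k|\le\delta$, which is the claim.

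\emph{Main obstacle.} The delicate part is Step~2, and inside it the \emph{uniform} design control of Step~1: unlike a fixed‑dimensional VAR, the ``regression'' here is an infinite stack of $p$‑dimensional problems indexed by $\ell$, with $(2\ell+1)$ replicates per time stamp and vanishing curvature $\alpha_\ell=\tfrac12\,\minCZ/\mu_{\max;\ell}\to0$. Obtaining a restricted‑eigenvalue bound valid for \emph{all} intervals and \emph{all} multipoles at once, and summing the per‑$\ell$ detection biases into a clean multiple of $\kappa_\LN$ without the constant exploding as $\LN\to\infty$, is exactly where Conditions~\ref{cond:ratio_cl}, \ref{cond:smoothness} and \ref{cond:boundedness} are indispensable; calibrating the absolute constants so that the choices $\gamma\ge\brutti(10K+4)$ and the stated $\lambdaN$ make the over‑detection step, Condition~\ref{cond:snr}, and the probability $\pdev$ all fit together is the remaining bookkeeping.
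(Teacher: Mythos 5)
Your outline follows the same Wang--Yu--Rinaldo style dynamic‑programming localization template that the paper uses, and Steps~1 and~2 are in essence the content of Appendix~\ref{sec:est_singleinterval} (deviation bounds, compatibility condition, oracle inequalities), Theorem~\ref{th::error_diff}, and Proposition~\ref{pr:4cases}. The paper, however, packages these differently: it does a \emph{per-interval} case analysis (Proposition~\ref{pr:4cases}, Cases~1--4, classifying each $I \in \hat{\mathcal P}$ by the number of true change points it contains) rather than assembling a single ``master inequality'' over $\mathcal{K}_{\mathrm{bad}}$, and then it proves $\hat K = K$ as a separate proposition (Proposition~\ref{pr:k=k}) via a chain of inequalities comparing the losses of the true partition, $\hat{\mathcal P}$, and the common refinement $\{W_k\}$ obtained by merging the true and estimated change points.

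The genuine gap is in your Step~3(b), the no‑over‑detection argument. You argue: if $\hat K > K$, some interval $I \in \hat{\mathcal P}$ lies strictly inside a single true segment, and merging $I$ with a neighbour changes $\sum_I \mathcal L(I)$ by only $O(\brutti)$ while saving $\gamma$, contradicting optimality. This step tacitly assumes the neighbour of $I$ is also contained in the same segment, so that the merged interval is still change‑point‑free. That need not hold: with $K=1$ and $\hat K = 2$ (so $|\hat{\mathcal P}| = 3 = 3K$, which Proposition~\ref{pr:4cases} does not exclude), the DP can place $\eta_1$ in the interior of the middle interval $\hat I_1$, so that both $\hat I_0$ and $\hat I_2$ are change‑point‑free but each abuts only $\hat I_1$, which straddles $\eta_1$. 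Merging $\hat I_0$ with $\hat I_1$ then produces a straddling interval whose loss can be large, and your bound ``$O(\brutti)$'' fails. What actually forces $\hat K = K$ in the paper is the three‑way comparison in Proposition~\ref{pr:k=k}: $\sum_{k}\sum_\ell \mathcal L_\ell(I_k,\bm\phi_\ell^{(k)}) + K\gamma \ge \sum_k\sum_\ell \mathcal L_\ell(\hat I_k,\phihat_{\ell,\hat I_k}) + \hat K\gamma - (K+1)\brutti \ge \sum_k\sum_\ell \mathcal L_\ell(W_k,\phihat_{\ell,W_k}) + \hat K\gamma - 3(K+1)\brutti$, combined with the reverse estimate $\big|\sum_k\sum_\ell\mathcal L_\ell(I_k,\bm\phi_\ell^{(k)}) - \sum_k\sum_\ell\mathcal L_\ell(W_k,\phihat_{\ell,W_k})\big| \le (2\hat K + K + 1)\brutti$. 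Assuming $\hat K \ge K+1$ and using $\hat K \le 3K$ from Proposition~\ref{pr:4cases} then gives $\gamma \le (10K+4)\brutti$, contradicting the stated choice of $\gamma$. This refined‑partition comparison --- which never merges an empty interval into a straddling one, but instead compares against the $W_k$'s, where every true change point is a boundary --- is the missing idea in your over‑detection step.

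Two smaller remarks. First, your Step~3(a) requires an a priori bound $\hat K = O(K)$ to make the master inequality scale as $K\gamma + (2K+1)\brutti$; in the paper this bound ($\hat K \le 3K$) is an output of Cases~3 and~4 of Proposition~\ref{pr:4cases}, not a ``crude a priori'' fact, so you would still need to prove Case~4 (no interval contains $\ge 3$ true change points) separately. Second, Condition~\ref{cond:ratio_cl} is used exactly as you say, to keep $\brutti$ bounded as $L\to\infty$, but the paper does not sum ``per‑$\ell$ detection biases'' into $\kappa_L$ via $\alpha_\ell(2\ell+1)$; it works directly with the aggregated loss $\sum_\ell \mathcal L_\ell(\cdot,\cdot)$ and a lower bound $\sum_\ell |I_j|(2\ell+1)\alpha_\ell\|\bm\Delta_{\ell,j}\|_2^2 \ge \frac{|I_1||I_2|}{|I|}\sum_\ell(2\ell+1)\alpha_\ell\|\bm\phi_\ell^{(2)}-\bm\phi_\ell^{(1)}\|_2^2$ in the proof of Case~1, which is exactly the quantity you write, so that part of your sketch matches.
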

\begin{proof}
It follows from Proposition \ref{pr:4cases} that, $K \le |\hat{\mathcal{P}}|\le3K$. This combined with Proposition \ref{pr:k=k} completes the proof.
\end{proof}

Theorem \ref{th:main_res} is based on Proposition \ref{pr:4cases} and \ref{pr:k=k} which will be stated and proved in Section \ref{sec:res_part}. Proposition \ref{pr:4cases} and \ref{pr:k=k} can be considered minimization problems working on two different levels: the lower level considers separately every fixed interval of the partition, analyzing how its observations behave, while the higher level considers the partition, i.e., the complete set of intervals, and its properties. 
Proposition \ref{pr:4cases} and \ref{pr:k=k} analyze in detail the higher level, considering the entire partition. Nevertheless, in order to prove these propositions, we require further results concerning the prediction error within a fixed interval. These results will be discussed in the following section, and their proofs rely on additional findings regarding spectral parameter estimation, as presented in Appendix \ref{sec:est_singleinterval}.

\section{Theoretical Results}\label{sec:theor_results}

Before introducing the theoretical results on the prediction error and the optimal partition, let us define a function that, given an interval of observations $I=[s,e]$ and a parameter vector $\bm{\beta}$, returns the reconstruction error obtained using $\bm{\beta}$ as the autoregressive parameter vector. This function will be handy in the statements and proofs of the following results.\\ 

For a fixed multipole $\ell$, we define the function $\mathcal{L}_{\ell}(\cdot)$ that, given the interval $I=[s,e]$ and a $p-$dimensional vector $\bm{\beta}$, returns the following reconstruction error
\begin{align*}
(I, \bm{\beta}) \mapsto\mathcal{L}_{\ell}(I, \bm{\beta}) = \sum_{t = s+p}^e \sum_{m=-\ell}^{\ell}\left| a_{\ell, m}(t) -  \bolda_{\ell,m}'(t-1){\bm{\beta}}\right|^2. 
\end{align*}

\subsection{On the Prediction Error}\label{sec:pred_err}

\begin{theorem}\label{th::error_diff}(Prediction Error) Consider the estimation problem \eqref{eq:phi_est} and assume Conditions \ref{cond:identifiability} and \ref{cond:causality} hold.
%Moreover suppose that
%$$
%\bigcap_{\ell = 0}^{\LN - 1}\widehat{\Gamma}_{\ell, I} \sim \operatorname{RE} \left( \alpha_{\ell}, \tau_{\ell} \right) \quad \text{a.s.}, \quad \text{with} \quad \qproxi_{\ell}\tau_{\ell}\le \alpha_{\ell} /32
%$$
%and the deviation condition is satisfied almost surely, that is,
%$$
%\bigcap_{\ell = 0}^{\LN-1}\left \{\left \| \frac{X'_{\ell, I}(\mathbf{Y}_{\ell, I}- X_{\ell, I}\phiproxi_{\ell})}{\sqrt{N(2\ell+1)}} \right\|_{\infty} \le  \FN\sqrt{\log(p\LN)} \right \} \quad \text{a.s..}
%$$ 
Consider the estimated autoregressive parameter vectors $\{\phihat_{\ell,I}\}_{\ell=0}^{\LN-1}$ solution of \eqref{eq:phi_est} and an interval $I$ that contains no true change point. Let $\{\phiproxi_{\ell}\}_{\ell=0}^{\LN-1}$ be the true parameter vectors in $I$ and $\{\qproxi_\ell\}_{\ell=0}^{\LN-1}$ are the associated sparsity indeces.

The prediction error over the complete set of coefficients $\{\phihat_{\ell,I}\}_{\ell=0}^{\LN-1}$ can be bounded by the quantity

\begin{align}
   &\left|\sum_{\ell = 0}^{\LN-1}\mathcal{L}_{\ell}(I, \phihat_{\ell,I}) -  \sum_{\ell = 0}^{\LN-1}\mathcal{L}_{\ell}(I, \phiproxi_{\ell})\right| \le  \max \{48, 32p\} \max\{C_\Phi, 1\}
\sum_{\ell=0}^{\LN-1} \max \{\qproxi_\ell,1\} \frac{\lambdaN^2}{\alpha_\ell}.\label{eq:prederr_phihat}
\end{align}
Moreover, for any given set of coefficients $\{\bm{\beta}_\ell\}_{\ell=0}^{\LN-1}$, where $\bm{\beta}_\ell \in \mathbb{R}^p,$ we can obtain the following upper bound 

\begin{align}
   &\sum_{\ell = 0}^{\LN-1}\mathcal{L}_{\ell}(I, \phiproxi_{\ell}) - \sum_{\ell = 0}^{\LN-1}\mathcal{L}_{\ell}(I, \bm{\beta}_\ell) \le  \max \{48, 32p\} \max\{C_\Phi, 1\}
\sum_{\ell=0}^{\LN-1} \max \{\qproxi_\ell,1\} \frac{\lambdaN^2}{\alpha_\ell}.\label{eq:prederr_beta}
\end{align}

\end{theorem}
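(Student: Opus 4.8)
The plan is to reduce both inequalities to a deterministic oracle-type argument, carried out one multipole at a time and then summed over $\ell = 0, \dots, \LN-1$, working on the high-probability events established in Appendix~\ref{sec:est_singleinterval}. Since $I = [s,e]$ contains no true change point, every $t \in \{s+p,\dots,e\}$ obeys a single SPHAR($p$) recursion, so $a_{\ell,m}(t) = \bolda'_{\ell,m}(t-1)\phiproxi_\ell + a_{\ell,m;Z}(t)$ for all such $t$ and all $m$. Setting $\widehat{\Gamma}_{\ell,I} = \sum_{t=s+p}^{e}\sum_{m=-\ell}^{\ell}\bolda_{\ell,m}(t-1)\bolda'_{\ell,m}(t-1)$ and $\bm{g}_{\ell,I} = \sum_{t=s+p}^{e}\sum_{m=-\ell}^{\ell}a_{\ell,m;Z}(t)\,\bolda_{\ell,m}(t-1)$, expanding the squares gives, for any $\bm{\phi}\in\mathbb{R}^p$, the identity $\mathcal{L}_\ell(I,\bm{\phi}) - \mathcal{L}_\ell(I,\phiproxi_\ell) = (\bm{\phi}-\phiproxi_\ell)'\widehat{\Gamma}_{\ell,I}(\bm{\phi}-\phiproxi_\ell) - 2\,\bm{g}_{\ell,I}'(\bm{\phi}-\phiproxi_\ell)$. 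Everything then hinges on two ingredients supplied by the single-interval analysis: (i) a deviation bound on the score, $\|\bm{g}_{\ell,I}\|_\infty \le \tfrac14\lambdaN\sqrt{N_I(2\ell+1)}$, guaranteed by the calibration of $\lambdaN$ and $\FN$ in Theorem~\ref{th:main_res}; and (ii) a lower eigenvalue (restricted, or uniform) bound $\bm{v}'\widehat{\Gamma}_{\ell,I}\bm{v} \ge N_I(2\ell+1)\,\alpha_\ell\,\|\bm{v}\|_2^2$, which descends from the spectral lower bound $\lambda_{\min}(\Gammastar)\ge \minCZ/\mu_{\max;\ell}$ together with concentration of $\widehat{\Gamma}_{\ell,I}$ around $N_I(2\ell+1)\,\Gammastar$.

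For \eqref{eq:prederr_phihat}, I would start from the LASSO basic inequality: optimality of $\phihat_{\ell,I}$ in \eqref{eq:phi_est} gives $\mathcal{L}_\ell(I,\phihat_{\ell,I}) - \mathcal{L}_\ell(I,\phiproxi_\ell) \le \lambdaN\sqrt{N_I(2\ell+1)}\big(\|\phiproxi_\ell\|_1 - \|\phihat_{\ell,I}\|_1\big)$. Writing $\bm{\delta}_\ell = \phihat_{\ell,I} - \phiproxi_\ell$, substituting the identity from the first paragraph, bounding $|\bm{g}_{\ell,I}'\bm{\delta}_\ell| \le \|\bm{g}_{\ell,I}\|_\infty\|\bm{\delta}_\ell\|_1$ via (i), and splitting $\bm{\delta}_\ell$ on the support $S_\ell$ of $\phiproxi_\ell$ (so $|S_\ell| = \qproxi_\ell$), the usual cancellations yield $\bm{\delta}_\ell'\widehat{\Gamma}_{\ell,I}\bm{\delta}_\ell \le \lambdaN\sqrt{N_I(2\ell+1)}\big(\tfrac32\|\bm{\delta}_{\ell,S_\ell}\|_1 - \tfrac12\|\bm{\delta}_{\ell,S_\ell^c}\|_1\big)$. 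Nonnegativity of the left side puts $\bm{\delta}_\ell$ in the cone $\|\bm{\delta}_{\ell,S_\ell^c}\|_1 \le 3\|\bm{\delta}_{\ell,S_\ell}\|_1$; using $\|\bm{\delta}_{\ell,S_\ell}\|_1 \le \sqrt{\qproxi_\ell}\,\|\bm{\delta}_\ell\|_2$ and (ii) over this cone gives $\|\bm{\delta}_\ell\|_2 \lesssim \lambdaN\sqrt{\qproxi_\ell}/(\alpha_\ell\sqrt{N_I(2\ell+1)})$, hence $\bm{\delta}_\ell'\widehat{\Gamma}_{\ell,I}\bm{\delta}_\ell \lesssim \lambdaN^2\qproxi_\ell/\alpha_\ell$ and $|\bm{g}_{\ell,I}'\bm{\delta}_\ell| \lesssim \lambdaN^2\qproxi_\ell/\alpha_\ell$; combining through the identity bounds $|\mathcal{L}_\ell(I,\phihat_{\ell,I}) - \mathcal{L}_\ell(I,\phiproxi_\ell)|$. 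When $\qproxi_\ell = 0$ the cone inequality forces $\bm{\delta}_\ell = 0$, so the term vanishes and is trivially dominated by $\lambdaN^2/\alpha_\ell$, which is the role of the $\max\{\qproxi_\ell,1\}$. Summing over $\ell$ and absorbing all numerical constants (and the a priori bound $C_{\Phi}$ used in the eigenvalue/score step) into the factor $\max\{48,32p\}\max\{C_{\Phi},1\}$ yields \eqref{eq:prederr_phihat}.

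For \eqref{eq:prederr_beta} no optimality is available for $\bm{\beta}_\ell$, so instead I would maximize the identity directly: with $\bm{u}_\ell = \bm{\beta}_\ell - \phiproxi_\ell$, $\mathcal{L}_\ell(I,\phiproxi_\ell) - \mathcal{L}_\ell(I,\bm{\beta}_\ell) = 2\,\bm{g}_{\ell,I}'\bm{u}_\ell - \bm{u}_\ell'\widehat{\Gamma}_{\ell,I}\bm{u}_\ell$, which is a concave quadratic in $\bm{u}_\ell$ and is therefore at most its unconstrained maximum $\bm{g}_{\ell,I}'\widehat{\Gamma}_{\ell,I}^{-1}\bm{g}_{\ell,I} \le \|\bm{g}_{\ell,I}\|_2^2/\lambda_{\min}(\widehat{\Gamma}_{\ell,I})$, using the uniform positive-definiteness from (ii). Bounding $\|\bm{g}_{\ell,I}\|_2^2 \le p\,\|\bm{g}_{\ell,I}\|_\infty^2$ with (i) and $\lambda_{\min}(\widehat{\Gamma}_{\ell,I}) \ge N_I(2\ell+1)\,\alpha_\ell$ gives the per-multipole bound $p\,\lambdaN^2/(16\alpha_\ell) \le \max\{48,32p\}\max\{C_{\Phi},1\}\max\{\qproxi_\ell,1\}\lambdaN^2/\alpha_\ell$; summing over $\ell$ gives \eqref{eq:prederr_beta}.

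The deterministic chain above is routine; the real work, and the main obstacle, lies in the two probabilistic ingredients (i)--(ii): controlling the sub-exponential score $\bm{g}_{\ell,I}$ and the empirical Gram matrix $\widehat{\Gamma}_{\ell,I}$ built from temporally dependent Gaussian harmonic coefficients, and doing so with constants calibrated so the final bound collapses exactly to the stated $\max\{48,32p\}\max\{C_{\Phi},1\}$ form with the precise $\alpha_\ell = \tfrac12\minCZ/\mu_{\max;\ell}$. This is precisely the content of the single-interval estimation results in Appendix~\ref{sec:est_singleinterval}, on which this theorem rests, and carrying $\FN$, the spectral quantities $\mu_{\min;\ell},\mu_{\max;\ell},\maxCZ,\minCZ$, and $C_{\Phi}$ cleanly through to the displayed inequalities is where most of the bookkeeping care is needed.
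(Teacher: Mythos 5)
Your deterministic reduction is the right skeleton for the case where the empirical Gram matrix $\widehat{\Gamma}_{\ell,I}$ is well-conditioned, and it matches the paper's argument on that event. The genuine gap is that you assume ingredient (ii)---a lower eigenvalue bound for $\widehat{\Gamma}_{\ell,I}$---holds for \emph{every} $\ell = 0,\dots,\LN-1$, but Proposition~\ref{prop::re_compatibility} only delivers the RE condition when $N_I(2\ell+1)$ is at least of order $\omega_\ell^2 \max\{\qproxi_\ell,1\}\log(p\LN)$. For a generic change-point-free interval $I$, there is no guarantee that this holds for all multipoles; in particular it can fail at small $\ell$ (where $2\ell+1$ is small) or when $I$ is short. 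The paper therefore splits the multipoles into $\mathcal{N}_L = \{\ell : N(2\ell+1) \ge 32\,\omega_\ell^2 \max\{\qproxi_\ell,1\}\log(p\LN)\}$ and its complement $\overline{\mathcal{N}}_L$. Your argument covers $\mathcal{N}_L$ only. On $\overline{\mathcal{N}}_L$ the oracle inequality is unavailable, and the paper instead keeps the raw LASSO optimality bound $\mathcal{L}_\ell(I,\phihat_{\ell,I}) - \mathcal{L}_\ell(I,\phiproxi_\ell) \le \lambdaN\sqrt{N(2\ell+1)}\,\|\phiproxi_\ell\|_1$, then exploits that $\sqrt{N(2\ell+1)}$ is \emph{small} (by membership in $\overline{\mathcal{N}}_L$), bounds $\|\phiproxi_\ell\|_1 \le \sqrt{\qproxi_\ell C_\Phi}$ via Condition~\ref{cond:boundedness}, and uses the calibration $\omega_\ell\sqrt{\log(p\LN)} \le \sqrt{32}\,p\,\lambdaN/\alpha_\ell$ to land on $32p\sqrt{C_\Phi}\,\qproxi_\ell\,\lambdaN^2/\alpha_\ell$. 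For the reverse direction on $\overline{\mathcal{N}}_L$, it bounds the whole difference by $\|\mathbf{E}_{\ell,I}\|_2^2$ and controls that via a deviation bound, again using that $N(2\ell+1) \lesssim \omega_\ell^2\log(p\LN)$.

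This case split is not cosmetic: it is exactly where the stated constant $\max\{48,32p\}\max\{C_\Phi,1\}$ comes from ($48$ from the oracle inequality on $\mathcal{N}_L$, $32p\sqrt{C_\Phi}$ from the short-sample case on $\overline{\mathcal{N}}_L$). Your write-up instead says $C_\Phi$ is ``absorbed'' into constants in the eigenvalue/score step, which misstates its role: $C_\Phi$ only enters through $\|\phiproxi_\ell\|_2^2 \le C_\Phi$ in the $\overline{\mathcal{N}}_L$ branch, not through the RE step. The same gap affects your proof of \eqref{eq:prederr_beta}: maximizing the concave quadratic in $\bm{u}_\ell$ to get $\bm{g}_{\ell,I}'\widehat{\Gamma}_{\ell,I}^{-1}\bm{g}_{\ell,I} \le p\|\bm{g}_{\ell,I}\|_\infty^2 / \lambda_{\min}(\widehat{\Gamma}_{\ell,I})$ requires $\lambda_{\min}(\widehat{\Gamma}_{\ell,I}) \gtrsim N(2\ell+1)\alpha_\ell$, i.e.\ a \emph{uniform} eigenvalue bound, which is strictly stronger than the RE($\alpha_\ell,\tau_\ell$) statement (RE permits $\lambda_{\min}$ to vanish as long as $\alpha_\ell\|\vartheta\|_2^2 - \tau_\ell\|\vartheta\|_1^2$ stays below the quadratic form). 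That inference is not justified for all $\ell$; you need either the cone restriction (which you don't have for an arbitrary $\bm{\beta}_\ell$) or a separate small-$N(2\ell+1)$ argument as above. In short: add the $\mathcal{N}_L/\overline{\mathcal{N}}_L$ dichotomy, and on $\overline{\mathcal{N}}_L$ replace the RE/eigenvalue step with the $C_\Phi$-based direct bound and the $\|\mathbf{E}_{\ell,I}\|_2^2$ concentration bound.
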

The bounds in \eqref{eq:prederr_phihat} and \eqref{eq:prederr_beta} will be crucial in the proof of the main result. The Appendix \ref{sec:est_singleinterval} provides the additional lemmas necessary for demonstrating Theorem \ref{th::error_diff}.

\subsection{On the Optimal Partition}\label{sec:res_part}
The previous section contains the theoretical results for the estimation of the parameters in a fixed interval $I=[s,e]$. Now, we proceed by considering the entire partition. 

%For this reason, all the previously defined quantities will be indexed with the interval $I_j$ of interest. For example, the quantity $\phihat_{\ell, I}$ referred to the interval $I_j$ will be written as $\phihat_{\ell;I_j} $, the same will hold for all the other quantities defined throughout the paper. In addition, for any segment $I_j = [s,e]$, we define $|I_j|= N_{I_j} = e-s -p+1$.\\                                 

\begin{proposition}\label{pr:4cases} Under the same condition in Theorem \ref{th:main_res} and letting $\hat{\mathcal{P}}$ be the solution to \eqref{eq:min_problem}, with probability at least $1 - c_1 e^{-c_2 \log(pL)}$, for some absolute constants $c_1, c_2>0$, the following cases hold.\\

\noindent\textbf{Case 1.} Consider any $I = [s, e] \in \hat{\mathcal{P}}$ that contains one and only one true change point $\eta \in \{\eta_k, k=0,\dots,K\}$. Denote $I_1 = [s, \eta)$, $I_2 = [\eta, e]$. If it holds that 
\begin{equation}\label{eq:cond_lemma12}
    \sum_{\ell = 0}^{\LN-1}\mathcal{L}_{\ell}(I, \phihat_{\ell,I}) \le \sum_{\ell = 0}^{\LN-1}\mathcal{L}_{\ell}(I_1, \phihat_{\ell,I_1})  + \sum_{\ell = 0}^{\LN-1}\mathcal{L}_{\ell}(I_2, \phihat_{\ell,I_2})  + \gamma,
\end{equation}
it holds a.s. that
%$$\min \{|I_1|,|I_2|\} \le c_4\left( \frac{\gamma + 48 \lambdaN^2 \sum_{\ell=0}^{\LN-1} \frac{\sumq}{\alpha_{\ell}}}{\kappa_\LN}\right),$$
$$\min \{|I_1|,|I_2|\} \le C_\epsilon \left( \frac{\gamma+3\brutti}{\kappa_\LN}\right).$$
where $C_\epsilon>0$ is an absolute constant.\\

\noindent\textbf{Case 2.} Consider any $I = [s, e] \in \hat{\mathcal{P}}$ that contains exactly two true change points ${i_1},{i_2} \in \{\eta_k, k=0,\dots,K\}$, $i_1< i_2$. Denote $I_1 = [s, i_1)$, $I_2 = [{i_1}, i_2)$, $I_3 = [{i_2}, e]$. %Assume Conditions \ref{cond:identifiability} and \ref{cond:causality} hold, and the deviation condition and the compatibility condition (Propositions \ref{prop::dev_condition} and \ref{prop::re_compatibility}) are satisfied almost surely. 
If it holds that 
\begin{equation}\label{eq:cond_lemma13}
    \sum_{\ell = 0}^{\LN-1}\mathcal{L}_{\ell}(I, \phihat_{\ell,I}) \le \sum_{\ell = 0}^{\LN-1}\mathcal{L}_{\ell}(I_1, \phihat_{\ell,I_1})  + \sum_{\ell = 0}^{\LN-1}\mathcal{L}_{\ell}(I_2, \phihat_{\ell,I_2}) + \sum_{\ell = 0}^{\LN-1}\mathcal{L}_{\ell}(I_3, \phihat_{\ell,I_3}) + 2\gamma,
\end{equation}
then
$$
\max \{|I_1|,|I_3|\} \le C_\epsilon \left( \frac{2\gamma+ 5\brutti}{\kappa_\LN}\right)
$$
where $C_\epsilon>0$ is an absolute constant.\\

\noindent\textbf{Case 3.} Consider any $I = [s, e] \in \hat{\mathcal{P}}$ that contains no true change point. Then, it holds that
\begin{equation}\label{eq:th_lemma14}
    \sum_{\ell = 0}^{\LN-1}\mathcal{L}_{\ell}(I, \phihat_{\ell,I}) < \min_{s+p+1\le b \le e-p}\left \{\sum_{\ell = 0}^{\LN-1}\mathcal{L}_{\ell}([s,b), \phihat_{\ell,[s,b)})  + \sum_{\ell = 0}^{\LN-1}\mathcal{L}_{\ell}([b,e], \phihat_{\ell,[b,e]})\right\} + \gamma.
\end{equation}\\
    
\noindent\textbf{Case 4.} Consider any $I = [s,e]$ that contains exactly $J$ true change points $i_1,\dots,i_J \in \{\eta_k, k=0,\dots,K\}$, $J\ge 3$. It holds that
\begin{equation*}
    \sum_{\ell = 0}^{\LN-1}\mathcal{L}_{\ell}(I, \phihat_{\ell,I}) > \sum_{j=0}^{J}\sum_{\ell = 0}^{\LN-1}\mathcal{L}_{\ell}(I_j, \phihat_{\ell,I_j}) + J\gamma,
\end{equation*}
where $I_0 = [s, i_1)$, $I_j = [i_{j}, i_{j+1})$ for $2\le j \le J-1$, and $I_{J} = [i_{J}, e]$.\\
\end{proposition}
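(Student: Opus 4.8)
The plan is to fix a single high-probability event $\mathcal{A}$ and derive all four cases from it by purely deterministic arguments, the one new ingredient relative to the finite-dimensional theory being the control of the sum over multipoles. I would build $\mathcal{A}$ so that the following hold simultaneously for every subinterval $I'\subseteq\{1,\dots,n\}$ and every $\ell\le L-1$: (i) when $I'$ contains no true change point, the bounds \eqref{eq:prederr_phihat}--\eqref{eq:prederr_beta} of Theorem~\ref{th::error_diff} hold with additive constant at most $\brutti$; (ii) a restricted-eigenvalue bound $\sum_{t\in I'}\sum_{m=-\ell}^{\ell}(\bolda_{\ell,m}'(t-1)v)^2\ge\tfrac12\,|I'|(2\ell+1)\,2\pi\mathfrak{m}_\ell\|v\|_2^2$ for all $v\in\mathbb{R}^p$, once $|I'|$ is long enough; (iii) the noise--design cross term obeys $\|\sum_{t\in I'}\sum_m a_{\ell,m;Z}(t)\bolda_{\ell,m}(t-1)\|_\infty\le\tfrac14\lambda_\ell\sqrt{|I'|(2\ell+1)}$; (iv) every innovation slice $\sum_{\ell=0}^{L-1}\sum_m\sum_{t=a}^{a+p-1}|a_{\ell,m;Z}(t)|^2$ is of order $\brutti$. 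Each of (i)--(iv) is a tail bound for a quadratic form in Gaussians (Condition~\ref{cond:identifiability}); a union bound over the $O(n^2)$ intervals and the $L$ multipoles, the choice of $\lambda_\ell$ in Theorem~\ref{th:main_res}, and the square-summability of Condition~\ref{cond:ratio_cl} (which keeps the $\ell$-sums finite) give $P(\mathcal{A})\ge1-c_1 e^{-c_2\log(pL)}$. These are exactly the single-interval estimates collected in Appendix~\ref{sec:est_singleinterval}.

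\textbf{Step 2: the workhorse bound.} On $\mathcal{A}$, for any interval $J'$ lying inside one stationary segment with true parameters $\{\psi_\ell\}$ and any $\{\beta_\ell\}$, one expands
\[
\sum_{\ell=0}^{L-1}\mathcal{L}_\ell(J',\beta_\ell)-\sum_{\ell=0}^{L-1}\mathcal{L}_\ell(J',\psi_\ell)
= -2\sum_{\ell=0}^{L-1}\sum_{t\in J'}\sum_{m}a_{\ell,m;Z}(t)\bolda_{\ell,m}'(t-1)(\beta_\ell-\psi_\ell)
+\sum_{\ell=0}^{L-1}(\beta_\ell-\psi_\ell)'\widehat{\Gamma}_{\ell,J'}(\beta_\ell-\psi_\ell),
\]
where $\widehat{\Gamma}_{\ell,J'}=\sum_{t\in J'}\sum_m\bolda_{\ell,m}(t-1)\bolda_{\ell,m}'(t-1)$. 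Bounding the first term via (iii) and $\|\cdot\|_1\le\sqrt p\,\|\cdot\|_2$, the second from below via (ii) with $2\pi\mathfrak{m}_\ell\ge2\alpha_\ell$, and completing the square, one obtains a matching upper bound together with the lower bound
\[
\sum_{\ell=0}^{L-1}\mathcal{L}_\ell(J',\beta_\ell)-\sum_{\ell=0}^{L-1}\mathcal{L}_\ell(J',\psi_\ell)\ \ge\ \tfrac12\,|J'|\sum_{\ell=0}^{L-1}\alpha_\ell(2\ell+1)\|\beta_\ell-\psi_\ell\|_2^2\ -\ c\,\brutti .
\]
In words: a single parameter set used over a homogeneous piece costs, up to an $O(\brutti)$ term, a length-times-weighted-distance amount of excess loss, and after absorbing the stability constants this weighted distance dominates the corresponding contribution to $\kappa_\LN$. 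This inequality feeds all four cases.

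\textbf{Step 3: the four cases.} \emph{Case 3.} By \eqref{eq:prederr_phihat} the left side of \eqref{eq:th_lemma14} is $\le\sum_\ell\mathcal{L}_\ell(I,\phiproxi_\ell)+\brutti$, each right-hand term is $\ge\sum_\ell\mathcal{L}_\ell(\cdot,\phiproxi_\ell)-\brutti$, and $\sum_\ell\mathcal{L}_\ell([s,b),\phiproxi_\ell)+\sum_\ell\mathcal{L}_\ell([b,e],\phiproxi_\ell)$ equals $\sum_\ell\mathcal{L}_\ell(I,\phiproxi_\ell)$ minus one innovation slice, which by (iv) is $O(\brutti)$; since $\gamma\ge(10K+4)\brutti$, the strict inequality \eqref{eq:th_lemma14} follows. \emph{Case 1.} One may assume both $|I_1|,|I_2|$ exceed the claimed bound (otherwise there is nothing to prove), so Step~2 applies on each piece. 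Combining $\sum_\ell\mathcal{L}_\ell(I,\phihat_{\ell,I})\ge\sum_\ell[\mathcal{L}_\ell(I_1,\phihat_{\ell,I})+\mathcal{L}_\ell(I_2,\phihat_{\ell,I})]$ (splitting only drops terms), Step~2 on $I_1$ with $\psi=\bm{\phi}_\ell^{(k)}$ and on $I_2$ with $\psi=\bm{\phi}_\ell^{(k+1)}$, the upper bounds $\sum_\ell\mathcal{L}_\ell(I_j,\phihat_{\ell,I_j})\le\sum_\ell\mathcal{L}_\ell(I_j,\phiproxi_\ell)+\brutti$, the elementary $\|u-a\|_2^2+\|u-b\|_2^2\ge\tfrac12\|a-b\|_2^2$, and the hypothesis \eqref{eq:cond_lemma12}, one reaches $\gamma\gtrsim\min\{|I_1|,|I_2|\}\,\kappa_\LN-c\,\brutti$, hence $\min\{|I_1|,|I_2|\}\le C_\epsilon(\gamma+3\brutti)/\kappa_\LN$. \emph{Case 2.} Apply a dichotomy to $\phihat_{\ell,I}$ relative to the middle regime $\bm{\phi}_\ell^{(k_2)}$, whose piece $I_2$ has length $\ge\Delta$: if $\phihat_{\ell,I}$ is ``far'' from it, Step~2 on $I_2$ forces an excess loss $\gtrsim\Delta\kappa_\LN$, which contradicts \eqref{eq:cond_lemma13} in view of Condition~\ref{cond:snr}; if it is ``close'', the triangle inequality makes it far from $\bm{\phi}_\ell^{(k_1)}$ on $I_1$ and from $\bm{\phi}_\ell^{(k_3)}$ on $I_3$, and the Case~1 computation applied to $I_1$ and to $I_3$ separately gives $\max\{|I_1|,|I_3|\}\le C_\epsilon(2\gamma+5\brutti)/\kappa_\LN$. \emph{Case 4.} With $J\ge3$ true change points inside $I$, $\phihat_{\ell,I}$ can be ``close'' to at most one of the $J+1$ true regimes, so on at least $J-1\ge2$ of the corresponding pieces (each of length $\ge\Delta$) Step~2 contributes an excess loss $\gtrsim(J-1)\Delta\kappa_\LN$, whereas the refining partition costs only $\sum_{j=0}^{J}\sum_\ell\mathcal{L}_\ell(I_j,\phihat_{\ell,I_j})+J\gamma\le\sum_{j=0}^{J}\sum_\ell\mathcal{L}_\ell(I_j,\phiproxi_\ell)+(J+1)\brutti+J\gamma$; since $J\le K$, Condition~\ref{cond:snr} makes the former strictly larger, which is the claimed inequality.

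\textbf{Main obstacle.} The technical heart is Steps~1--2: establishing the restricted-eigenvalue bound (ii) and the cross-term bound (iii) uniformly over all $O(n^2)$ subintervals and all $L$ multipoles at once, with constants that stay summable in $\ell$. This is exactly where Gaussianity, the stability measures $\mathfrak{m}_\ell,\mathcal{M}_\ell$, and Condition~\ref{cond:ratio_cl} are indispensable, and it is what forces the calibration of $\lambda_\ell$, $\gamma$, $\brutti$ and $\kappa_\LN$ in Theorem~\ref{th:main_res}. A secondary but pervasive nuisance is tracking the $p$ boundary indices lost at every split point and every true change point (they carry no clean autoregressive residual); on $\mathcal{A}$ these contribute only lower-order $O(\brutti)$ terms, but they must be accounted for in every case.
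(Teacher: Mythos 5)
Your proposal follows the same overall architecture as the paper's proof: build a single high-probability event on which restricted-eigenvalue and deviation conditions hold uniformly, use the prediction error bounds of Theorem~\ref{th::error_diff} to swap between $\phihat_{\ell,J}$ and the true parameters at a cost of $O(\brutti)$, and then use a completing-the-square inequality (your Step~2) to translate excess empirical loss into $|J'|\cdot\kappa_L$, after which Condition~\ref{cond:snr} yields the contradictions. This is precisely the backbone of the paper's proofs of Cases 1--4, and Cases 1 and 3 match the paper essentially line by line (the paper's analogue of your elementary inequality $\|u-a\|_2^2+\|u-b\|_2^2\ge\tfrac12\|a-b\|_2^2$ is the weighted estimate $|I_1|\,\|u-a\|^2+|I_2|\,\|u-b\|^2\ge\frac{|I_1||I_2|}{|I|}\|a-b\|^2\ge\frac{\min\{|I_1|,|I_2|\}}{2}\|a-b\|^2$, same content). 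Where you depart is in Cases~2 and~4. For Case~2 you run an explicit dichotomy on whether $\phihat_{\ell,I}$ is close to the middle regime; the paper instead reruns the Case~1 calculation on the pair $(I_1,I_2)$ to get a bound on $\min\{|I_1|,|I_2|\}$ and then uses $|I_2|\ge\Delta$ together with Condition~\ref{cond:snr} to conclude $|I_1|$ is small, and symmetrically for $|I_3|$. Your dichotomy handles $I_1$ and $I_3$ simultaneously, which is arguably neater, but it also introduces a slight over-statement in Case~4: the claim that $\phihat_{\ell,I}$ can be close to at most one of the $J+1$ regimes, hence that at least $J-1$ full-length pieces contribute excess loss, is false in general (for an epidemic-type pattern $\phi^{(0)}=\phi^{(2)}$, $\phihat$ can be close to two non-adjacent regimes). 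Only the much weaker, and always true, fact that at least one interior piece of length $\ge\Delta$ is ``far'' is needed, and it already produces the contradiction with $\Delta\kappa_L\ge C_\epsilon(K\gamma+(2K+1)\brutti)$ since $J\le K$. The paper avoids the counting entirely by simply invoking the Case~2 argument on $(I_1,I_2)$ and contradicting $|I_1|,|I_2|\ge\Delta$. So: same approach, with Cases~2 and~4 organized a little differently, and one non-essential but incorrect intermediate claim in your Case~4 that should be repaired to ``at least one interior full segment is far.''
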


Loosely speaking, the four cases in Proposition \ref{pr:4cases} guarantee that every estimated interval of the partition, obtained as a solution of the minimization problem \eqref{eq:min_problem}, contains no more than two true change points and with good localization results. \\
Indeed, \textbf{Case 1} ensures that for each interval $\hat{I}= [s,e] \in \hat{\mathcal{P}}$ containing one and only one true change point $\eta$, the smallest interval is controlled, i.e., the distance between the true and the estimated change point is controlled by the quantity
$$\min \{e-\eta,\eta-s\} \le C_\epsilon \left( \frac{\gamma+ 3\brutti}{\kappa_\LN}\right).$$
\textbf{Case 2} ensures that for each interval $\hat{I}= [s,e] \in \hat{\mathcal{P}}$ containing exactly two true change point $\eta_1 <\eta_2$, the smallest interval is controlled, that is
$$\min \{e-\eta_2,\eta_1-s\} \le C_\epsilon \left( \frac{2\gamma+ 5 \brutti}{\kappa_\LN}\right).$$
\textbf{Case 3} ensures that all consecutive intervals $\hat{I}, \hat{J} \in \hat{\mathcal{P}}$, the interval $\hat{I} \cup \hat{J}$ contains at least one true change points and \textbf{Case 4} guarantees that no interval $\hat{I} \in \hat{\mathcal{P}}$ contains strictly more than two true change points.

\begin{proposition}\label{pr:k=k} Under the same condition in Theorem \ref{th:main_res} and letting $\hat{\mathcal{P}}$ be the solution to \eqref{eq:min_problem}, satisfying $K \le |\hat{\mathcal{P}}|\le3K$, it holds that $|\hat{\mathcal{P}}|=K+1$ with probability at least $1 - c_1 e^{-c_2 \log(pL)}$ for some absolute constants $c_1, c_2>0$.
\end{proposition}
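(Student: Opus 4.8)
The plan is to sandwich $M:=|\hat{\mathcal P}|$: Proposition~\ref{pr:4cases} already gives $K\le M\le 3K$ on the stated high-probability event, so on that same event it suffices to prove $M\ge K+1$ and $M\le K+1$, which forces $M=K+1$.

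For the lower bound I would argue combinatorially. List the intervals of $\hat{\mathcal P}$ in order as $\hat I_1,\dots,\hat I_M$ with left endpoints $1=i_0<i_1<\dots<i_{M-1}$, and put $i_M=n+1$. Fix a true change point $\eta_k$ and let $\hat I_j$ be the interval whose range contains it; if $\eta_k$ coincides with some $i_{j'}$, $1\le j'\le M-1$, there is nothing to do for this $k$, so assume $\eta_k$ is strictly interior to $\hat I_j$. First I would note that $\hat I_j$ cannot contain three or more true change points, since by Case~4 splitting it at those points would strictly decrease the objective in \eqref{eq:min_problem}, contradicting the optimality of $\hat{\mathcal P}$. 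Thus $\hat I_j$ contains exactly one or two true change points, and the hypothesis \eqref{eq:cond_lemma12} (respectively \eqref{eq:cond_lemma13}) of Case~1 (respectively Case~2) holds — again because otherwise the refinement splitting $\hat I_j$ at those points would beat $\hat{\mathcal P}$. Applying Case~1 or Case~2, every true change point inside $\hat I_j$, in particular $\eta_k$, lies within $\delta:=C_\epsilon(2\gamma+5\brutti)/\kappa_\LN$ of $i_{j-1}$ or of $i_j$. Since $\eta_k$ is at distance at least $\Delta-1$ from $i_0=1$ and from $i_M=n+1$, while Condition~\ref{cond:snr} gives $\Delta\kappa_\LN\ge C_\epsilon(K\gamma+(2K+1)\brutti)\ge\delta\kappa_\LN$ (enlarging the absolute constant so that $\Delta>2\delta$), this nearby endpoint is one of the interior break points $i_1,\dots,i_{M-1}$. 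The resulting map $\eta_k\mapsto$ (a nearby interior break point) is injective, because two change points sharing a break point would be at distance $\le2\delta<\Delta$, contradicting \eqref{eq:delta}. Hence $K\le M-1$.

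For the upper bound I would compare $\hat{\mathcal P}$ against the true partition $\mathcal P^{*}=\{[\eta_k,\eta_{k+1})\cap\mathbb Z:k=0,\dots,K\}$, with $|\mathcal P^{*}|=K+1$. For $I=[s,e]$ set $\|Z_I\|^2:=\sum_{t=s+p}^{e}\sum_{\ell=0}^{\LN-1}\sum_{m=-\ell}^{\ell}a_{\ell,m;Z}(t)^2$ and $E:=\|Z_{[1,n]}\|^2$; on a change-point-free interval $a_{\ell,m}(t)-\bolda'_{\ell,m}(t-1)\phiproxi_\ell=a_{\ell,m;Z}(t)$ by \eqref{eq::ar(p)}, so \eqref{eq:prederr_phihat} of Theorem~\ref{th::error_diff} gives $\mathcal L(I)\le\|Z_I\|^2+\brutti$ for such $I$. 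Suppose $M\ge K+2$. Since $\hat{\mathcal P}$ minimises \eqref{eq:min_problem} and $M-K-1\ge1$,
$$
\sum_{\hat I\in\hat{\mathcal P}}\mathcal L(\hat I)+\gamma\ \le\ \sum_{I^{*}\in\mathcal P^{*}}\mathcal L(I^{*})\ \le\ E+(K+1)\brutti ,
$$
using also $\sum_{I^{*}}\|Z_{I^{*}}\|^2\le E$. The auxiliary fact I would establish is a uniform lower bound $\mathcal L(\hat I)\ge\|Z_{\hat I}\|^2-c\,\brutti$, with $c$ an absolute constant, valid for every $\hat I\in\hat{\mathcal P}$ whether or not it straddles change points: the single-parameter least-squares (hence LASSO) residual cannot fall below the innovation energy by more than the norm of the innovations projected onto the $p$-dimensional design space plus a cross term between the innovations and the irreducible piecewise-constant bias, and both are $O(\brutti)$ on the concentration event by the arguments behind Theorem~\ref{th::error_diff} and Appendix~\ref{sec:est_singleinterval}, whereas a jump inside $\hat I$ only increases $\mathcal L(\hat I)$. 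Summing this and using $\sum_{\hat I}\|Z_{\hat I}\|^2\ge E-(M-1)S$, where $S$ denotes the maximum over all length-$p$ windows $[a,a+p-1]\subseteq[1,n]$ of $\sum_{t=a}^{a+p-1}\sum_{\ell<\LN}\sum_{m}a_{\ell,m;Z}(t)^2$, then cancelling $E$ and using $M\le3K$, I would obtain
$$
\gamma\ \le\ (M-1)\,S+M\,c\,\brutti+(K+1)\,\brutti\ \le\ 3K\,S+(3Kc+K+1)\,\brutti .
$$
On the event of Proposition~\ref{pr:4cases}, the sub-exponential concentration of $\sum_{\ell<\LN}\sum_m a_{\ell,m;Z}(t)^2$ about $\sum_{\ell<\LN}(2\ell+1)\maxCZ<\infty$ (finite by Condition~\ref{cond:ratio_cl}) makes $S$ of strictly smaller order than $\brutti$, so the right-hand side is $<\brutti(10K+4)\le\gamma$ — a contradiction. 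Hence $M\le K+1$, and combined with the lower bound this gives $M=K+1=|\hat{\mathcal P}|$.

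The hard part will be the upper bound. The combinatorial step is essentially routine once one checks the localisation radius $\delta$ of Cases~1–2 is below the minimal spacing $\Delta$, which is precisely what Condition~\ref{cond:snr} buys. By contrast, converting the penalty-budget inequality into the numerical contradiction rests on two more delicate ingredients: the uniform lower bound $\mathcal L(\hat I)\ge\|Z_{\hat I}\|^2-O(\brutti)$ on intervals that contain change points, which needs the cross term between innovations and lagged harmonic coefficients to be controlled outside the change-point-free regime of Theorem~\ref{th::error_diff}; and the estimate $S=o(\brutti)$ for the warm-up innovation energies of the $M-1$ pieces of $\hat{\mathcal P}$, a concentration bound that must hold uniformly over all $O(n)$ candidate windows while staying within the $1-c_1e^{-c_2\log(p\LN)}$ probability budget.
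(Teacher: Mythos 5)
Your lower-bound argument ($M\ge K+1$) is sound and is actually more self-contained than what the paper supplies: the paper's proof of Proposition~\ref{pr:k=k} only establishes $\hat K\le K$ (equivalently $|\hat{\mathcal P}|\le K+1$) and leans on the hypothesis $K\le|\hat{\mathcal P}|$ inherited from Proposition~\ref{pr:4cases} for the other direction, which with $|\hat{\mathcal P}|=\hat K+1$ technically only gives $\hat K\ge K-1$. Your injection from true change points to interior break points, enabled by the Case~1/Case~2 localisation radius $\delta$ and the spacing $\Delta>2\delta$, closes that off-by-one cleanly. One small point: $C_\epsilon$ appears both in $\delta$ and in Condition~\ref{cond:snr}, so ``enlarging the constant'' must be done on the signal-to-noise side alone, but that is harmless.

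The upper bound, however, departs from the paper's route and has a genuine gap. The paper never needs to lower-bound $\mathcal L(\hat I)$ for an $\hat I$ that straddles change points: it refines both $\hat{\mathcal P}$ and the true partition into the common partition $\{W_k\}$, whose pieces are change-point-free, and on each $W_k$ it invokes Theorem~\ref{th::error_diff} twice — once via \eqref{eq:prederr_phihat} for $\phihat_{\ell,W_k}$ and once via \eqref{eq:prederr_beta} for the foreign coefficients $\phihat_{\ell,\hat I}$ — plus a bounded boundary term $\sum_\ell\|\mathbf E_{\ell,B_k}\|_2^2$ for each estimated break. Your argument instead relies on the claim $\mathcal L(\hat I)\ge\|Z_{\hat I}\|^2-c\,\brutti$ uniformly over all $\hat I\in\hat{\mathcal P}$, including those containing change points. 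Theorem~\ref{th::error_diff} does not give this (it explicitly assumes no change point in $I$), and your sketch leaves the hard part unverified: writing $\mathbf Y_{\ell,\hat I}=\mathbf B_{\ell,\hat I}+\mathbf E_{\ell,\hat I}$ and $\mathcal L_\ell(\hat I,\phihat)\ge\|P_\perp\mathbf Y_{\ell,\hat I}\|^2$, one must control $\|P_\perp\mathbf B_{\ell,\hat I}\|^2+2\langle P_\perp\mathbf B_{\ell,\hat I},\mathbf E_{\ell,\hat I}\rangle-\|P\mathbf E_{\ell,\hat I}\|^2$, and the cross term has conditional variance proportional to $\|P_\perp\mathbf B_{\ell,\hat I}\|^2$ (which grows with $|\hat I|$, not with $\brutti$); exploiting the favourable sign of the quadratic term requires a careful completion-of-squares that is complicated by the dependence of $\mathbf B$ on past innovations. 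This is a substantive lemma, not a routine consequence of Appendix~\ref{sec:est_singleinterval}, and it is exactly what the paper's $W$-refinement is designed to avoid.

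Two further, smaller issues: the claim that $S$ is ``of strictly smaller order than $\brutti$'' needs a union bound over the $\Theta(n)$ candidate length-$p$ windows, so $S$ carries a $\log n$ factor that is not a priori dominated by $\brutti\asymp\log(p\LN)$; and your final contradiction $\gamma\le 3KS+(3Kc+K+1)\brutti<(10K+4)\brutti$ requires the unspecified absolute constant $c$ to satisfy roughly $c<3$ (otherwise the threshold on $\gamma$ in Theorem~\ref{th:main_res} must be enlarged), so the constants have to be tracked, not left abstract.
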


\section{Numerical Results}\label{sec:num_res}

In this section, we assess the numerical performance of the proposed penalized estimator on both synthetic and real data. The change point detection capability is illustrated in a variety of controlled scenarios and on a dataset of global temperature anomalies. After presenting the numerical results for the synthetic settings, we provide insights regarding the role of the tuning parameters. \\ 

\textbf{Simulation settings.} Synthetic datasets were generated according to the change point SPHAR($p$) process described in Section~\ref{sec:def_model}. Performance is assessed using (i) the scaled Hausdorff distance and (ii) the mean estimated location of the change point(s). The scaled Hausdorff distance $\mathcal{D}$ is defined as
\begin{align*}
    \mathcal{D} \left( \left\{\hat{\eta}_k\right\}_{k=1}^{\hat{K}}, \left\{\eta_k\right\}_{k=1}^K\right) = \frac{d\left( \left\{\hat{\eta}_k\right\}_{k=1}^{\hat{K}}, \left\{\eta_k\right\}_{k=1}^K\right)}{n},
\end{align*}
where $d(\cdot,\cdot)$ denotes the Hausdorff distance between two compact sets in $\mathbb{R}$, given by
\begin{align*}
    d\left( A, B\right) = \max \left \{ \underset{a \in A}{\max}\, \underset{b \in B}{\min}\, |a-b|, \underset{b \in B}{\max}\, \underset{a \in A}{\min}\, |a-b|\right \}.
\end{align*}
Note that if $K,\hat{K} \ge 1$, then $\mathcal{D} \left( \left\{\hat{\eta}_k\right\}_{k=1}^{\hat{K}}, \left\{\eta_k\right\}_{k=1}^K\right) \le 1$. For convenience we set $\mathcal{D} \left( \emptyset, \left\{\eta_k\right\}_{k=1}^K\right)= 1$. \\

Given a set of observations for the timestamps $t \in \{1, \dots, n\}$, if $K$ change points $\eta_1, \dots, \eta_K$ belong to $\{1, \dots, n\}$, we can define the true location $ \rho_k= \eta_k /n \in [0,1]$ and its estimator $ \hat{\rho}_k= \hat{\eta}_k /n$ , for $k = 1, \dots, K$. The mean estimated location is measured by averaging the estimated location $ \hat{\rho}_k = \hat{\eta}_k / n$ over the simulations for each true change point. In the scenario with two change points $\eta_1 < \eta_2$ we follow the approach in \cite{Safikhani} where, for every simulation, each of the estimated change points is counted as a ``success'' for the first true change point $\eta_1$ if it is on the interval $[1, \eta_1 +0.5 (\eta_2-\eta_1))$, it is otherwise counted as a ``success'' for $\eta_2$ if it falls in the interval $[\eta_1 +0.5 (\eta_2-\eta_1),n]$.\\

We define the vector $\bm{\beta} = ( \beta_0, \dots, \beta_{\LN-1})'$, such that 
\begin{equation} \label{eq:coeff_sim}
    \beta_{\ell} = \begin{cases}
    0.9\times (\ell+1)^{-1/(8-d)} & \text{if } \ell < q \\
    0  & \text{if } \ell \ge q
\end{cases},
\end{equation}
where $q\le \LN-1$ is a parameter that defines the sparsity of $\bm{\beta}$, while the parameter $d<8$ determines the values of the different components, and it will define the separation between the parameter vectors of the different segments. \\

Note that, in all reported experiments, the tuning parameter $\lambda$ was held constant across all multipoles, i.e., $\lambda_\ell \equiv \lambda$ for every $\ell = 0, \ldots, L-1$. \\

\textit{Single change point scenario.} Samples of spherical random coefficients $a_{\ell,m}(t)$'s are generated with $n = 200$ and $\LN=10$, from a SPHAR(1) process, with a true change point at $\eta \in \{1, \dots, n\}$. For all timestamps $t= 1, \dots,  n$, the autoregressive coefficient $\phi_{\ell}(t)$ is defined as 
$$
\phi_\ell(t) = \begin{cases}
    {\phi}_{\ell}^{(0)} =- \beta_{\ell} & \text{if } t < \eta \\
     \phi_{\ell}^{(1)} = \beta_{\ell} & \text{if } t \ge \eta
\end{cases},
$$
for $\ell=0, \dots, \LN-1,$ where $\beta_{\ell}$ is defined as in \eqref{eq:coeff_sim}. In this setting, the parameter $q$ representing the number of non-zero components of the coefficient vectors is assumed to be the same for both segments. The parameter $d$ describes the distance between $\phi^{(0)}$ and $\phi^{(1)}$: increasing values of $d$ represent decreasing differences between the autoregressive parameters, in terms of Euclidean norm. The two autoregressive processes also differ in terms of noise variance. The power spectrum of the first segment is defined as $C^{(0)}_{0;Z}=1$ and $C^{(0)}_{\ell;Z} = \frac{1}{\ell (\ell+1)}$, for $\ell=0, \dots, \LN-1$, while for the second segment it is defined as $C^{(1)}_{0;Z}=0.5$ and $C^{(1)}_{\ell;Z} = \frac{0.5}{2\ell (\ell+1)},$ for $\ell=0, \dots, \LN-1$.\\

We consider different configurations of $q$ and $d$, for balanced (Scenario 1) and unbalanced (Scenario 2) sets of timestamps. 
Scenario 1 presents one single change point at $\eta = 100$, i.e., the true location is $\rho = \eta/n = 0.5$, meaning that we observe a balanced number of realizations from the two processes. Scenario 2 presents one single change point at $\eta = 50$, i.e., the true location is $\rho = \eta/n = 0.25$, meaning that we observe an unbalanced number of realizations from the two processes. For each setting, $M=100$ datasets are simulated. Results are reported in Table \ref{tab:singlecp}. \\ 
The method yields accurate localization and estimation of the change point across configurations. Detection is most challenging for high sparsity ($q=2$) and low signal separation. The setting with the best performance, both in the centered and non-centered scenarios, is the one with a small sparsity level and a large separation between the autoregressive parameters of the two segments, that is $q = 8$ and $d = 2$.\\

\begin{table}[ht!]
\centering
\begin{tabular}{c|cc|cc}
\hline\hline
\multirow{2}{*}{Setting} & \multicolumn{2}{c|}{Scenario 1: $\rho=0.50$} & \multicolumn{2}{c}{Scenario 2: $\rho=0.25$} \\
 & $\mathcal{D}$  & mean($\hat{\rho}$) & $\mathcal{D}$  & mean($\hat{\rho}$) \\
\hline
$q = 8,\ d = 4$ & 0.0018 (0.0024) & 0.498 (0.002) & 0.0020 (0.0024) & 0.248 (0.002) \\
$q = 8,\ d = 2$ & 0.0010 (0.0020) & 0.499 (0.002) & 0.0010 (0.0020) & 0.249 (0.002) \\
$q = 2,\ d = 4$ & 0.0026 (0.0029) & 0.498 (0.003) & 0.0026 (0.0034) & 0.249 (0.004) \\
$q = 2,\ d = 2$ & 0.0026 (0.0025) & 0.498 (0.003) & 0.0025 (0.0029) & 0.248 (0.003) \\
\hline
\end{tabular}
\caption{\textbf{Performance of the penalized estimator for single change point scenarios.} Scenario 1: balanced ($\eta=100$), Scenario 2: unbalanced ($\eta=50$). Standard errors are reported in parentheses.}
\label{tab:singlecp}
\end{table}

\textit{Multiple change points scenario.} We also consider an epidemic change setting, with $n=225$ and two true change points at $\eta_1=75$ and $\eta_2=150$. The model parameters revert after the second change point to the initial values. Specifically, for all timestamps $t= 1, \dots,  n$, the autoregressive coefficient $\phi_\ell(t)$ is defined as  
$$
\phi_{\ell}(t) = \begin{cases}
    \phi_{\ell}^{(0)} =- \beta_{\ell} & \text{if } t < \eta_1 \\
     \phi_{\ell}^{(1)} = \beta_{\ell} & \text{if } \eta_1 \le t < \eta_2\\
     \phi_{\ell}^{(2)} = \phi_{\ell}^{(0)} =- \beta_{\ell} & \text{if } t \ge \eta_2,
\end{cases}
$$
for $\ell=0, \dots, \LN-1$. Table \ref{tab:2cp} reports the results for different settings of $(q,d)$ over $M=100$ replications.

\begin{table}[ht!]
    \centering
    \begin{tabular}{cccc}
    \hline \hline
        Setting  & $\mathcal{D}$ & mean($\hat{\rho}$) \\
        \hline
        \multirow{2}{*}{$q = 8,\ d = 4$ }& 0.3351 & [0.332, 0.667] \\
         & \small{(0.0022)} & \small{(0.002, 0.003 )}\\
        \multirow{2}{*}{$q = 8,\ d = 2$} & 0.3343  & [0.333,  0.666] \\
         & \small{(0.0018)} &  \small{(0.002, 0.002) }\\     
        \multirow{2}{*}{$q = 2,\ d = 4$} & 0.3369 & [0.332,  0.669] \\
         & \small{(0.0039)} & \small{(0.004, 0.004)} \\
        \multirow{2}{*}{$q = 2,\ d = 2$} & 0.3361 & [0.332, 0.668] \\
         & \small{(0.003)} & \small{(0.003, 0.003)} \\
        \hline
    \end{tabular}
    \caption{\textbf{Performance of the penalized estimator for two change points scenarios.} Two change points at $\eta_1=75$, $\eta_2=150$. Standard errors in parentheses.}
    \label{tab:2cp}
\end{table}

Although the Hausdorff distance is larger than the one observed in the single change point scenarios due to the increased complexity, the estimator still successfully identifies both change point positions with low error. \\

\textbf{Tuning parameters role.} In Theorem \ref{th:main_res} we defined the optimal theoretical value of the parameters $\lambda$ and $\gamma$. Here we show how the parameters choice impacts the detection results. With this goal in mind, we select a simulation setting and test different parameter configurations. Specifically, we consider the setting $(q=8, d=2)$ with two true change points $\eta_1=75, \eta_2 = 150$, as in Scenario 3. Figure \ref{fig:sett_alpha_gamma} shows the distribution of the estimated change points for six different configurations corresponding to $\lambda \in \{0, 1\}$ and $\gamma \in \{100, 200, 300\}$.
\begin{figure}[ht!]
\centering
\subfloat{
\includegraphics[width=0.45\textwidth]{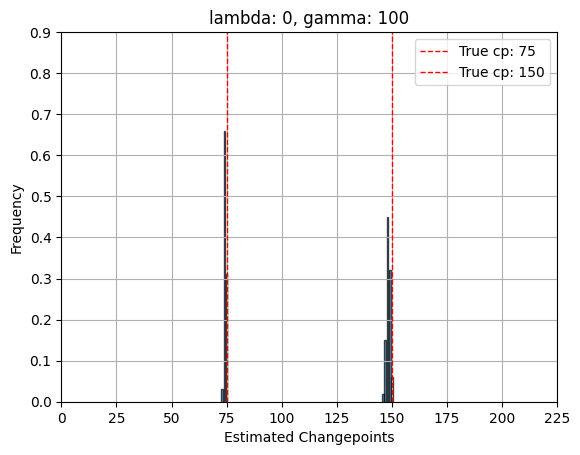}}
\qquad \subfloat{
\includegraphics[width=0.45\textwidth]{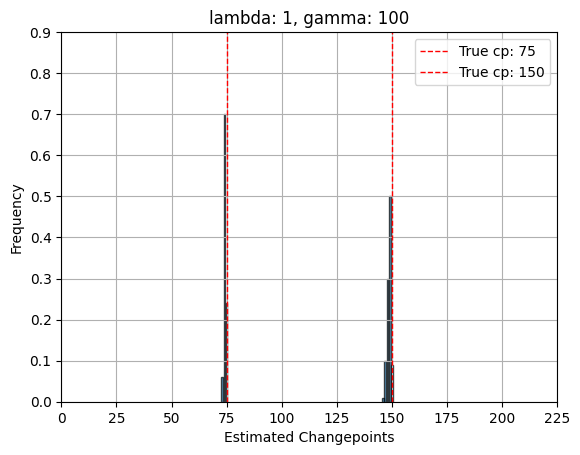}}
\\
\centering
\subfloat{
\includegraphics[width=0.45\textwidth]{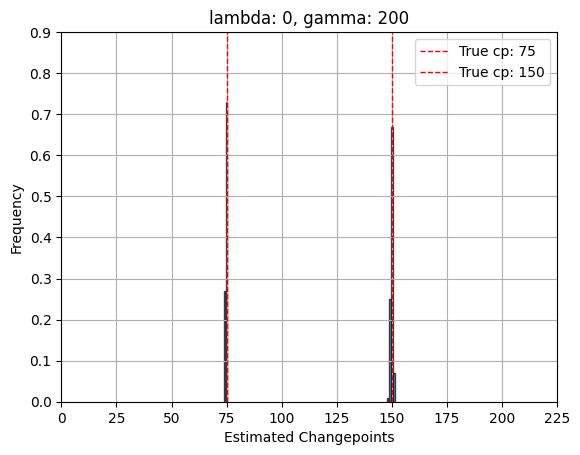}}
\qquad \subfloat{
\includegraphics[width=0.45\textwidth]{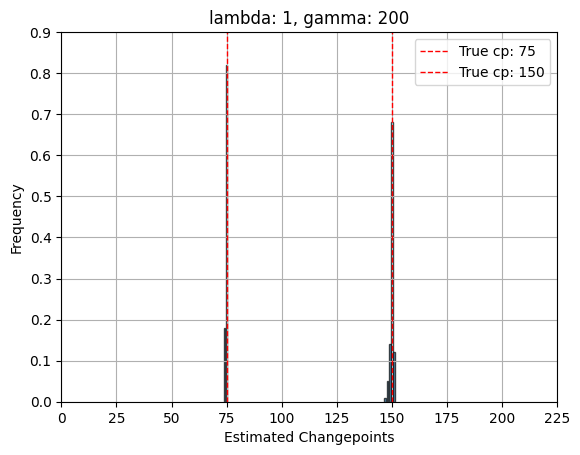}}
\\
\centering
\subfloat{
\includegraphics[width=0.45\textwidth]{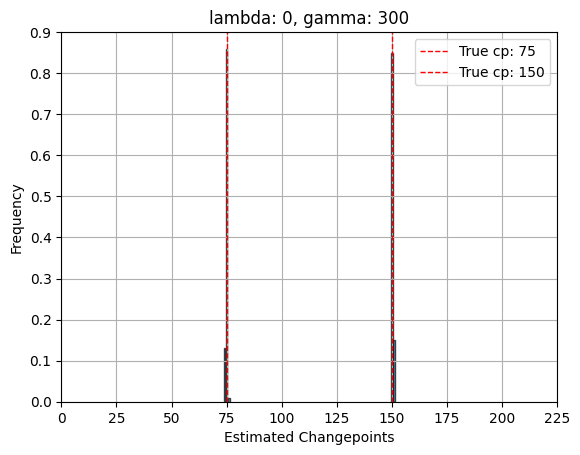}}
\qquad \subfloat{
\includegraphics[width=0.45\textwidth]{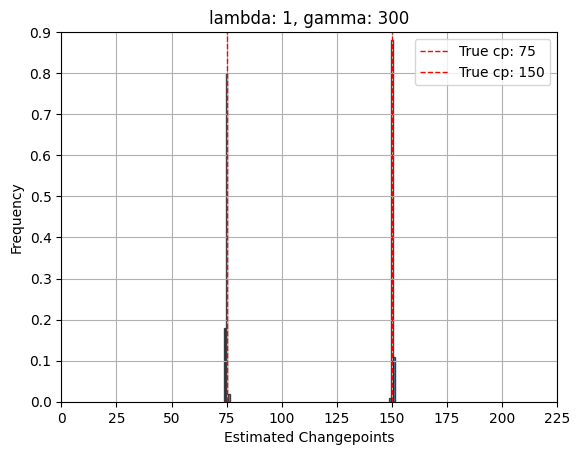}}
\caption{Estimated change point distribution for different values of $(\lambda,\ \gamma)$, $M=200$ simulations per setting.}
\label{fig:sett_alpha_gamma}
\end{figure}

From the figure, it is evident the crucial role of the parameter $\gamma$ and how it is able to penalize the number of estimated change points and center their estimation: for increasing values of $\gamma$, more and more simulations prefer a ``conservative'' solution, being more parsimonious in the choice of the change points and waiting for a longer period before assigning the change point, which improves the detection. The impact of the parameter $\lambda$ appears less significant due to the dense data configuration.Table \ref{tab:my_label} presents the Hausdorff distance of the analyzed settings.

\begin{table}[ht!]
    \centering
    \begin{tabular}{ccc}
    \hline \hline
         & $\lambda = 0$ & $\lambda = 1$\\
         \hline
        \multirow{2}{*}{$\gamma = 100$} & 0.3365 & 0.337\\
        & \small{(0.0023)} & \small{(0.0023)}\\
        \hline
        \multirow{2}{*}{$\gamma = 200$} & 0.3348  & 0.3346 \\
        & \small{(0.0021)} & \small{(0.002)} \\
        \hline
        \multirow{2}{*}{$\gamma = 300$} & 0.3345 & 0.3346\\
        & \small{(0.0019)} & \small{(0.002)}\\
        \hline
    \end{tabular}
    \caption{Hausdorff distance (sd) for different configurations $(\lambda, \gamma)$ analyzed over the data configuration $(q=8, d=2)$ with two true change points $\eta_1=75, \eta_2 = 150$. $M=100$ simulations per each setting.}
    \label{tab:my_label}
\end{table}

After evaluating all the configurations, the combination $(\lambda = 0,\ \gamma = 300)$ turned out to be the best set. This is because the high penalization results in a more accurate detection since the algorithm takes more time (and more observed realizations) before the identification of the change point. \\

Another hyperparameter influencing the detection results is $\delta$, representing the position of the first admissible change point candidate for both methods, or equivalently, the length of the first segment before a potential split is considered by our algorithm. This parameter can be thought as the empirical counterpart of $\Delta$ in \eqref{eq:delta} and in the evaluated settings has default value equal to $5$. Opting for a low value of $\delta$ may introduce instability during the initial loss comparison, potentially leading to the erroneous detection of a change point at the outset of the observations. \\

\textbf{Real data application.} The proposed method was applied to annual global (land and ocean) surface temperature anomalies from 1948 to 2020, constructed from the NCEP/NCAR monthly averages \cite{ncep}. Following the World Meteorological Organization policy, temperature anomalies are obtained by subtracting the long-term monthly means relative to the 1981--2010 base period. They are then averaged over months to switch from a monthly scale to an annual scale.

By means of the \texttt{healpix} package (see \cite{Gorski_2005} and the official \texttt{healpix} \href{https://healpix.sourceforge.io/}{\underline{website}}), we converted the gridded data into spherical maps with a resolution of $12\cdot\operatorname{NSIDE}^2$ pixels (NSIDE = 16) and then we computed the Fourier coefficients up to $\LN = 2\cdot\operatorname{NSIDE}$. We used the detection technique introduced in this dissertation to identify, if present, one or more changes in the temperature anomalies.

In order to handle possible anisotropies in the mean, for each segment $k=0,1$, we introduced an intercept $\mu_k \in L^{2}(\mathbb{S}^2)$, which has a representation in terms of spherical harmonics
$$
\mu_{k} = \sum_{\ell=0}^\infty \sum_{m=-\ell}^\ell \mu_{\ell,m;k} Y_{\ell,m}, \qquad \text{in }
 L^2(\mathbb{S}^2),$$
 with $\mu_{\ell,m;k} = \langle \mu_k, Y_{\ell,m} \rangle_{L^2}$.\\

Note that, for the real data application as well, the tuning parameter $\lambda$ was set to the same value for all multipoles, that is, $\lambda_\ell \equiv \lambda$ for every $\ell = 0, \ldots, L-1$. With a choice of hyperparameters $\lambda = 1$ and $\gamma = 20$, the method detected two change points at $\hat{\eta}_1 = 1976$ and $\hat{\eta}_2=1998$. After the detection, we can estimate the functional parameters $(\mu_k, \Phi_k)$ by solving the following least-squares minimization problem, see \cite{tesi_ale,caponera2019asymptotics}, 
\begin{align*}
    (\hat{\mu}_k,\, \hat{\Phi}_k) & := \mathrm{argmin} \,\sum_{t \in \mathscr{T}_k} \left \| T_t - \mu_{k;L} - \sum_{j = 1}^p \Phi_{k,j;L}T_{t-j} \right \|_{L^2(\mathbb{S}^2)}^2,
\end{align*}
where $\mu_{k;L}$ and $\Phi_{k,j;L}$ are the truncated version of $\mu_k$ and $\Phi_{k,j}$, respectively, for $j = 1, \dots, p,\ k=0, 1, 2$, and $\mathscr{T}_j$ is the set of timestamps belonging to each segment, i.e., $\mathscr{T}_0=\{1949, \ldots, \hat{\eta}_1-1\}$, $\mathscr{T}_1=\{\hat{\eta}_1, \ldots, \hat{\eta}_2 -1\}$, and $\mathscr{T}_2=\{\hat{\eta}_2, \ldots, 2020\}$.

The comparison between the three periods can be carried out by computing the three mean surfaces
$$
(I_L-\hat{\Phi}_k)^{-1} \hat{\mu}_k = \sum_{\ell=0}^L \sum_{m=-\ell}^\ell \frac{\hat{\mu}_{\ell,m;k}}{1-\hat{\phi}^{(k)}_{\ell;1k} - \dots - \hat{\phi}^{(k)}_{\ell;p}} Y_{\ell,m}, \qquad k=0,1,2,
$$
shown in Figure \ref{fig:maps2} together with their sign. The mean surface before the year 1976 has a prevalence of negative signs; then the detection technique identifies a more uniform period between the year 1976 and the year 1982. Finally, the average surface anomalies after the year 1998 have a prevalence of positive signs, with a strong increase in value at the poles.\\

\begin{figure}[ht!]
\centering
\subfloat{
\includegraphics[width=0.4\textwidth]{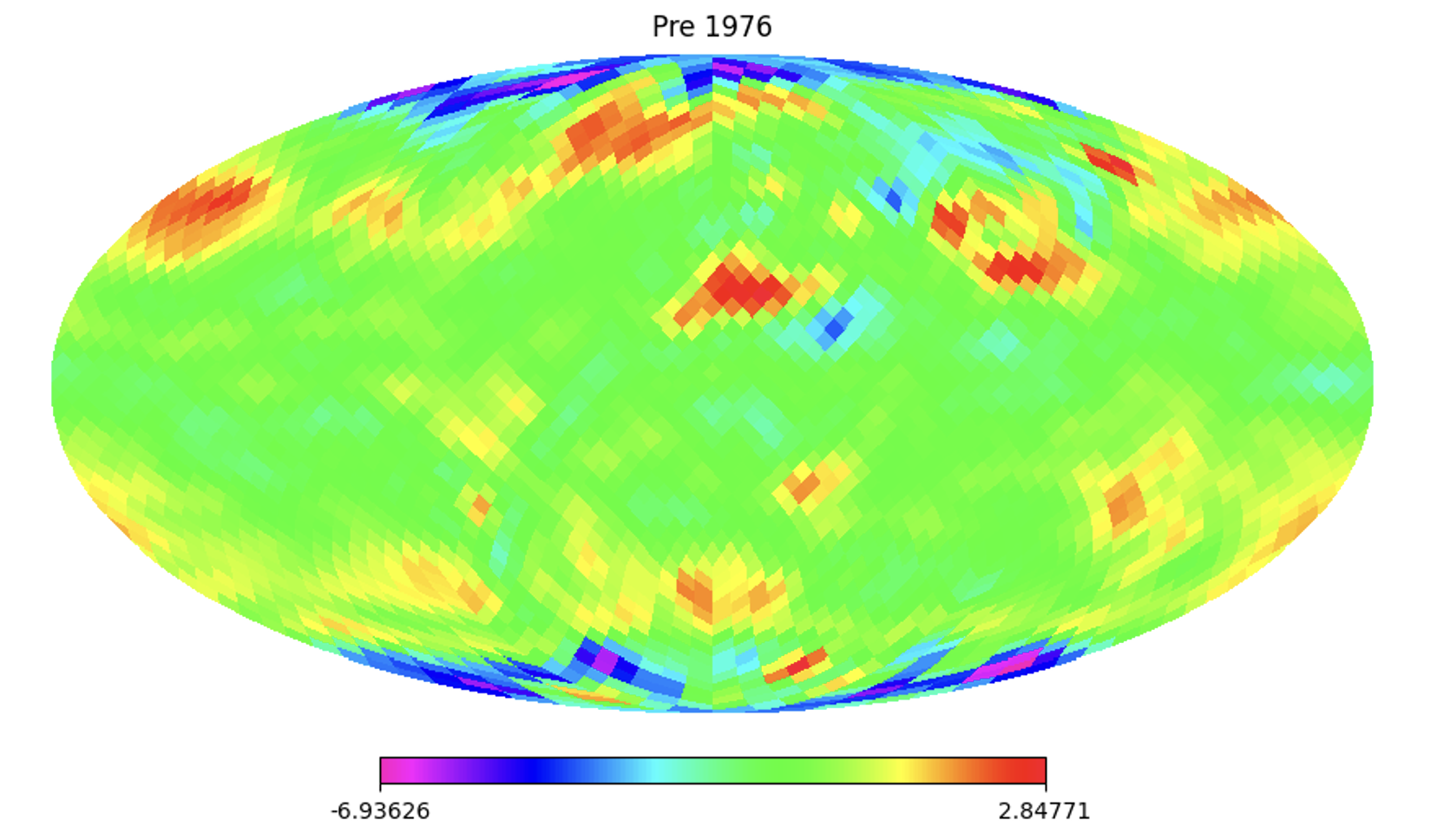}}
\qquad \subfloat{
\includegraphics[width=0.4\textwidth]{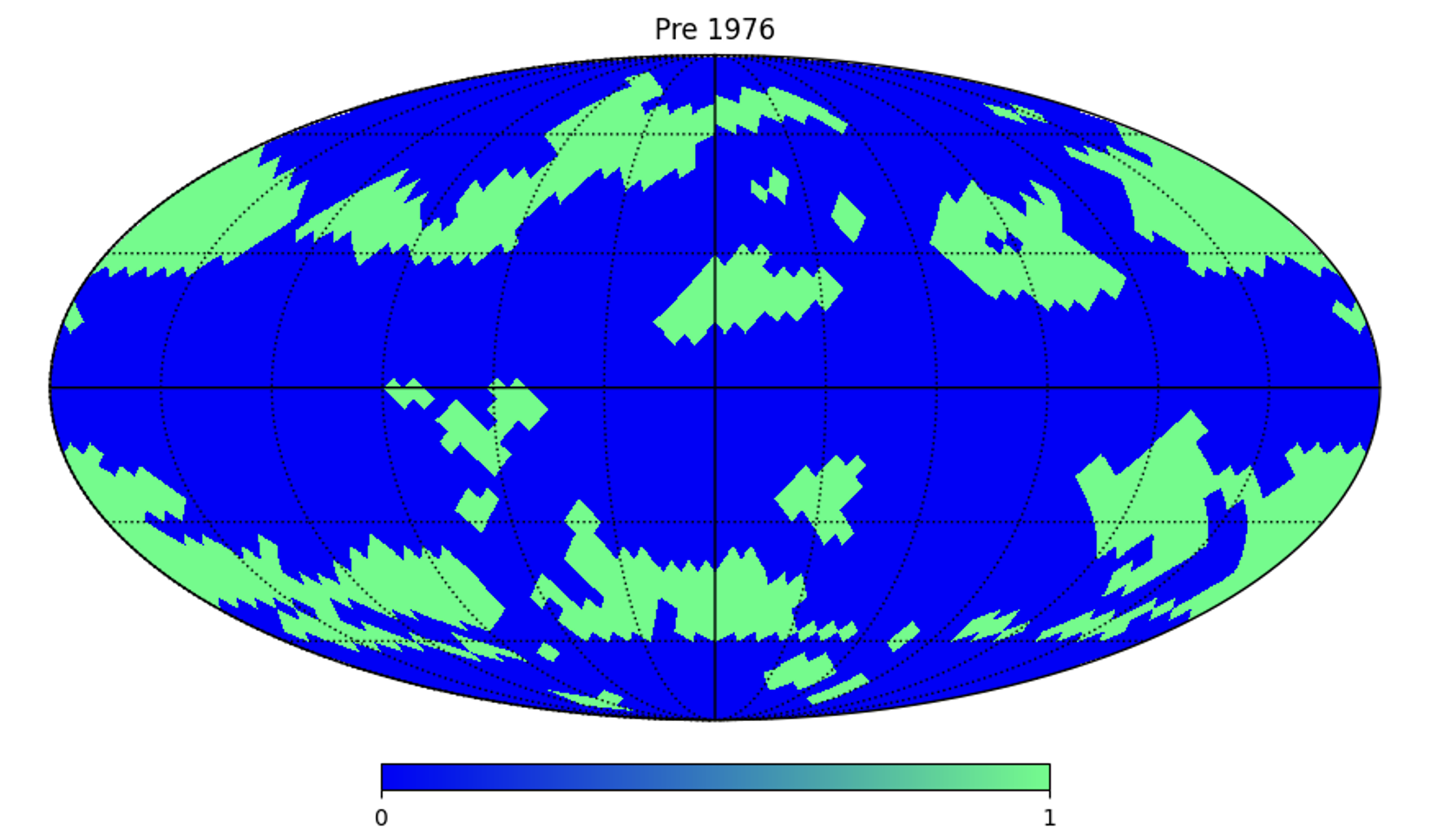}}
\\
\centering
\subfloat{
\includegraphics[width=0.4\textwidth]{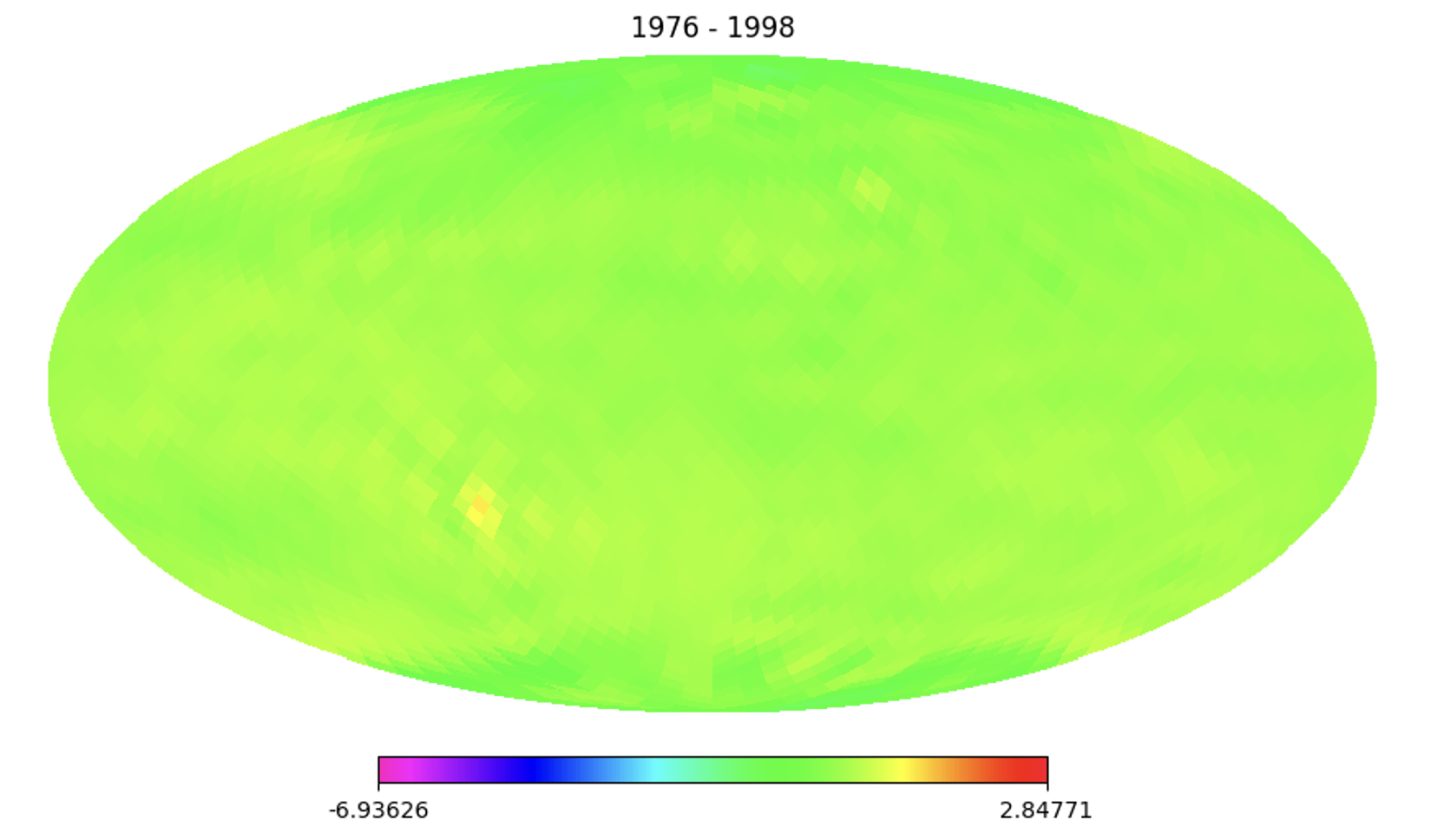}}
\qquad \subfloat{
\includegraphics[width=0.4\textwidth]{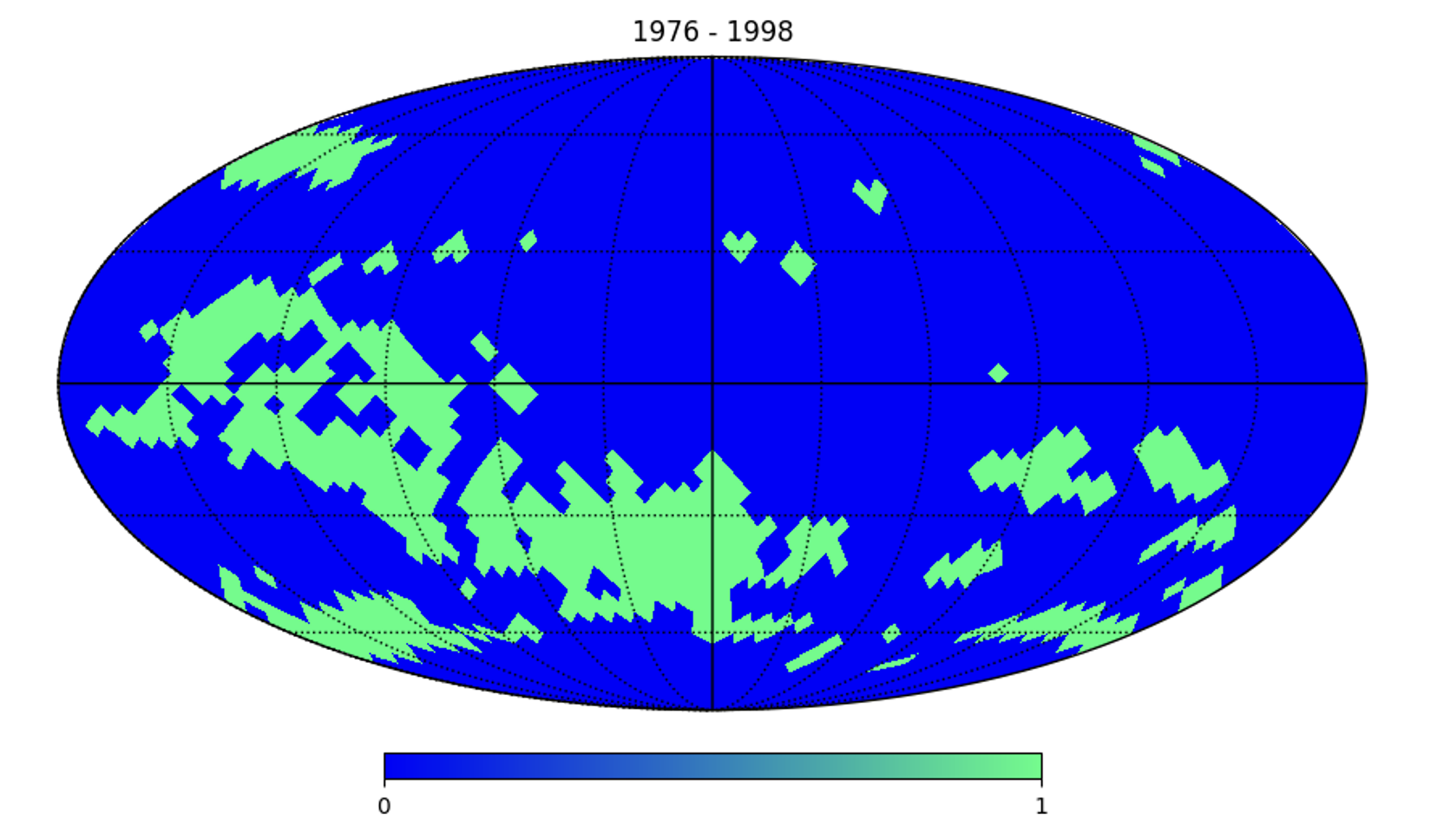}}
\\
\centering
\subfloat{
\includegraphics[width=0.4\textwidth]{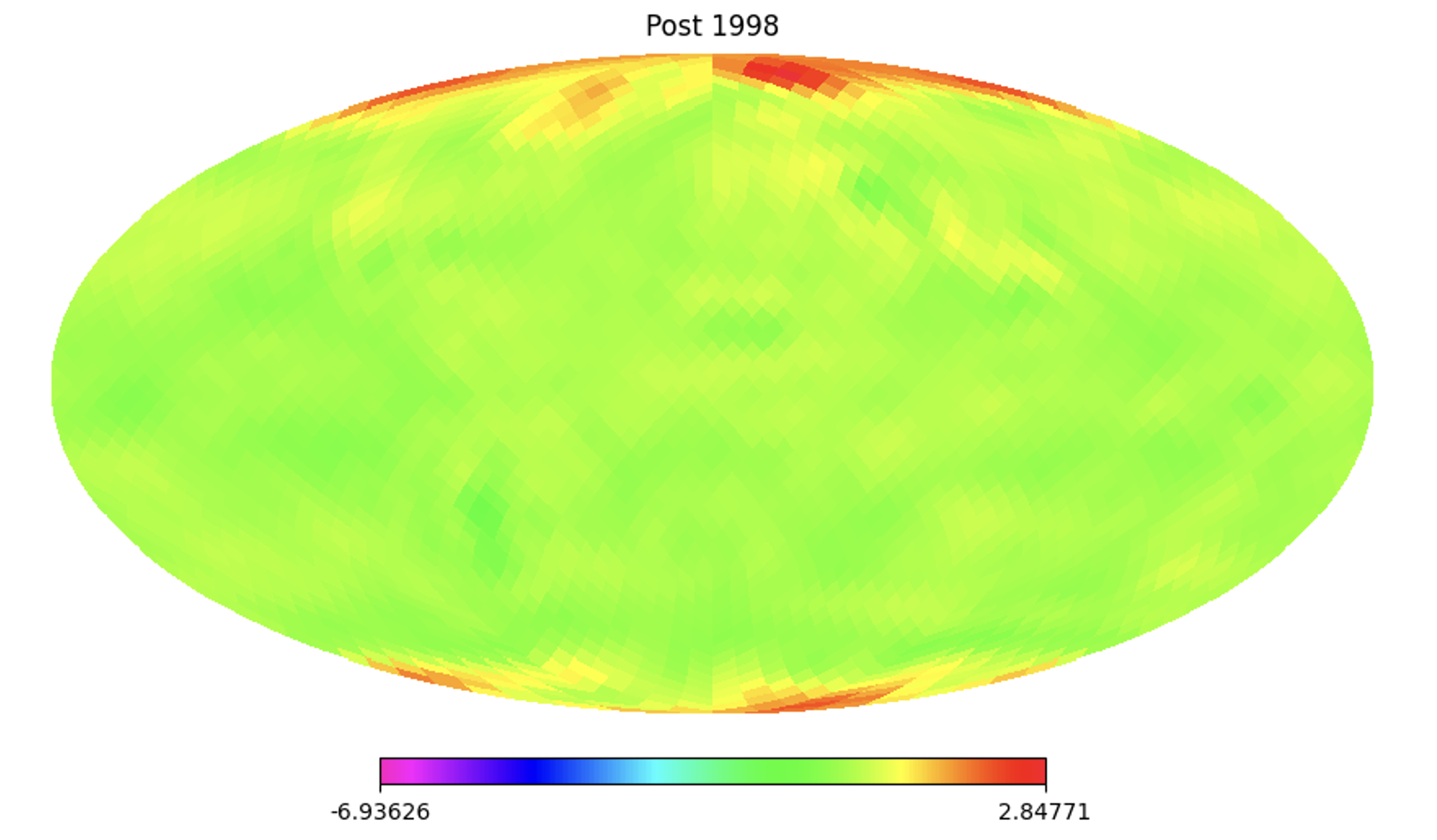}}
\qquad \subfloat{
\includegraphics[width=0.4\textwidth]{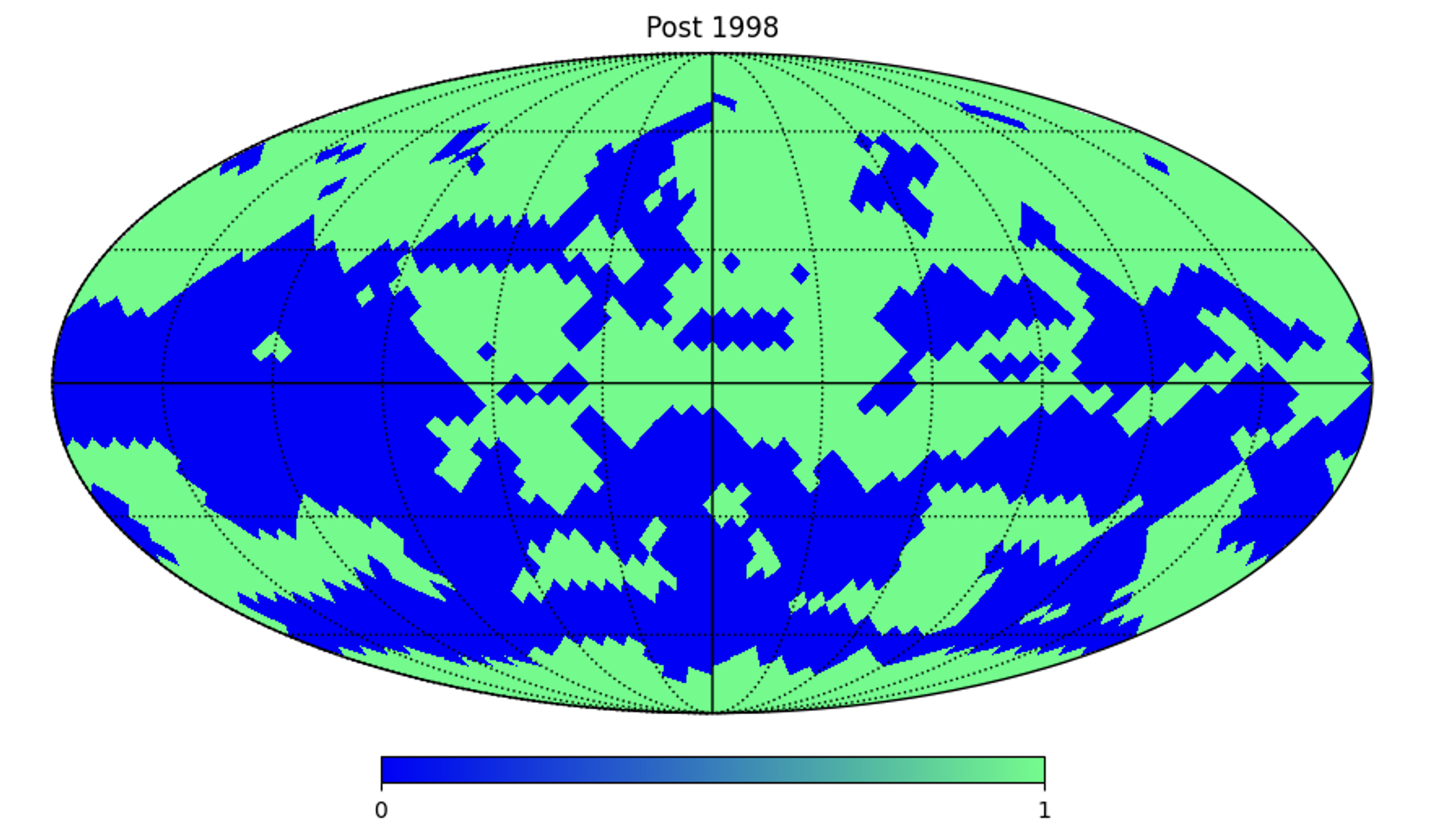}}
\caption{\small{\textbf{LASSO penalized estimator.} Left: estimated mean surfaces pre $\hat{\eta}_1 = 1976$, intermediate period $[\hat{\eta}_1,\hat{\eta}_2)=[1976, 1998)$ and post $\hat{\eta}_2=1998$ (on the same color scale). Right: corresponding negative (0) and positive (1) pixels.}}
\label{fig:maps2}
\end{figure}

A transition from predominantly negative to predominantly positive anomalies is observed, particularly at the poles, in agreement with patterns of global warming. A more detailed climatological investigation is warranted to relate these detected structural changes to specific climate regimes or events.

\section{Conclusions}\label{sec:concl}

This paper introduces the first general framework for change point detection in spherical functional autoregressive processes. By adapting penalized dynamic programming to the infinite-dimensional setting of SPHAR($p$) models, we provide a principled and theoretically justified approach for detecting structural breaks in spatio-temporal random fields defined on the sphere.

While extensive research exists on change point detection for scalar and vector-valued time series--including dependent and high-dimensional cases--prior literature is mainly focused on Euclidean data or, at best, on non-functional multivariate settings. The spherical functional time series setting introduces additional challenges, such as the infinite-dimensionality of the coefficient space and the interplay between spatial and temporal dependence. In this setting, we introduce a novel Spherical Autoregressive Change Point framework, which extends SPHAR($p$) models by allowing for structural breaks in the autoregressive dynamics governing the field.

Our main methodological contribution is the development and analysis of a penalized change point estimator, formulated via a LASSO-regularized dynamic programming approach. This estimator does not require any prior knowledge on the number or location of change points and is accompanied by finite-sample probabilistic guarantees for both the number and locations of the estimated change points and for segment-wise spectral parameter estimation. Theoretical results were established for the spectral parameter estimation error and for the accuracy of the estimated segmentation, with finite-sample rates explicitly depending on the process dimension and tuning parameters. Importantly, our framework accommodates the infinite-dimensional setting typical of functional data on the sphere by introducing a maximal observed multipole $\LN$, and the model and theory are robust as $\LN$ increases.

Our analysis also underscores how the maximum observed multipole $\LN$ enters both practical estimation and theoretical rates, and suggests further investigation into optimal choices of $\LN$ as a function of sample size or field resolution. Beyond this, the framework admits natural extensions, including (i) modeling change points affecting only specific multipoles or frequency bands, (ii) relaxing the independence assumption between segments, and (iii) devising formal inference procedures or tests for multiple change point scenarios.

By integrating and extending methodologies from change point analysis and spherical functional autoregressions, we overcome both the infinite-dimensionality of the functional setting and the specific challenges associated with concatenated SPHAR($p$) models under non-stationarity. The proofs and theoretical developments carefully account for the misspecification and boundary effects introduced by structural breaks, providing error bounds that reflect the specifics of the concatenated (block-dependent) model structure.

Application of the method to global temperature anomaly data illustrates its potential for uncovering nonstationarities of practical and scientific relevance--identifying not only temporal change points but also their spatial (harmonic) signatures. The ability to jointly detect, localize, and attribute regime shifts in the spatio-temporal dynamics of spherical fields opens the door to more nuanced investigations in climate science, planetary data analysis, cosmology, and beyond, where changes are often global yet spatially structured. Importantly, the harmonic domain approach enables practitioners to investigate whether changes are driven by low- or high-frequency spatial components, thus supporting more interpretable science.

Some practical limitations and implementation decisions are worthy of further discussion. First, the proposed procedure relies on the expansion of the random field in spherical harmonics up to a maximal multipole $L$, which in practice is dictated by the spatial resolution or the data's signal-to-noise ratio. The detection performance, as both theory and simulation confirm, can be sensitive to the choice of $L$ and the sparsity of the underlying model. Second, tuning parameter selection (for both penalization and minimal segment length) remains, as in the classical case, a challenging issue; while guidance from theory exists, adaptive or data-driven approaches--for example, using cross-validation, information criteria, or resampling--may further improve empirical performance. Finally, while our theoretical developments assume independence between stationary segments, real data may exhibit dependence across change points, which could affect finite-sample performance.

Several natural extensions arise from this work. One direction is the development of formal hypothesis tests and model selection procedures tailored to the spherical functional setting, perhaps using recent advances in selective inference or subsampling. Another is to investigate models where change points affect only a subset of harmonic coefficients or multipole bands, corresponding, for example, to spatially localized regime shifts or frequency-specific changes. Robust inference and uncertainty quantification, as well as relaxing assumptions on independence or stationarity within segments, also merit deeper exploration. Beyond SPHAR($p$) models, adapting these ideas to more general classes of spatial and spatio-temporal processes on manifolds remains an open and exciting area. 

Overall, the penalized estimator presented here offers a new and flexible tool for the detection of structural breaks in spherical random fields. It readily admits adaptation to applications in climate science, atmospheric data, cosmology, and more, and opens several avenues for future methodological and applied research.

\bibliographystyle{ieeetr}
\bibliography{biblio}

\newpage
%%%%%%%%%%%%%%%%%%%%%%%%%%%%%
\appendix
\section{Auxiliary Results}\label{sec:est_singleinterval}

This appendix presents auxiliary results on the spectral parameters estimation in a fixed interval $I = [s,e]$. Specifically, we derive classical deviation bounds and oracle inequalities for the $\widehat{\phi}_{\ell, I}$'s defined in \eqref{eq:phi_est}.

\medskip
In order to simplify the notation, we use $N$ to indicate $N_I = e - s - p + 1$ and we define, for for $\ell = 0, \dots, \LN-1$, the following $N(2\ell + 1)-$dimensional vectors
\begin{align*}
    & \mathbf{Y}_{\ell, I} = (a_{\ell, -\ell}(e), \dots,a_{\ell, -\ell}(s+p), \dots, a_{\ell, \ell}(s+p))', \\
    & \mathbf{Y}_{\ell, I}(j) = (a_{\ell, -\ell}(e-j), \dots,a_{\ell, -\ell}(s+p-j), \dots, a_{\ell, \ell}(s+p-j))', \quad j=1, \dots, p,\\
    & \mathbf{E}_{\ell, I} = (a_{\ell, -\ell; Z}(e), \dots,a_{\ell, -\ell; Z}(s+p), \dots, a_{\ell, \ell; Z}(s+p))'.
\end{align*}
We can thus define the $N (2\ell + 1) \times p$ matrix
$$
X_{\ell, I} = \{\mathbf{Y}_{\ell, I}(1): \cdots: \mathbf{Y}_{\ell, I}(p)\},
$$
and the empirical covariance matrix
$$
\widehat{\Gamma}_{\ell, I} = \frac{X'_{\ell, I}X_{\ell, I}}{N(2\ell+1)}.
$$

Note that if the interval $I$ does \emph{not} contain any true change points, then it must be contained within a single stationary segment, i.e., $I \subseteq [\eta_k, \eta_{k+1})$ for some $k = 0, \dots, K$. In this case, the model on $I$ is governed by a single parameter vector, and we have
\[
\mathbf{Y}_{\ell, I} = X_{\ell, I}\bm{\phi}^{(k)}_{\ell} + \mathbf{E}_{\ell, I},
\]
which implies
\begin{equation}\label{eq:emp_proc_nocp}
\frac{X'_{\ell, I}\mathbf{E}_{\ell, I}}{N(2\ell+1)} 
= \frac{X'_{\ell, I}\bigl(\mathbf{Y}_{\ell, I} - X_{\ell, I}\bm{\phi}^{(k)}_{\ell}\bigr)}{N(2\ell+1)}.
\end{equation}
Conversely, if $I$ contains at least one change point, this representation no longer holds.

\subsection{Deviation Bounds}
Following \cite{CAPONERA2021167}, we now apply the concept of the stability measure, introduced in Section \ref{sec:stability_meas}, to establish deviation bounds that characterize the concentration of sample covariances and the empirical processes around their expected values.

While \cite{CAPONERA2021167} focuses exclusively on the setting without change points, here we consider a generic interval
$I$, and the deviation bounds in Proposition \ref{prop::dev_bound} remain valid irrespective of the presence of change points. In the specific case where there are no change points, \eqref{eq:devbound4} can be related to \eqref{eq:emp_proc_nocp}. However, even in this case, the result differs slightly from \cite[Equation (4.6)]{CAPONERA2021167}, as we employ a distinct and more refined argument in its proof. This approach naturally extends the results to centered products of vector-valued random variables, provided that one can bound the largest eigenvalue of their corresponding covariance matrices from above.

As a complementary step, we first establish an additional lemma on the expectation of $\widehat{\Gamma}_{\ell, I}$, which we denote by $\Gammastar$.
\begin{lemma}\label{lemma:gamma}
Consider $\ell \in \mathbb{N}_0$ and the matrix $\Gammastar = \mathbb{E}\left[\widehat{\Gamma}_{\ell, I} \right].
$ Then,
    $$
2\pi \mathfrak{m}_\ell \le \lambda_{\min} (\Gammastar) \le \lambda_{\max} (\Gammastar) \le 2\pi \mathcal{M}_\ell.
$$
\end{lemma}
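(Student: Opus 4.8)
The strategy is to compute the entries of $\Gammastar = \mathbb{E}[\widehat{\Gamma}_{\ell,I}]$ explicitly and then recognize the resulting matrix as being of the Toeplitz type whose extremal eigenvalues are controlled by the extrema of the associated spectral density. First I would observe that, by the definition of $\widehat{\Gamma}_{\ell,I} = X'_{\ell,I}X_{\ell,I}/(N(2\ell+1))$, the $(i,j)$-th entry is
$$
\left(\widehat{\Gamma}_{\ell,I}\right)_{ij} = \frac{1}{N(2\ell+1)} \, \mathbf{Y}_{\ell,I}(i)' \, \mathbf{Y}_{\ell,I}(j) = \frac{1}{N(2\ell+1)} \sum_{t=s+p}^{e} \sum_{m=-\ell}^{\ell} a_{\ell,m}(t-i)\, a_{\ell,m}(t-j).
$$
Taking expectations and using the covariance structure recorded in Section~\ref{sec:def_model}, namely $\mathbb{E}[a_{\ell,m}(t-i)\,a_{\ell,m}(t-j)] = C_\ell(i-j)$ (with the understanding that, since $I$ is contained in a single stationary segment $[\eta_k,\eta_{k+1})$ when it contains no change point, this equals $C_\ell^{(k)}(i-j)$; if $I$ straddles a change point the statement should be read segment-wise, but the eigenvalue bounds below are uniform over $k$), every term in the sum over $t$ and $m$ contributes the same quantity $C_\ell(i-j)$, and there are exactly $N(2\ell+1)$ of them. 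Hence $\left(\Gammastar\right)_{ij} = C_\ell(i-j)$, i.e. $\Gammastar$ is precisely the Toeplitz autocovariance matrix $\Gamma_\ell$ of the scalar AR($p$) process $\{a_{\ell,m}(t)\}$ (in the no-change-point case, $\Gammastar = \Gamma_\ell^{(k)}$).

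Once this identification is made, the conclusion is immediate from the bound already displayed at the end of Section~\ref{sec:stability_meas}, which states that for every $k$,
$$
2\pi \mathfrak{m}_\ell \le \lambda_{\min}(\Gamma_\ell^{(k)}) \le \lambda_{\max}(\Gamma_\ell^{(k)}) \le 2\pi \mathcal{M}_\ell,
$$
a general fact about Toeplitz covariance matrices whose generating spectral density is bounded between $\mathfrak{m}_\ell$ and $\mathcal{M}_\ell$ (citing \cite{basu}). Since $\Gammastar$ is one of these matrices (or, in the straddling case, its eigenvalues still lie in the convex hull of the segment-wise bounds), the stated inequality $2\pi\mathfrak{m}_\ell \le \lambda_{\min}(\Gammastar) \le \lambda_{\max}(\Gammastar) \le 2\pi\mathcal{M}_\ell$ follows.

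\textbf{Main obstacle.} The only delicate point is bookkeeping around the interval $I$ and the shifted vectors $\mathbf{Y}_{\ell,I}(j)$: one must check that the ranges of $t$ (from $s+p$ to $e$) and the shifts $j \in \{1,\dots,p\}$ are compatible with the assumption $\Delta > p$, so that when $I$ lies within a stationary segment the indices $t-i, t-j$ never fall outside that segment, and the stationary covariance $C_\ell^{(k)}(i-j)$ is the correct expression for every term in the sum — including the "boundary" lags near $t = s+p$, where the vector $\bolda_{\ell,m}(t-1)$ reaches back to time $s+p-p = s$. This is exactly why the loss in \eqref{eq:loss} starts the sum at $t=s+p$ rather than $t=s$. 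Verifying this alignment, and noting that the count of identical terms is exactly $N(2\ell+1) = (e-s-p+1)(2\ell+1)$ so that the normalization cancels cleanly, is the substance of the argument; everything else is a direct appeal to the cited Toeplitz eigenvalue bound.
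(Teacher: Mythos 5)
Your calculation of the entries of $\Gammastar$ and the identification $\Gammastar = \Gamma_\ell^{(k)}$ is correct, but \emph{only} when $I$ is entirely contained in a single stationary segment. The lemma, however, is stated for a generic interval $I$ and is invoked later (in the proof of the compatibility condition and in Appendix A.3) for intervals that straddle change points, where $\Gammastar$ is \emph{not} Toeplitz. When the lag window $\{t-1,\dots,t-p\}$ crosses a change point $\eta_k$, the entries of $\bolda_{\ell,0}(t-1)$ come from two different (and by assumption \emph{independent}) processes; the expectation $\mathbb{E}[\bolda_{\ell,0}(t-1)\bolda'_{\ell,0}(t-1)]$ is then a \emph{block-diagonal} matrix whose two blocks are principal submatrices of $\Gamma_\ell^{(k-1)}$ and $\Gamma_\ell^{(k)}$, respectively, and its $(i,j)$-entry is not $C_\ell^{(k)}(i-j)$ for any single $k$; the cross terms vanish rather than follow a Toeplitz pattern. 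Your parenthetical remark that ``in the straddling case, its eigenvalues still lie in the convex hull of the segment-wise bounds'' is exactly the claim that needs a proof, not a remark.

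The paper's argument supplies the missing ingredient: it writes $\Gammastar$ as the average over $t$ of $\mathbb{E}[\bolda_{\ell,0}(t-1)\bolda'_{\ell,0}(t-1)]$, observes the block-diagonal structure just described, and then applies the Cauchy eigenvalue interlacing theorem to conclude that each such expected matrix has eigenvalues in $\left[\min_k \lambda_{\min}(\Gamma_\ell^{(k)}),\ \max_k \lambda_{\max}(\Gamma_\ell^{(k)})\right]$. Averaging such matrices preserves the eigenvalue interval (since $aI\preceq A\preceq bI$ is a convex constraint), and the display at the end of Section \ref{sec:stability_meas} then finishes the proof. This is the step you are missing; without interlacing, there is no reason the eigenvalues of a principal submatrix of $\Gamma_\ell^{(k)}$ should obey the same bounds as the full matrix. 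Finally, the ``main obstacle'' you flag --- index alignment near $t=s+p$ --- is a non-issue: the normalization $N_I = e-s-p+1$ and the starting index $t=s+p$ are defined precisely so that the shifted vectors stay inside $I$, and this requires no argument. The real subtlety is entirely in the change-point case, which your proposal leaves unaddressed.
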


\begin{proposition}[Deviation Bounds]\label{prop::dev_bound}
Assume that Conditions \ref{cond:identifiability} and \ref{cond:causality} hold. Then, there exists
a constant $c > 0$ such that for any $\ell \in \mathbb{N}_0$, any $r-$sparse vectors $u,v \in \mathbb{R}^p$ with $\|u\|, \, \|v\| \le 1, \, r\ge 1$,
and any $\eta \ge0$, it holds that
\begin{align}
    &P\left(\left|v'(\widehat{\Gamma}_{\ell, I}-\Gammastar)v\right|>2 \pi r \mathcal{M}_\ell \eta\right) \le 2 \exp\left[-c \,N\, (2\ell+1) \,\min\{\eta^2, \eta\}\right], \label{eq:devbound1} \\
    &P\left(\left|u'(\widehat{\Gamma}_{\ell, I}-\Gammastar)v\right|>12 \pi r\mathcal{M}_\ell \eta\right) \le 6 \exp\left[-c \,N\, (2\ell+1) \,\min\{\eta^2, \eta\}\right] \label{eq:devbound2}
\end{align}    
    where $N = e-s-p+1$. In particular, for any $i, j \in \{1, \dots, p\}$, it holds that
\begin{align}  
    &P\left(\left|\left(\widehat{\Gamma}_{\ell, I}-\Gammastar\right)_{i,j}\right|>12 \pi \mathcal{M}_\ell \eta\right) \le 6 \exp\left[-c \,N\, (2\ell+1) \,\min\{\eta^2, \eta\}\right]. \label{eq:devbound3}
\end{align}
Moreover,
\begin{align}  
&P\left(\left\|\frac{X'_{\ell, I} \mathbf{E}_{\ell, I}}{N(2\ell+1)}\right\|_{\infty}> a_0 \maxCZ\left(1+ \frac{1}{\mu_{\min; \ell}} \right)
    \eta\right) \notag\\&\le 6p \exp\left[-c \,N\, (2\ell+1) \,\min\{\eta^2, \eta\}\right],
    \label{eq:devbound4}
\end{align}
with $a_0=3/2$.
\end{proposition}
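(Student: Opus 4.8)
\textbf{Proof proposal for Proposition \ref{prop::dev_bound}.}

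The plan is to treat all four displays as instances of a single concentration phenomenon for quadratic and bilinear forms in the Gaussian vectors $X_{\ell,I}$ and $\mathbf{E}_{\ell,I}$. First I would record the key structural facts: under Conditions \ref{cond:identifiability} and \ref{cond:causality} each coordinate process $\{a^{(k)}_{\ell,m}(t)\}$ is a causal stationary Gaussian AR($p$), so the rows of $X_{\ell,I}$ are (blockwise) i.i.d.\ copies of a Gaussian vector with covariance matrix $\Gammastar = \mathbb{E}[\widehat{\Gamma}_{\ell,I}]$, and by Lemma \ref{lemma:gamma} we have $\lambda_{\max}(\Gammastar)\le 2\pi\mathcal{M}_\ell$. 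For \eqref{eq:devbound1}, writing $w = \Gammastar^{1/2}v$ we have $v'\widehat{\Gamma}_{\ell,I}v = \frac{1}{N(2\ell+1)}\sum \xi_i^2$ where the $\xi_i$ are mean-zero Gaussians with variance $v'\Gammastar v \le 2\pi\mathcal{M}_\ell$ (using $\|v\|\le 1$); here the $r$-sparsity of $v$ is what lets us keep the bound in terms of $r$ if one localises to the support, though in fact $\|v\|\le 1$ already gives the clean bound, and the factor $r$ is a (harmless) overestimate. A Bernstein-type bound for sums of sub-exponential ($\chi^2$-type) random variables — e.g.\ the Hanson--Wright inequality or the standard one-sided $\chi^2$ tail — then yields the $\exp[-cN(2\ell+1)\min\{\eta^2,\eta\}]$ rate, with the $\min$ reflecting the Gaussian-tail/exponential-tail crossover. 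This is the step I expect to be routine once the sub-exponential norm of $\xi_i^2$ is identified with $\|\Gammastar\| \lesssim \mathcal{M}_\ell$.

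For \eqref{eq:devbound2} I would use the polarisation identity $u'Av = \tfrac14\bigl((u+v)'A(u+v) - (u-v)'A(u-v)\bigr)$ with $A = \widehat{\Gamma}_{\ell,I}-\Gammastar$; since $\|u\pm v\|\le 2$ and $u\pm v$ is $2r$-sparse, applying \eqref{eq:devbound1} to each of the two quadratic terms (with a suitably rescaled $\eta$) and a union bound gives \eqref{eq:devbound2}, and the constants $12\pi$ and the factor $6$ come from absorbing the $\|u\pm v\|\le 2$ scaling (a factor $4$) into the radius. Taking $u = e_i$, $v = e_j$ (so $r=1$) immediately gives the coordinatewise bound \eqref{eq:devbound3}, and then a union bound over the $p$ coordinates gives the $6p$ prefactor in the analogous statement for matrices.

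The substantive part is \eqref{eq:devbound4}, the bound on $\|X'_{\ell,I}\mathbf{E}_{\ell,I}/(N(2\ell+1))\|_\infty$. The $j$-th coordinate is $\frac{1}{N(2\ell+1)}\sum_{t,m} a_{\ell,m}(t-j)\,a_{\ell,m;Z}(t)$, a sum of products of two jointly Gaussian sequences; this is where one cannot reduce to a pure $\chi^2$ and must control a bilinear Gaussian chaos. The approach I would take — and where the "more refined argument" advertised in the text before the proposition comes in — is to bound the largest eigenvalue of the joint covariance of the stacked vector $(X_{\ell,I}^{(j)}, \mathbf{E}_{\ell,I})$, or equivalently to write the product as a difference of squares $ab = \tfrac14((a+b)^2-(a-b)^2)$ and bound $\mathrm{Var}(a_{\ell,m}(t-j)\pm a_{\ell,m;Z}(t))$ using the spectral density representation: $\mathrm{Var}(a_{\ell,m}(t-j)) = C_\ell^{(k)}(0) \le 2\pi\mathcal{M}(f_\ell^{(k)}) \lesssim \maxCZ/\mu_{\min;\ell}$, $\mathrm{Var}(a_{\ell,m;Z}(t)) = C^{(k)}_{\ell;Z}\le \maxCZ$, and the cross term bounded by Cauchy--Schwarz; this produces the factor $\maxCZ(1 + 1/\mu_{\min;\ell})$ with the explicit constant $a_0 = 3/2$. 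Then Hanson--Wright / the sub-exponential Bernstein bound applied to each of the (at most) $2\ell+1$ blocks of $N$ i.i.d.\ terms gives the per-coordinate tail $6\exp[-cN(2\ell+1)\min\{\eta^2,\eta\}]$, and a union bound over $j=1,\dots,p$ produces the factor $6p$. The main obstacle is getting the constant $a_0=3/2$ exactly, which requires carefully tracking how the difference-of-squares decomposition and the spectral bounds combine (the naive bound would give a larger constant); handling the within-block dependence of the $a_{\ell,m}(t-j)a_{\ell,m;Z}(t)$ across $t$ (they are not independent, only the $m$-blocks are) is the other delicate point, resolved by noting that each block is itself a quadratic form in a Gaussian vector whose covariance has operator norm controlled as above, so Hanson--Wright applies directly to the block without needing independence across $t$.
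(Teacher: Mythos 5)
Your high-level strategy matches the paper's: cast each statement as concentration of a Gaussian quadratic or bilinear form, bound the operator norm of the relevant block covariance, and invoke Hanson--Wright; then obtain the bilinear/cross-term bounds by decomposing into diagonal quadratic forms and union-bounding, and pass to $\|\cdot\|_\infty$ by a union over coordinates. The one substantive error is your aside that ``$\|v\|\le 1$ already gives the clean bound, and the factor $r$ is a (harmless) overestimate.'' That is false, and it matters. The scalars $\xi_i = v'\bolda_{\ell,m}(t-1)$ each have variance $v'\Gammastar v \le 2\pi\mathcal{M}_\ell$, but they are \emph{not} independent across $t$ within an $m$-block, so the sub-exponential Bernstein rate is \emph{not} governed by that marginal variance. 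Hanson--Wright must be applied to $Z'Q_J Z$ with $Q_J = \mathrm{Cov}(X_{\ell,I}v)$, and it is $\|Q_J\|_{op}$ that controls the tail; by the spectral-density/interlacing argument (the filtered process has spectral density $|v(e^{-i\nu})|^2 f_\ell(\nu)$, and $\sup_\nu |v(e^{-i\nu})|^2 \le \|v\|_1^2 \le r\|v\|_2^2$) one gets $\|Q_J\|_{op}\le 2\pi r\mathcal{M}_\ell$, and the factor $r$ is genuinely needed. In fact you acknowledge exactly this within-block dependence when you treat \eqref{eq:devbound4} (``each block is itself a quadratic form\dots so Hanson--Wright applies directly to the block''); that same reasoning should be carried back to \eqref{eq:devbound1}, replacing your ``i.i.d.\ copies / sub-exponential Bernstein'' shortcut. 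Your use of the polarisation identity $u'Av = \tfrac14[(u+v)'A(u+v) - (u-v)'A(u-v)]$ for \eqref{eq:devbound2} is a minor variant of the paper's three-term split $2u'Av = (u+v)'A(u+v) - u'Au - v'Av$; it is fine (and if anything gives slightly better constants), it just doesn't reproduce the stated $12\pi$ and prefactor $6$, which trace the three-term union bound.
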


Specifically, \eqref{eq:devbound1} will be used to verify the \emph{compatibility condition} (see Proposition \ref{prop::re_compatibility}), while \eqref{eq:devbound4} will be used to prove the \emph{deviation condition} (see Proposition \ref{prop::dev_condition}).

\begin{remark}
Note that we are considering a generic interval 
$I$, and regardless of the presence of change points, the deviation bounds in Proposition \ref{prop::dev_bound} hold. However, in the specific case where no change points are present, \eqref{eq:devbound4} can be linked to \eqref{eq:emp_proc_nocp}. It also slightly differs from \cite[Equation (4.6)]{CAPONERA2021167}, as we employ a different and improved argument to prove it. This approach readily extends the results to centered products of vector-valued random variables, provided that one can bound from above the largest eigenvalue of their corresponding covariance matrices.
\end{remark}

\subsection{Oracle Inequalities Derivation}\label{sec:oracle_deriv}
Once the bounds for the covariances are established, we proceed to present the classical steps necessary for the LASSO in our setting. We start with the so-called \emph{basic inequality}, an essential result based solely on the definition of the LASSO estimator, assuming a linear underlying model. We then consider the \emph{deviation condition} and the \emph{compatibility condition}, and conclude with the \emph{oracle inequalities}, which contribute to obtaining the upper bound on the prediction error given in Theorem \ref{th::error_diff}.

In this section, unless stated otherwise, we focus on an interval \(I\) without change points. For \(\ell = 0, \dots, \LN-1\), we let \(\phiproxi_\ell\) denote the corresponding true spectral parameter, i.e., the one satisfying
\[
\mathbf{Y}_{\ell, I} = X_{\ell, I}\phiproxi_\ell + \mathbf{E}_{\ell, I},
\]
and we denote by \(\qproxi_{\ell}\) the number of non-zero entries of \(\phiproxi_{\ell}\).

However, even within an interval \(I\) that contains no change points, the results differ from \cite{CAPONERA2021167}, since we introduce a LASSO penalty parameter that may depend on the multipole \(\ell\) and is further scaled by \(\sqrt{N(2\ell+2)}\) rather than \(N(2\ell+1)\) (see \cite[Equation (3.1)]{CAPONERA2021167}).
Together with some additional adjustments, this modification contributes to an improvement of the result by a factor of \(\sqrt{2\ell+1}\) in the oracle inequalities.

\begin{proposition}[Basic Inequality]\label{prop::basic_inequality}
    Consider the estimation problem in \eqref{eq:phi_est}. For any $\ell \in \{ 0, \dots, \LN-1\}$, set $\bm{\Delta}_\ell = \widehat{\bm{\phi}}_{\ell, I} - \phiproxi_{\ell}$. Then, the following basic inequality holds 
\begin{equation}
\bm{\Delta}'_{\ell}\widehat{\Gamma}_{\ell, I}\bm{\Delta}_\ell 
\le \frac{2\bm{\Delta}'_{\ell}X'_{\ell, I}(\mathbf{Y}_{\ell, I}- X_{\ell, I}\phiproxi_{\ell})}{N(2\ell+1)} 
+ \frac{\lambda_\ell}{\sqrt{N(2\ell+1)}} \left( \|\phiproxi_{\ell}\|_1 - \|\phiproxi_{\ell}+ \bm{\Delta}_{\ell}\|_1 \right).
\label{eq:basic_ineq}
\end{equation}
\end{proposition}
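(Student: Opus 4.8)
The plan is to derive the basic inequality directly from the optimality of $\widehat{\bm{\phi}}_{\ell,I}$ as a minimizer of the penalized least-squares objective in \eqref{eq:phi_est}. Write $F_\ell(\bm{\phi}) = \|\mathbf{Y}_{\ell,I} - X_{\ell,I}\bm{\phi}\|_2^2 + \lambda_\ell \sqrt{N(2\ell+1)}\,\|\bm{\phi}\|_1$, where the squared-error term is exactly $\sum_{t=s+p}^e \sum_{m=-\ell}^\ell |a_{\ell,m}(t) - \bolda'_{\ell,m}(t-1)\bm{\phi}_\ell|^2$ after stacking into the vectors $\mathbf{Y}_{\ell,I}$ and $X_{\ell,I}$ introduced in the appendix. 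Since $\widehat{\bm{\phi}}_{\ell,I}$ minimizes $F_\ell$, we have $F_\ell(\widehat{\bm{\phi}}_{\ell,I}) \le F_\ell(\phiproxi_\ell)$, i.e.
\[
\|\mathbf{Y}_{\ell,I} - X_{\ell,I}\widehat{\bm{\phi}}_{\ell,I}\|_2^2 + \lambda_\ell\sqrt{N(2\ell+1)}\,\|\widehat{\bm{\phi}}_{\ell,I}\|_1 \le \|\mathbf{Y}_{\ell,I} - X_{\ell,I}\phiproxi_\ell\|_2^2 + \lambda_\ell\sqrt{N(2\ell+1)}\,\|\phiproxi_\ell\|_1.
\]

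Next I would expand the quadratic terms using $\widehat{\bm{\phi}}_{\ell,I} = \phiproxi_\ell + \bm{\Delta}_\ell$. Writing $\mathbf{Y}_{\ell,I} - X_{\ell,I}\widehat{\bm{\phi}}_{\ell,I} = (\mathbf{Y}_{\ell,I} - X_{\ell,I}\phiproxi_\ell) - X_{\ell,I}\bm{\Delta}_\ell$, the identity $\|a-b\|_2^2 = \|a\|_2^2 - 2a'b + \|b\|_2^2$ gives
\[
\|\mathbf{Y}_{\ell,I} - X_{\ell,I}\widehat{\bm{\phi}}_{\ell,I}\|_2^2 = \|\mathbf{Y}_{\ell,I} - X_{\ell,I}\phiproxi_\ell\|_2^2 - 2\bm{\Delta}'_\ell X'_{\ell,I}(\mathbf{Y}_{\ell,I} - X_{\ell,I}\phiproxi_\ell) + \bm{\Delta}'_\ell X'_{\ell,I}X_{\ell,I}\bm{\Delta}_\ell.
\]
Substituting into the minimality inequality, the $\|\mathbf{Y}_{\ell,I} - X_{\ell,I}\phiproxi_\ell\|_2^2$ terms cancel, leaving
\[
\bm{\Delta}'_\ell X'_{\ell,I}X_{\ell,I}\bm{\Delta}_\ell \le 2\bm{\Delta}'_\ell X'_{\ell,I}(\mathbf{Y}_{\ell,I} - X_{\ell,I}\phiproxi_\ell) + \lambda_\ell\sqrt{N(2\ell+1)}\bigl(\|\phiproxi_\ell\|_1 - \|\phiproxi_\ell + \bm{\Delta}_\ell\|_1\bigr).
\]
Finally, dividing through by $N(2\ell+1)$ and using $\widehat{\Gamma}_{\ell,I} = X'_{\ell,I}X_{\ell,I}/(N(2\ell+1))$ yields \eqref{eq:basic_ineq} exactly, since $\lambda_\ell\sqrt{N(2\ell+1)}/(N(2\ell+1)) = \lambda_\ell/\sqrt{N(2\ell+1)}$.

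There is essentially no obstacle here: the basic inequality is a purely algebraic consequence of the definition of the estimator and requires no probabilistic input, no use of the assumed conditions, and no structure on the noise beyond the linear model $\mathbf{Y}_{\ell,I} = X_{\ell,I}\phiproxi_\ell + \mathbf{E}_{\ell,I}$ (which is where the no-change-point assumption on $I$ enters, guaranteeing a single governing parameter $\phiproxi_\ell$). The only mild care needed is bookkeeping: confirming that the stacked vector/matrix notation in the appendix reproduces the double sum in \eqref{eq:phi_est} term by term, and tracking the $\sqrt{N(2\ell+1)}$ normalization consistently through the division. I expect the write-up to be three or four lines.
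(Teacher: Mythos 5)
Your proof is correct and is exactly the argument the paper gives: start from the optimality inequality $F_\ell(\widehat{\bm\phi}_{\ell,I}) \le F_\ell(\bm\phi^*_\ell)$, expand the quadratic loss around $\bm\phi^*_\ell$ so that the $\|\mathbf{Y}_{\ell,I}-X_{\ell,I}\bm\phi^*_\ell\|_2^2$ terms cancel, and divide by $N(2\ell+1)$. One small aside: the inequality is purely algebraic in the reference point $\bm\phi^*_\ell$ and does not actually use the linear model $\mathbf{Y}_{\ell,I}=X_{\ell,I}\bm\phi^*_\ell+\mathbf{E}_{\ell,I}$ anywhere, so the parenthetical claim that the no-change-point assumption enters here is slightly misplaced (indeed, the paper later reuses the same basic inequality for intervals containing change points with a suitably redefined $\bm\phi^*_\ell$); the no-change-point structure only matters once you try to bound the empirical-process term on the right.
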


The basic inequality shows that the prediction error 
\(\bm{\Delta}'_{\ell}\widehat{\Gamma}_{\ell, I}\bm{\Delta}_\ell\) 
is bounded by the sum of two terms: the first is a random term, while the second is deterministic and depends on the penalty parameter \(\lambda_\ell\).

To handle the stochastic part of the basic inequality, we introduce an event \(\mathscr{S}_L\) under which, for each \(\ell = 0, \dots, \LN - 1\), the fluctuations of the random component
\[
\frac{X'_{\ell, I}\bigl(\mathbf{Y}_{\ell, I} - X_{\ell, I}\phiproxi_{\ell}\bigr)}{N(2\ell+1)}
\]
are controlled by a deterministic bound that may depend on \(\ell\), with high probability. This non-uniformity is a direct consequence of the fact that the penalty parameter \(\lambda\) depends on \(\ell\); as a result, the deterministic bounds controlling the stochastic fluctuations are also \(\ell\)-dependent. Unlike in \cite{CAPONERA2021167}, the bounds are therefore not uniform over \(\ell\). We define the event \(\mathscr{S}_L\) as follows.

\begin{definition}
    In the setting previously described, let
    $$
    \mathscr{S}_L = \bigcap_{\ell = 0}^{\LN-1}\left \{\left \| \frac{X'_{\ell, I}(\mathbf{Y}_{\ell, I}- X_{\ell, I}\phiproxi_{\ell})}{\sqrt{N(2\ell+1)}} \right\|_{\infty} \le  \FN\sqrt{\log(p\LN)}\right\}
    $$
    where $\FN$ is a deterministic function depending only on the autoregressive parameters, through $\mu_{\min; \ell}$, and noise variances, through $\maxCZ$. The deviation condition is said to hold if the event $\mathscr{S}_L$ happens.
\end{definition}

The following theorem shows that, for an appropriate choice of $\FN$ and $N$, the event $\mathscr{S}_L$ has high probability to occur.

\begin{proposition}[Deviation Condition]\label{prop::dev_condition}
Consider the estimation problem in \eqref{eq:phi_est} and assume that Conditions \ref{cond:identifiability} and \ref{cond:causality} hold. There exist some constants $c_0,c_1,c_2>0$ such that, if we define
%$$
% \FN= c_0 \, \underset{\ell <\LN}{\max} \left[a_0 \maxCZ\left(1+ \frac{3 + (1+M)pC_\Phi}{\mu_{\min; \ell}} \right)\right],
%$$
$$
 \FN= c_0 a_0 \,\maxCZ\left(1+ \frac{1}{\mu_{\min; \ell}} \right),
$$
and if $N \succeq \log (p \LN)$, then
\begin{align}
    P\left( \bigcap_{\ell = 0}^{\LN-1}\left \| \frac{X'_{\ell, I}(\mathbf{Y}_{\ell, I}- X_{\ell, I}\phiproxi_{\ell})}{\sqrt{N(2\ell+1)}} \right\|_{\infty} \le  \FN\sqrt{\log(p\LN)}\right)  \ge  1 - c_1 e^{- c_2  \log (p  \LN)}. \label{eq:dev_cond_cp}
\end{align}
% Moreover, if the interval $I$ contains no true change point, it holds
% \begin{align}
%     P\left( \bigcap_{\ell = 0}^{\LN-1}\left \|\widehat{\gamma}_{\ell, I} - \widehat{\Gamma}_{\ell, I}\phiproxi_{\ell}\right\|_{\infty}\le  \FN\sqrt{\frac{\log(p\LN)}{N}}\right)  \ge  1 - c_1 e^{- c_2  \log (p \LN)}.\label{eq:dev_cond_nocp}
% \end{align}
\end{proposition}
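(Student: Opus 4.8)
The plan is to derive the event $\mathscr{S}_L$ from the empirical-process deviation bound \eqref{eq:devbound4} of Proposition \ref{prop::dev_bound}, evaluated at a multipole-dependent radius chosen so that the resulting exponent is uniform in $\ell$, and then to take a union bound over $\ell = 0, \dots, \LN - 1$. First I would use that $I$ contains no change point, so that the linear representation $\mathbf{Y}_{\ell, I} = X_{\ell, I}\phiproxi_\ell + \mathbf{E}_{\ell, I}$ holds for each $\ell$ (this is \eqref{eq:emp_proc_nocp} together with the definition of $\phiproxi_\ell$). Hence
\[
\left\| \frac{X'_{\ell, I}(\mathbf{Y}_{\ell, I} - X_{\ell, I}\phiproxi_\ell)}{\sqrt{N(2\ell+1)}} \right\|_{\infty}
= \sqrt{N(2\ell+1)} \; \left\| \frac{X'_{\ell, I}\mathbf{E}_{\ell, I}}{N(2\ell+1)} \right\|_{\infty},
\]
and it suffices to bound the right-hand side by $\FN \sqrt{\log(p\LN)}$ for all $\ell$ simultaneously. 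With $\FN = c_0 a_0 \maxCZ\left(1 + \mu_{\min;\ell}^{-1}\right)$ as in the statement, this amounts to controlling $\bigl\| X'_{\ell, I}\mathbf{E}_{\ell, I}/(N(2\ell+1)) \bigr\|_{\infty}$ at the scale $\eta_\ell := c_0 \sqrt{\log(p\LN)/(N(2\ell+1))}$, since then $\sqrt{N(2\ell+1)}\, a_0 \maxCZ(1 + \mu_{\min;\ell}^{-1})\,\eta_\ell = \FN\sqrt{\log(p\LN)}$ exactly.

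Next I would apply \eqref{eq:devbound4} with $\eta = \eta_\ell$. The crucial observation — and precisely why the $\sqrt{N(2\ell+1)}$ scaling of the LASSO penalty is convenient here — is that $N(2\ell+1)\eta_\ell^2 = c_0^2 \log(p\LN)$ does not depend on $\ell$. Since $2\ell + 1 \ge 1$, the hypothesis $N \succeq \log(p\LN)$ guarantees, for a suitable implied constant, that $\eta_\ell \le 1$ for every $\ell$, so $\min\{\eta_\ell^2, \eta_\ell\} = \eta_\ell^2$ and \eqref{eq:devbound4} shows that the $\ell$-th event defining $\mathscr{S}_L$ fails with probability at most $6p\exp[-c\,c_0^2 \log(p\LN)]$. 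A union bound over $\ell = 0, \dots, \LN - 1$ then gives a total failure probability at most $\LN \cdot 6p \exp[-c\,c_0^2\log(p\LN)] = 6\exp[(1 - c\,c_0^2)\log(p\LN)]$; choosing the absolute constant $c_0$ so that $c\,c_0^2 \ge 2$, and setting $c_1 = 6$ and $c_2 = c\,c_0^2 - 1 > 0$, yields \eqref{eq:dev_cond_cp}.

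There is no serious obstacle once Proposition \ref{prop::dev_bound} is available: the only point requiring care is keeping the deviation radius $\eta_\ell$ calibrated so that (i) the exponent $c\,c_0^2\log(p\LN)$ in the per-$\ell$ bound is free of $\ell$, and (ii) the regime condition $\eta_\ell \le 1$ holds simultaneously for all $\ell = 0, \dots, \LN - 1$ under the single assumption $N \succeq \log(p\LN)$. Both are delivered by the cancellation of the $(2\ell+1)$ factor between $\eta_\ell^2$ and $N(2\ell+1)$, which is also what makes the union bound over the $\LN$ multipoles cost only an extra $\log \LN$, absorbed into $\log(p\LN)$.
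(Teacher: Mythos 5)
Your proof is correct and follows essentially the same route as the paper's: you invoke the no-change-point representation \eqref{eq:emp_proc_nocp}, apply \eqref{eq:devbound4} at the $\ell$-dependent radius $\eta_\ell = c_0\sqrt{\log(p\LN)/(N(2\ell+1))}$ so the exponent $cN(2\ell+1)\eta_\ell^2 = c\,c_0^2\log(p\LN)$ is free of $\ell$, and finish with a union bound over the $\LN$ multipoles. The only cosmetic difference is that you resolve $\min\{\eta_\ell^2,\eta_\ell\}=\eta_\ell^2$ by enlarging the implied constant in $N\succeq\log(p\LN)$, whereas the paper keeps $\min\{c_0^2,c_0\}$ in the exponent (using $N(2\ell+1)\ge\log(p\LN)$ for the linear branch) and then chooses $c_0$ so that $c_2=c\min\{c_0^2,c_0\}-1>0$; both choices are valid and yield the stated conclusion with $c_1=6$.
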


The final step is to establish a compatibility condition that, conditional on the event \(\mathscr{S}_L\), allows us to bound, for each fixed \(\ell\), both the \emph{prediction error}
\(\|X_{\ell, I}(\widehat{\bm{\phi}}_{\ell, I} - \phiproxi_{\ell})\|_2^2\) 
and the \emph{estimation error}
\(\|\widehat{\bm{\phi}}_{\ell, I} - \phiproxi_{\ell}\|_2^2\) 
by the same deterministic term.

A symmetric matrix \(A \in \mathbb{R}^{p \times p}\) is said to satisfy the \emph{compatibility condition}, also known as the \emph{restricted eigenvalue} (RE) condition, with curvature \(\alpha > 0\) and tolerance \(\tau > 0\) (denoted \(A \sim RE(\alpha, \tau)\)), if for any \(\vartheta \in \mathbb{R}^p\),
\[
\vartheta' A \vartheta \ge \alpha \|\vartheta\|_2^2 - \tau \|\vartheta\|_1^2,
\]
see also \cite{Bickel, Geer2009OnTC}. 

The next proposition provides sufficient conditions under which
\[
\bigcap_{\ell=0}^{\LN-1}\left\{\widehat{\Gamma}_{\ell, I} \sim \operatorname{RE} \left( \alpha_{\ell}, \tau_{\ell} \right)\right\}
\]
holds for some \(\alpha_\ell\)'s and \(\tau_\ell\)'s with probability at least \(1 - c_1 e^{-c_2 \log(pL)}\), for some absolute constants \(c_1, c_2 > 0\).

\begin{proposition}[Compatibility Condition] \label{prop::re_compatibility}
Consider the estimation problem in \eqref{eq:phi_est} and assume that Conditions \ref{cond:identifiability} and \ref{cond:causality} hold.
%Define $\qN = \underset{\ell<\LN }{\max} \qproxi_{\ell}$. 
There exist some constants $c_1,c_2>0$ such that, if
\begin{equation*}\label{eq:suffcond_re}
    N \ge 32 \, \underset{\ell < \LN}{\max}  \left\{ \frac{\omega_\ell^2 }{2\ell+1} \max\left\{ \qproxi_\ell , 1\right\} \right \} \log(p\LN) \qquad \text{with} \qquad \omega_\ell = 54 p  \frac{\mu_{\max; \ell}}{\mu_{\min; \ell}} \frac{\maxCZ}{\minCZ},
\end{equation*}
then 
\begin{equation*}
    P\left(\bigcap_{\ell=0}^{\LN-1}\left\{\widehat{\Gamma}_{\ell, I} \sim \operatorname{RE} \left( \alpha_{\ell}, \tau_{\ell} \right)\right\}\right) \ge 1 - c_1 e^{-c_2 \log(pL)}
\end{equation*}
with
\begin{equation} \label{eq:alpha_tau_def}
    \alpha_{\ell} = \frac{1}{2} \frac{\minCZ}{\mu_{\max; \ell}}, \qquad \text{and} \qquad \tau_{\ell} = \alpha_{\ell}  \omega_\ell^2 \frac{\log(p\LN)}{N(2\ell+1)}.
\end{equation}
\end{proposition}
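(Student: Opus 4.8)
\textbf{Proof proposal for Proposition \ref{prop::re_compatibility}.}

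The plan is to adapt the standard ``RE condition from deviation bounds'' argument of Basu--Michailidis type (as used in \cite{CAPONERA2021167}), but carried out \emph{separately at each multipole $\ell$} and then combined by a union bound over $\ell = 0, \dots, \LN - 1$. The starting point is Lemma \ref{lemma:gamma}, which gives $\lambda_{\min}(\Gammastar) \ge 2\pi \mathfrak{m}_\ell$; since the interval $I$ contains no change point, $\Gammastar = \mathbb{E}[\widehat\Gamma_{\ell,I}]$ is exactly the population covariance of the stationary subprocess, and one has the lower bound $2\pi\mathfrak{m}_\ell \ge \minCZ/\mu_{\max;\ell}$ (from the explicit form of the spectral density, $f_\ell^{(k)}(\nu) = (2\pi)^{-1} C_{\ell;Z}^{(k)}/|\bm\phi_\ell(e^{-i\nu})|^2$, bounded below by $(2\pi)^{-1}\minCZ/\mu_{\max;\ell}$). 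Thus the population matrix satisfies $\vartheta'\Gammastar\vartheta \ge 2\alpha_\ell\|\vartheta\|_2^2$ for all $\vartheta$, with $\alpha_\ell = \tfrac12 \minCZ/\mu_{\max;\ell}$ as in \eqref{eq:alpha_tau_def}.

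Next I would transfer this to the empirical matrix $\widehat\Gamma_{\ell,I}$ by controlling the entrywise deviation $\|\widehat\Gamma_{\ell,I} - \Gammastar\|_{\max}$ via \eqref{eq:devbound3} of Proposition \ref{prop::dev_bound}. For a fixed $\ell$, choose $\eta = \eta_\ell$ so that $12\pi\mathcal{M}_\ell\eta_\ell$ equals a suitable multiple of $\sqrt{\log(p\LN)/(N(2\ell+1))}$; a union bound over the $p^2$ entries gives that, off an event of probability at most $6p^2 \exp[-c N(2\ell+1)\min\{\eta_\ell^2,\eta_\ell\}]$, the max-norm deviation is controlled. The elementary inequality $|\vartheta'(\widehat\Gamma_{\ell,I}-\Gammastar)\vartheta| \le \|\widehat\Gamma_{\ell,I}-\Gammastar\|_{\max}\|\vartheta\|_1^2$ then yields
\[
\vartheta'\widehat\Gamma_{\ell,I}\vartheta \;\ge\; 2\alpha_\ell\|\vartheta\|_2^2 - \|\widehat\Gamma_{\ell,I}-\Gammastar\|_{\max}\|\vartheta\|_1^2 .
\]
The goal is to arrange the tolerance term to be $\alpha_\ell \|\vartheta\|_1^2$ times $\omega_\ell^2\log(p\LN)/(N(2\ell+1))$, i.e.\ to match $\tau_\ell$ in \eqref{eq:alpha_tau_def}; this is where the definition $\omega_\ell = 54p\,(\mu_{\max;\ell}/\mu_{\min;\ell})(\maxCZ/\minCZ)$ enters, absorbing the ratio $\mathcal{M}_\ell/\alpha_\ell$ (bounded, up to constants, by $(\mathcal{M}_\ell/\mathfrak{m}_\ell)\cdot(\mu_{\max;\ell}/\minCZ) \lesssim (\mu_{\max;\ell}/\mu_{\min;\ell})(\maxCZ/\minCZ)$ using $\mathcal{M}_\ell/\mathfrak{m}_\ell \lesssim (\mu_{\max;\ell}/\mu_{\min;\ell})(\maxCZ/\minCZ)$) and the factor $p$ and numerical constants. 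One then verifies that the sample-size hypothesis $N \ge 32\max_{\ell<\LN}\{\omega_\ell^2(2\ell+1)^{-1}\max\{\qproxi_\ell,1\}\}\log(p\LN)$ forces $\eta_\ell$ into the regime $\min\{\eta_\ell^2,\eta_\ell\} = \eta_\ell^2$, so each bad-event probability is at most $6p^2\exp[-c' \log(p\LN)]$, and the union over $\ell = 0,\dots,\LN-1$ costs only an extra factor $\LN \le p\LN$, giving the claimed $1 - c_1 e^{-c_2\log(pL)}$ after adjusting constants. (The extra $\max\{\qproxi_\ell,1\}$ in the sample-size condition is there to guarantee that, on the cone of vectors relevant to the LASSO oracle inequality where $\|\vartheta\|_1^2 \lesssim \qproxi_\ell\|\vartheta\|_2^2$, the tolerance term is genuinely dominated by the curvature term.)

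The main obstacle I anticipate is \emph{bookkeeping of the constants and the ratio estimates} rather than any conceptual difficulty: one must pin down $2\pi\mathfrak{m}_\ell \ge \minCZ/\mu_{\max;\ell}$ and $2\pi\mathcal{M}_\ell \le \maxCZ/\mu_{\min;\ell}$ precisely, track how $\mathcal{M}_\ell$, $\alpha_\ell$, $p$, and the numerical factors from \eqref{eq:devbound3} combine to produce exactly the constant $54$ and the square in $\omega_\ell^2$, and check that the threshold on $N$ is precisely what is needed to push $\min\{\eta_\ell^2,\eta_\ell\}$ onto the quadratic branch uniformly in $\ell$. A secondary subtlety is that $\mathfrak{m}_\ell, \mathcal{M}_\ell, \mu_{\min;\ell}, \mu_{\max;\ell}$ are defined as extrema over all $K+1$ segments, so one must note that for an interval with no change point the relevant single-segment quantities are sandwiched between these global extrema, making the bounds valid uniformly regardless of which segment $I$ sits in.
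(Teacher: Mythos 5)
Your proposed route — bound $\|\widehat\Gamma_{\ell,I}-\Gammastar\|_{\max}$ entrywise via \eqref{eq:devbound3}, then use $|\vartheta'(\widehat\Gamma_{\ell,I}-\Gammastar)\vartheta|\le\|\widehat\Gamma_{\ell,I}-\Gammastar\|_{\max}\|\vartheta\|_1^2$ — cannot produce the tolerance $\tau_\ell$ stated in \eqref{eq:alpha_tau_def}, and this is a genuine gap rather than a bookkeeping issue. With the entrywise deviation bound, the best you can do while keeping the failure probability at $e^{-c\log(pL)}$ is to take $\eta_\ell\asymp\sqrt{\log(p\LN)/(N(2\ell+1))}$, which forces the coefficient of $\|\vartheta\|_1^2$ to scale like $\sqrt{\log(p\LN)/(N(2\ell+1))}$. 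But $\tau_\ell$ as defined scales like $\log(p\LN)/(N(2\ell+1))$ — one power faster in $N$. If you instead try to choose $\eta_\ell\asymp\log(p\LN)/(N(2\ell+1))$ to hit $\tau_\ell$ exactly, then $N(2\ell+1)\min\{\eta_\ell^2,\eta_\ell\}\asymp\log^2(p\LN)/(N(2\ell+1))\to 0$ as $N$ grows and the probability bound collapses. You cannot have both; your write-up in fact quietly switches between the two incompatible choices (you first set $12\pi\mathcal M_\ell\eta_\ell\asymp\sqrt{\log(p\LN)/(N(2\ell+1))}$ and then claim to "match $\tau_\ell$").

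The paper gets the fast $1/N$ tolerance precisely by \emph{not} shrinking $\eta$ with $N$. It starts from the quadratic-form deviation bound \eqref{eq:devbound1} for $r$-sparse unit vectors, discretizes the sparse cone $\mathcal{K}(2s_\ell)$ using Lemma F.2 of the supplement to \cite{basu} (paying an entropy cost $2s_\ell\log(21ep/(2s_\ell))$ in the exponent), and then applies Lemma 12 of the supplement to \cite{LW12} to pass from the sup over $\mathcal K(2s_\ell)$ to all of $\mathbb R^p$, which yields a bound of the form $\tfrac12\alpha'\{\|\vartheta\|_2^2+s_\ell^{-1}\|\vartheta\|_1^2\}$. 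Here $\eta=\omega_\ell^{-1}$ is a \emph{constant}, and the fast $\log(p\LN)/(N(2\ell+1))$ rate in $\tau_\ell$ comes from choosing $s_\ell\asymp N(2\ell+1)\omega_\ell^{-2}/\log(p\LN)$: the $1/s_\ell$ factor supplies the extra power of $N$ that the entrywise argument cannot. The hypothesis $N\ge 32\max_\ell\{\omega_\ell^2(2\ell+1)^{-1}\max\{\qproxi_\ell,1\}\}\log(p\LN)$ is exactly what guarantees $s_\ell\ge 1$ (i.e.\ that the discretization is non-vacuous) and that the entropy term is dominated. Your "secondary subtlety" about the global extrema, the lower bound $\lambda_{\min}(\Gammastar)\ge\minCZ/\mu_{\max;\ell}$, and the final union bound over $\ell$ are all fine and match the paper; the missing ingredient is the Loh--Wainwright extension lemma and the role of the free parameter $s_\ell$, which is what actually delivers the stated $\tau_\ell$.
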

\begin{remark}
The compatibility condition on $\widehat{\Gamma}_{\ell, I}$ is a requirement on its smallest eigenvalue, which can be seen as a measure of the dependence of the random matrix columns. It ensures that, with high probability, the (sample) minimum eigenvalue of the matrix $\widehat{\Gamma}_{\ell, I}$ is bounded away from zero.
\end{remark}

We are now able to state the main result of this section.
\begin{proposition}[Oracle Inequalities] \label{th::oracle_in}
Consider the estimation problem \eqref{eq:phi_est} and assume Conditions \ref{cond:identifiability} and \ref{cond:causality} hold. Moreover suppose that
$$
\bigcap_{\ell = 0}^{\LN - 1}\widehat{\Gamma}_{\ell, I} \sim \operatorname{RE} \left( \alpha_{\ell}, \tau_{\ell} \right) \quad \text{a.s.}, \quad \text{with} \quad \qproxi_{\ell}\tau_{\ell}\le \alpha_{\ell} /32
$$
and the deviation condition is satisfied almost surely, that is,
$$
\bigcap_{\ell = 0}^{\LN-1}\left \{\left \| \frac{X'_{\ell, I}(\mathbf{Y}_{\ell, I}- X_{\ell, I}\phiproxi_{\ell})}{\sqrt{N(2\ell+1)}} \right\|_{\infty} \le  \FN\sqrt{\log(p\LN)} \right \} \quad \text{a.s..}
$$
Then, if $\lambdaN\ge  4 \FN\sqrt{\log(p\LN)}$, it holds
\begin{align*}
    &\sqrt{N(2\ell+1)} \|\widehat{\bm{\phi}}_{\ell, I} - \phiproxi_{\ell}\|_2 \le 12 \sqrt{\qproxi_{\ell}} \frac{\lambdaN}{\alpha_{\ell} }, \\
    &\sqrt{N(2\ell+1)} \|\widehat{\bm{\phi}}_{\ell, I} - \phiproxi_{\ell}\|_1 \le  48\qproxi_{\ell} \frac{\lambdaN}{\alpha_{\ell}},
\end{align*}
for $\ell = 0,\dots, \LN-1.$
\end{proposition}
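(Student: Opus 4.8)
The plan is to follow the standard LASSO oracle-inequality argument, now applied separately at each multipole $\ell$ under the two almost-sure events assumed in the hypothesis. Fix $\ell \in \{0,\dots,\LN-1\}$ and write $\bm{\Delta}_\ell = \widehat{\bm{\phi}}_{\ell,I} - \phiproxi_\ell$, $S_\ell = \operatorname{supp}(\phiproxi_\ell)$ with $|S_\ell| = \qproxi_\ell$. I would start from the basic inequality \eqref{eq:basic_ineq} of Proposition \ref{prop::basic_inequality}, bound the random linear term using the deviation event: since $\|X'_{\ell,I}(\mathbf{Y}_{\ell,I}-X_{\ell,I}\phiproxi_\ell)\|_\infty \le \sqrt{N(2\ell+1)}\,\FN\sqrt{\log(p\LN)} \le \tfrac14 \sqrt{N(2\ell+1)}\,\lambdaN$, we get by H\"older that the random term is at most $\tfrac{\lambdaN}{2\sqrt{N(2\ell+1)}}\|\bm{\Delta}_\ell\|_1$. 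Combining with the $\ell_1$-difference term and splitting $\|\bm{\Delta}_\ell\|_1 = \|\bm{\Delta}_{\ell,S_\ell}\|_1 + \|\bm{\Delta}_{\ell,S_\ell^c}\|_1$ (using $\|\phiproxi_\ell\|_1 - \|\phiproxi_\ell+\bm{\Delta}_\ell\|_1 \le \|\bm{\Delta}_{\ell,S_\ell}\|_1 - \|\bm{\Delta}_{\ell,S_\ell^c}\|_1$) yields, after scaling by $N(2\ell+1)$, the inequality
$$
N(2\ell+1)\,\bm{\Delta}_\ell'\widehat{\Gamma}_{\ell,I}\bm{\Delta}_\ell + \tfrac{1}{2}\lambdaN\sqrt{N(2\ell+1)}\,\|\bm{\Delta}_{\ell,S_\ell^c}\|_1 \le \tfrac{3}{2}\lambdaN\sqrt{N(2\ell+1)}\,\|\bm{\Delta}_{\ell,S_\ell}\|_1 .
$$

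From here, the argument bifurcates. Since $\bm{\Delta}_\ell'\widehat{\Gamma}_{\ell,I}\bm{\Delta}_\ell \ge 0$, the display above immediately gives the cone condition $\|\bm{\Delta}_{\ell,S_\ell^c}\|_1 \le 3\|\bm{\Delta}_{\ell,S_\ell}\|_1$, hence $\|\bm{\Delta}_\ell\|_1 \le 4\|\bm{\Delta}_{\ell,S_\ell}\|_1 \le 4\sqrt{\qproxi_\ell}\,\|\bm{\Delta}_\ell\|_2$. Next I would invoke the RE condition $\widehat{\Gamma}_{\ell,I}\sim RE(\alpha_\ell,\tau_\ell)$: $\bm{\Delta}_\ell'\widehat{\Gamma}_{\ell,I}\bm{\Delta}_\ell \ge \alpha_\ell\|\bm{\Delta}_\ell\|_2^2 - \tau_\ell\|\bm{\Delta}_\ell\|_1^2 \ge \alpha_\ell\|\bm{\Delta}_\ell\|_2^2 - 16\qproxi_\ell\tau_\ell\|\bm{\Delta}_\ell\|_2^2 \ge \tfrac{\alpha_\ell}{2}\|\bm{\Delta}_\ell\|_2^2$, where the last step uses the assumption $\qproxi_\ell\tau_\ell \le \alpha_\ell/32$. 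Substituting this lower bound and $\|\bm{\Delta}_{\ell,S_\ell}\|_1 \le \sqrt{\qproxi_\ell}\,\|\bm{\Delta}_\ell\|_2$ into the displayed inequality, dropping the nonnegative $\ell_1$ term on the left, gives
$$
\tfrac{\alpha_\ell}{2}N(2\ell+1)\|\bm{\Delta}_\ell\|_2^2 \le \tfrac{3}{2}\lambdaN\sqrt{N(2\ell+1)}\sqrt{\qproxi_\ell}\,\|\bm{\Delta}_\ell\|_2 ,
$$
so $\sqrt{N(2\ell+1)}\|\bm{\Delta}_\ell\|_2 \le \tfrac{3\sqrt{\qproxi_\ell}\,\lambdaN}{\alpha_\ell} \le \tfrac{12\sqrt{\qproxi_\ell}\,\lambdaN}{\alpha_\ell}$, which is the first claim (with room to spare). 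The $\ell_1$ bound then follows by $\sqrt{N(2\ell+1)}\|\bm{\Delta}_\ell\|_1 \le 4\sqrt{\qproxi_\ell}\sqrt{N(2\ell+1)}\|\bm{\Delta}_\ell\|_2 \le 4\sqrt{\qproxi_\ell}\cdot\tfrac{3\sqrt{\qproxi_\ell}\,\lambdaN}{\alpha_\ell} = \tfrac{12\qproxi_\ell\lambdaN}{\alpha_\ell} \le \tfrac{48\qproxi_\ell\lambdaN}{\alpha_\ell}$.

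The computations are essentially routine bookkeeping once the two events are in force; the only subtlety — and what I would treat most carefully — is the interplay between the RE tolerance term $\tau_\ell\|\bm{\Delta}_\ell\|_1^2$ and the cone condition. One must extract the cone condition $\|\bm{\Delta}_{\ell,S_\ell^c}\|_1\le 3\|\bm{\Delta}_{\ell,S_\ell}\|_1$ \emph{before} applying the RE bound (using only $\bm{\Delta}_\ell'\widehat{\Gamma}_{\ell,I}\bm{\Delta}_\ell\ge 0$), so that the quadratic-in-$\|\bm{\Delta}_\ell\|_2$ replacement of $\|\bm{\Delta}_\ell\|_1^2$ is legitimate and can be absorbed by the curvature $\alpha_\ell$ via $\qproxi_\ell\tau_\ell\le\alpha_\ell/32$. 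The generous constants ($12$ and $48$ rather than the sharp $3$ and $12$) leave ample slack, so no delicate optimization of constants is needed; I would simply verify each inequality direction and that the condition $\lambdaN \ge 4\FN\sqrt{\log(p\LN)}$ is used exactly once, to pass from the deviation bound to $\tfrac14\lambdaN$. Since all steps hold for every $\ell=0,\dots,\LN-1$ on the intersection of the assumed almost-sure events, the conclusion holds simultaneously for all such $\ell$.
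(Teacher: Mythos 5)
Your proof is correct and follows essentially the same route as the paper: basic inequality $\to$ deviation condition with $\lambdaN\ge 4\FN\sqrt{\log(p\LN)}$ $\to$ cone condition $\to$ RE absorption via $\qproxi_\ell\tau_\ell\le\alpha_\ell/32$ $\to$ final $\ell_2$ and $\ell_1$ bounds. The only cosmetic difference is that you keep only $\|\bm{\Delta}_{\ell,S_\ell}\|_1\le\sqrt{\qproxi_\ell}\,\|\bm{\Delta}_\ell\|_2$ at the last step rather than the paper's $\|\bm{\Delta}_\ell\|_1\le 4\sqrt{\qproxi_\ell}\,\|\bm{\Delta}_\ell\|_2$, yielding the sharper constants $3$ and $12$ in place of $12$ and $48$, both of which trivially satisfy the stated inequalities.
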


\subsection{When $I$ Contains Change Points}
Although not strictly required for the derivation of the main results, the oracle inequalities in Section \ref{sec:oracle_deriv} could be generalized to an interval containing a generic number of change points. We include this argument for completeness.

The first step is to generalize the definition of the $\phiproxi_{\ell}$ to the $p-$dimensional vector satisfying 
$$\sum_{k \in \mathcal{K}} \sum_{t \in \mathcal{H}_k} \mathbb{E}[\bolda_{\ell,0}(t-1) \bolda'_{\ell,0}(t-1) ]\phiproxi_{\ell} = \sum_{k \in \mathcal{K}} \sum_{t \in \mathcal{H}_k}  \mathbb{E}[\bolda_{\ell,0}(t-1) \tilde{\bm{a}}_{\ell,0}^{(k)}{}'(t-1) ]\bm{\phi}_{\ell}^{(k)},$$
with
$$\mathcal{K} = \{k : \bm{\phi}_\ell(t) = \bm{\phi}^{(k)}_{\ell}, \ t = s+p,\dots,e\}, \qquad 
\mathcal{H}_k=[s+p, e]\cap[\eta_k, \eta_{k+1}), \ k \in \mathcal{K}.$$
In matrix notation,
\begin{equation}\label{eq:phiproxi}
  \mathbb{E}[X_{\ell, I}'X_{\ell, I}\phiproxi_{\ell}] = \mathbb{E}[X_{\ell, I}'\mathbf{B}_{\ell, I}],  
\end{equation}
where $\mathbf{B}_{\ell, I}$ is the $N(2\ell+1)-$dimensional vector containing the terms $a_{\ell,m}^{(k)}{}'(t-1)\bm{\phi}_\ell^{(k)}$.
The solution exists and it is unique since
$$
\sum_{k \in \mathcal{K}} \sum_{t \in \mathcal{H}_k} \mathbb{E}[\bolda_{\ell,0}(t-1) \bolda'_{\ell,0}(t-1) ] = N \Gammastar
$$
is invertible.

Note that it holds
\begin{equation}\label{eq:emp_proc_cp}
    \frac{X_{\ell, I}'(\mathbf{Y}_{\ell, I}- X_{\ell, I}\phiproxi_{\ell})}{N(2\ell+1)} = \frac{X_{\ell, I}'\mathbf{E}_{\ell, I}}{N(2\ell+1)} + \frac{X_{\ell, I}'\mathbf{B}_{\ell, I}}{N(2\ell+1)} - \frac{X_{\ell, I}'X_{\ell, I}\phiproxi_{\ell}}{N(2\ell+1)}.
\end{equation}
If no true change point occurs in $I$, $\phiproxi_{\ell}$ coincides with its original definition, since $\bm{\phi}_{\ell}(t)$ is constant over $t$, and the two last terms of \eqref{eq:emp_proc_cp} sum to zero.

We let $\qproxi_{\ell}$ be the number of non-zero entries of $\phiproxi_{\ell}$. However, the components of $\phiproxi_{\ell}$ are weighted averages of the true spectral parameters within the interval $I$, and hence it can occur that $\qproxi_{\ell}=p$ (non-sparse vector) even if the corresponding true sparsity indices are strictly less than $p$ (sparse vectors). This is because we are essentially \emph{mixing} together observations from adjacent segments, making this case much more complex. 

Using this new definition for $\phiproxi_{\ell}$ (and the corresponding $\qproxi_{\ell}$), all the results in \ref{sec:oracle_deriv} hold once we assume
$$
\FN = c_0 a_0 \maxCZ\left(1+ \frac{3+(1+M)pC_\Phi}{\mu_{\min; \ell}} \right),
$$
for some $M>0$.
The additional summands in $\FN$ come from the following inequality
\begin{align*}
\left \| \frac{X'_{\ell, I}(\mathbf{Y}_{\ell, I} - X_{\ell, I}\phiproxi_{\ell})}{N(2\ell+1)} \right\|_{\infty} &\le \left \| \frac{X'_{\ell, I}\mathbf{E}_{\ell, I}}{N(2\ell+1)}  \right\|_\infty \\ &+ \left \|\frac{X'_{\ell, I}\mathbf{B}_{\ell, I}}{N(2\ell+1)}- \mathbb{E}\left[ \frac{X'_{\ell, I}\mathbf{B}_{\ell, I}}{N(2\ell+1)} \right] \right \|_\infty \\ &+\left \| \frac{X'_{\ell, I}X_{\ell, I}\phiproxi_{\ell}}{N(2\ell+1)} -\mathbb{E} \left [ \frac{X'_{\ell, I}X_{\ell, I}\phiproxi_{\ell}}{N(2\ell+1)}  \right]\right\|_\infty \\& = \left\|(I)\right\|_{\infty} + \left\|(II)\right\|_{\infty}+\left\|(III)\right\|_{\infty},
\end{align*}
justified by \eqref{eq:phiproxi}.
Then, similarly to \eqref{eq:devbound4}, we are able to derive deviation bounds for the additional terms $(II)$ and $(III)$.
%\begin{align*}
%\lambda_{\max}(\mathbb{E}[\mathbf{B}_{\ell, I}\mathbf{B}_{\ell, I}']) &\le p \max_{k=0,\dots,K} \|\bm{\phi}^{(k)}_\ell\|_2^2 \max_{k=0,\dots,K}  \lambda_{\max}(\Gamma_\ell^{(k)}) \\ &\le  pC_{\Phi} \max_{k=0,\dots,K}  \lambda_{\max}(\Gamma_\ell^{(k)}),
%\end{align*}
%and
%\begin{align*}
%\lambda_{\max}(\mathbb{E}[(X_{\ell, I}\phiproxi_{\ell})(X_{\ell, I}\phiproxi_\ell)']) &\le p   \|\phiproxi_\ell \|_2^2 \max_{k=0,\dots,K}  \lambda_{\max}(\Gamma_\ell^{(k)}) \\ &\le  pMC_{\Phi} \max_{k=0,\dots,K}  \lambda_{\max}(\Gamma_\ell^{(k)}).
%\end{align*}

In particular, $M$ is a positive constant that satisfies $\|\phiproxi_{\ell}\|^2_2 \le M C_\Phi $. We can show that $M$ exists. Indeed, due to Condition \ref{cond:boundedness} and Lemma \ref{lemma:gamma},
\begin{align*}
    N\|\phiproxi_{\ell}\|_2 &\le \|(\Gammastar)^{-1}\|_{op} \sum_{k \in \mathcal{K}} \sum_{t \in \mathcal{H}_k} \| \mathbb{E}[\bolda_{\ell,0}(t-1) \bm{a}_{\ell,0}^{(k)}{}'(t-1) ]\|_{op} \|\bm{\phi}_{\ell}^{(k)}\|_2\\
    &\le \frac{
\sqrt{C_\Phi}}{2\pi\mathfrak{m}_\ell} \sum_{k \in \mathcal{K}} \sum_{t \in \mathcal{H}_k} \| \mathbb{E}[\bolda_{\ell,0}(t-1) \bm{a}_{\ell,0}^{(k)}{}'(t-1) ]\|_{op}.
\end{align*}
Moreover, for fixed $k$ and $t$, we can consider the following cases:
\begin{enumerate}
    \item all the entries of $\bolda_{\ell,0}(t-1)$ belong to a segment different from the $k-$th one;
    \item all the entries of $\bolda_{\ell,0}(t-1)$ belong to the $k-$th segment;
    \item $\bolda_{\ell,0}(t-1)$ has entries that belong to both the $(k-1)-$th and $k-$th segments.
\end{enumerate}
It is readily seen that $\mathbb{E}[\bolda_{\ell,0}(t-1) \bm{a}_{\ell,0}^{(k)}{}'(t-1) ]$ is the zero matrix in the first case, and $\Gamma_\ell^{(k)}$ in the second case.
In the last case, depending on how many entries of $\bolda_{\ell,0}(t-1)$ belong to the $k-$th segments, say $j$, the first $p-j$ rows of $\mathbb{E}[\bolda_{\ell,0}(t-1) \bm{a}_{\ell,0}^{(k)}{}'(t-1) ]$ contain only zeros, while the last $j$ rows coincide with the last $j$ rows of $\Gamma_\ell^{(k)}$.
Hence, computing $\| \mathbb{E}[\bolda_{\ell,0}(t-1) \bm{a}_{\ell,0}^{(k)}{}'(t-1) ]\|_{op}$ is equivalent to computing the operator norm of the reduced $\Gamma_{\ell}^{(k)}$ matrix (of dimension $j\times p$). Then, by Theorem 1 in \cite{THOMPSON19721}, we have that
\begin{align*}
\| \mathbb{E}[\bolda_{\ell,0}(t-1) \bm{a}_{\ell,0}^{(k)}{}'(t-1) ]\|_{op}
&\le  \| \mathbb{E}[\bm{a}_{\ell,0}^{(k)}(t-1) \bm{a}_{\ell,0}^{(k)}{}'(t-1) ]\|_{op}\le 2\pi \mathcal{M}_\ell.
\end{align*}
Moreover, since
$$
0 < \frac{\mathcal{M}_\ell}{\mathfrak{m}_\ell} \le \frac{\maxCZ }{\minCZ } \frac{\mu_{\max ;\ell}}{\mu_{\min ;\ell}} \le \sqrt{M},
$$
for some positive constant $M$, we obtain the result.
See also Remark 4.1 in \cite{CAPONERA2021167}.\\

%$$\phiproxi_{\ell} = \sum_{k: I_k \ne \emptyset} \frac{|I_k|}{N} \bm{\phi}^{(k)}_{\ell}.$$

%$$\phiproxi_{\ell} = \Gammastar^{-1} \sum_{t = s+p}^e\Gamma_{\ell, p}(t)\bm{\phi}_{\ell; t}$$

\section{Proofs}\label{sec:proofs}
In this subsection we collect all the proofs of the previous results. Note that we make use of the matrix notation defined in Appendix \ref{sec:est_singleinterval}.

\subsection{Proofs of Appendix \ref{sec:est_singleinterval}}

\subsubsection{Lemma \ref{lemma:gamma}}
First observe that
$$
\widehat{\Gamma}_{\ell, I} = \frac{X'_{\ell, I}X_{\ell, I}}{N(2\ell+1)} = \frac{1}{N(2\ell+1)} \sum_{t=s+p}^{e} \sum_{m=-\ell}^\ell \bolda_{\ell,m}(t-1) \bolda'_{\ell,m}(t-1),
$$
and hence
$$
\Gammastar = \mathbb{E}\left[\widehat{\Gamma}_{\ell, I} \right] = \frac{1}{N} \sum_{t=s+p}^{e} \mathbb{E}[\bolda_{\ell,0}(t-1) \bolda'_{\ell,0}(t-1)].
$$
Depending on the change points in $I$, each matrix $\mathbb{E}[\bolda_{\ell,0}(t-1) \bolda'_{\ell,0}(t-1)]$, $t=s+p,\dots,e$, has a block diagonal structure, and the blocks related to change point $\eta_k$ are principal submatrices of $\Gamma_{\ell}^{(k)}$.
Then, by the interlacing theorem \cite{THOMPSON19721},
$$
\min_{k=0,\dots,K} \lambda_{\min} (\Gamma_\ell^{(k)}) \le \lambda_{\min} (\Gammastar) \le \lambda_{\max} (\Gammastar) \le \max_{k=0,\dots,K} \lambda_{\max} (\Gamma_\ell^{(k)}),
$$
and the result follows.

\subsubsection{Proposition \ref{prop::dev_bound}: Deviation Bounds}
Consider $J = \text{supp}(v) = \{j_1, \dots, j_r\} \subset \{1, \dots , p\},$ $r \ge 1.$ In addition, define
$$
V_J = X_{\ell, I}v = \sum_{j \in J}v_j\mathbf{Y}_{\ell, I}(j)\sim N(0_{N(2\ell+1)\times 1 }, Q_J)
$$
where
$$
Q_J = \mathbb{E}[V_JV_J']= \sum_{j,j'=1}^p v_j v_{j'} \mathbb{E}[\mathbf{Y}_{\ell, I}(j)\mathbf{Y}'_{\ell, I}(j')].
$$
Note that $Q_J$ is a $N(2\ell+1) \times N(2\ell+1)$ block diagonal matrix with $2\ell+1$ equal blocks, due to independence and equality in distribution of the harmonic coefficients for a given multipole $\ell$.

Now, observe that
\begin{align*}
    & v'\widehat{\Gamma}_{\ell, I}v=\frac{ v' X_{\ell, I}'X_{\ell, I}v}{N(2\ell +1)}= \frac{ V_J'V_J}{N(2\ell +1)}= \frac{ Z'Q_JZ}{N(2\ell +1)}, \quad Z \sim N(0, I_{N(2\ell +1)}),\\
    & v'\Gammastar v= E\left[\frac{ Z'Q_JZ}{N(2\ell +1)}\right].
\end{align*}
By the Hanson-Wright inequality, we get 
\begin{align}
    P\left(\left|v'(\widehat{\Gamma}_{\ell, I}-\Gammastar)v\right|> \zeta\right) &=
    P\left(\left|Z'Q_JZ - E[Z'Q_JZ]\right|> \zeta N(2\ell +1) \right) \nonumber \\
    & \le 2 \exp\left[-c \,\min\left\{\frac{N^2(2\ell +1)^2\zeta^2}{\|Q_J\|^2_F}, \frac{N(2\ell +1) \zeta}{\|Q_J\|_{op}}\right\}\right]
\end{align}
Since $\|Q_J\|^2_F/N(2\ell +1) \le \|Q_J\|_{op}^2$, setting $\zeta = \eta \|Q_J\|_{op}$, we have
\begin{align*}
    P\left(\left|v'(\widehat{\Gamma}_{\ell, I}-\Gammastar)v\right|> \eta \|Q_J\|_{op}\right) \le 2 \exp\left[-c N(2\ell +1)\,\min\{\eta^2, \eta\}\right]
\end{align*}

For $\omega \in \mathbb{R}^{N(2\ell+1)}$, $\|\omega\| = 1$,
\begin{align*}
w' Q_J w  & =  (w \otimes v)' \mathbb{E}\begin{bmatrix} \mathbf{Y}_{\ell, I} (1) \mathbf{Y}'_{\ell, I} (1) & \mathbf{Y}_{\ell, I} (1) \mathbf{Y}'_{\ell, I} (2) &\cdots& \mathbf{Y}_{\ell, I} (1) \mathbf{Y}'_{\ell, I} (p)\\
\mathbf{Y}_{\ell, I} (2) \mathbf{Y}'_{\ell, I} (1) & \mathbf{Y}_{\ell, I} (2) \mathbf{Y}'_{\ell, I} (2) & \cdots& \mathbf{Y}_{\ell, I} (2) \mathbf{Y}'_{\ell, I} (p) \\
\vdots & \vdots& \ddots & \vdots\\
\mathbf{Y}_{\ell, I} (p) \mathbf{Y}'_{\ell, I} (1) & \mathbf{Y}_{\ell, I} (p) \mathbf{Y}'_{\ell, I} (2) & \cdots & \mathbf{Y}_{\ell, I} (p) \mathbf{Y}'_{\ell, I} (p)
\end{bmatrix}
(w \otimes v)\\
& \le \sum_{j\in J} \Lambda_{\max} (\mathbb{E}[\mathbf{Y}_{\ell, I} (j) \mathbf{Y}'_{\ell, I} (j)]).
%= \sum_{r,r'=s+p}^e \sum_{j,j'=1}^p w_r w_{r'} v_j v_{j'} \mathbb{E}[a_{\ell,m}(r-j)a_{\ell,m} (r'-j')]
\end{align*}
Depending on the change points, each matrix $\mathbb{E}[\mathbf{Y}_{\ell, I} (j) \mathbf{Y}'_{\ell, I} (j)]$, $j \in J$, has a block diagonal structure, and the blocks related to change point $\eta_k$ are principal submatrices of the $N\times N$ matrix with generic $ij-$th element $C^{(k)}_\ell (i-j)$ (whose largest eigenvalue is bounded by $2\pi \mathcal{M}(f^{(k)}_\ell)$).
Then, by the interlacing theorem \cite{THOMPSON19721},
$$
\sum_{j \in J} \Lambda_{\max} (\mathbb{E}[\mathbf{Y}_{\ell, I} (j) \mathbf{Y}'_{\ell, I} (j)]) \le 2\pi  |J| \sup_{k=0,\dots,K}  \mathcal{M}(f^{(k)}_\ell),
$$
and we obtain $\|Q_J\|_{op} \le 2\pi r \mathcal{M}_\ell$.

We can then conclude that 
$$
P\left(\left|v'(\widehat{\Gamma}_{\ell, I}-\Gammastar)v\right|> 2 \pi r \mathcal{M}_{\ell} \eta\right) \le 2 \exp\left[-c \,N\, (2\ell+1) \,\min\{\eta^2, \eta\}\right],
$$
proving \eqref{eq:devbound1}.

To prove \eqref{eq:devbound2}, we note that
\begin{align}\label{eq::uvdecomp}
2\left|u'(\widehat{\Gamma}_{\ell, I}-\Gammastar)v\right| &\le \left|u'(\widehat{\Gamma}_{\ell, I}-\Gammastar)u\right| + \left|v'(\widehat{\Gamma}_{\ell, I}-\Gammastar)v\right|\notag \\&+ \left|(u+v)'(\widehat{\Gamma}_{\ell, I}-\Gammastar)(u+v)\right|
\end{align}
and $u+v$ is at most $2r-$sparse with $\|u+v\|\le 2$. The result follows by applying \eqref{eq:devbound1} separately on each of the three terms on the right-hand side.
The element-wise deviation bound \eqref{eq:devbound3} is obtained by choosing $u = e_i, v = e_j$.

Similarly, we prove \eqref{eq:devbound4}. For a fixed $h=1,\dots,p$, we consider the $N(2\ell+1) \times 2$ matrix
$$
\widetilde{X}_{\ell, I} = \{\mathbf{Y}_{\ell, I}(h) : \mathbf{E}_{\ell, I}\},
$$
and we observe that 
$$
\frac{\mathbf{Y}'_{\ell, I}(h) \mathbf{E}_{\ell, I}}{N(2\ell+1)} = \frac{u'\widetilde{X}'_{\ell, I} \widetilde{X}_{\ell, I}v}{N(2\ell+1)}, \qquad \mathbb{E}\left [\frac{\mathbf{Y}'_{\ell, I}(h) \mathbf{E}_{\ell, I}}{N(2\ell+1)} \right] = 0,
$$
with $u=(1,0)'$, $v=(0,1)'$. Moreover,
\begin{align*}
&\lambda_{\max}(\mathbb{E}[\mathbf{Y}_{\ell, I}(h) \mathbf{Y}'_{\ell, I}(h)]) \le 2\pi \mathcal{M}_\ell \le \frac{\maxCZ}{\mu_{\min;\ell}}, \\ &\lambda_{\max}(\mathbb{E}[\mathbf{E}_{\ell, I} \mathbf{E}'_{\ell, I}]) \le 2\pi \mathcal{M}_{\ell;Z} = \maxCZ.
\end{align*}
Subsequently, we derive
\begin{align*}  
    &P\left(\left|\frac{\mathbf{Y}'_{\ell, I}(h) \mathbf{E}_{\ell, I}}{N(2\ell+1)}\right|> a_0 \maxCZ\left(1+ \frac{1}{\mu_{\min; \ell}} \right)
    \eta\right) \le 6 \exp\left[-c \,N\, (2\ell+1) \,\min\{\eta^2, \eta\}\right]
\end{align*}
with $a_0=3/2$, by employing the inequality \eqref{eq::uvdecomp}. The procedure involves iterating the steps required to establish \eqref{eq:devbound1} independently for each of the three terms on the right-hand side.
The final result is then obtained by observing that
$$
\left \| \frac{X'_{\ell, I}\mathbf{E}_{\ell, I}}{N(2\ell+1)} \right\|_{\infty} = \underset{1\le h\le p}{\max} \left | \frac{\mathbf{Y}'_{\ell, I}(h)\mathbf{E}_{\ell, I}}{N(2\ell+1)} \right|.
$$

% Thus it follows that
% \begin{align*}  
%     &P\left(\underset{1\le h\le p}{\max}\left|\frac{\mathbf{Y}'_{\ell, I}(h) \mathbf{E}_{\ell, I}}{N(2\ell+1)}\right|>  a_0 \maxCZ\left(1+ \frac{1}{\mu_{\min; \ell}} \right)
%     \eta\right) \le 6p \exp\left[-c \,N\, (2\ell+1) \,\min\{\eta^2, \eta\}\right]
% \end{align*}

\subsubsection{Proposition \ref{prop::basic_inequality}: Basic Inequality}

Since the vector $\widehat{\bm{\phi}}_{\ell, I}$ is the solution of the minimization problem \eqref{eq:phi_est}, it holds that
\begin{equation*}
   \|\mathbf{Y}_{\ell, I}- X_{\ell, I}\widehat{\bm{\phi}}_{\ell, I} \|^2_2 + \lambda_\ell \sqrt{N(2\ell+1)}\|\widehat{\bm{\phi}}_{\ell, I}\|_1 \le \|\mathbf{Y}_{\ell, I}- X_{\ell, I}\phiproxi_{\ell} \|^2_2 + \lambda_\ell \sqrt{N(2\ell+1)}\|\phiproxi_{\ell}\|_1.
\end{equation*}
Letting $\bm{\Delta}_\ell = \widehat{\bm{\phi}}_{\ell, I} - \phiproxi_\ell$, we observe that
$$
\|\mathbf{Y}_{\ell, I} \pm X_{\ell, I}\phiproxi_\ell - X_{\ell, I}\widehat{\bm{\phi}}_{\ell, I} \|^2_2 = \|\mathbf{Y}_{\ell, I} - X_{\ell, I}\phiproxi_\ell  \|^2_2 + \| X_{\ell, I}\bm{\Delta}_\ell  \|^2_2 - 2 \bm{\Delta}'_\ell X_{\ell, I} (\mathbf{Y}_{\ell, I} - X_{\ell, I}\phiproxi_\ell ) .
$$
Hence, we have
\begin{align*}
\bm{\Delta}'_{\ell}X'_{\ell, I}X_{\ell, I}\bm{\Delta}_\ell  -2\bm{\Delta}'_{\ell}X'_{\ell, I}(\mathbf{Y}_{\ell, I} - X_{\ell, I}\phiproxi_\ell)  \le \lambda_\ell \sqrt{N(2\ell+1)} (\|\phiproxi_{\ell}\|_1 -\|\widehat{\bm{\phi}}_{\ell, I}\|_1), \end{align*}
which leads to the final result
\begin{align*}
\bm{\Delta}'_{\ell}\widehat{\Gamma}_{\ell, I}\bm{\Delta}_\ell \le \frac{2\bm{\Delta}'_{\ell}X'_{\ell, I}(\mathbf{Y}_{\ell, I} - X_{\ell, I}\phiproxi_\ell)}{N(2\ell+1)} + \frac{\lambda_\ell}{\sqrt{N(2\ell+1)}} \left( \|\phiproxi_{\ell}\|_1 - \|\phiproxi_{\ell} +\bm{\Delta}_{\ell}\|_1 \right).
\end{align*}

\subsubsection{Proposition \ref{prop::dev_condition}: Deviation Condition}

First note that, since $I$ does not contain change points,
$$X_{\ell, I}'(\mathbf{Y}_{\ell, I}- X_{\ell, I}\phiproxi_{\ell}) = X_{\ell, I}'\mathbf{E}_{\ell, I}.$$
Then, by \eqref{eq:devbound4},
\begin{align*}
    &P \left (\left \|\frac{X'_{\ell, I}(\mathbf{Y}_{\ell, I} - X_{\ell, I}\phiproxi_{\ell})}{N(2\ell+1)} \right \|_\infty > a_0 \maxCZ\left( \frac{1}{\mu_{\min; \ell}} \right)
    \eta\right) \\
    &\le 6 p  \exp\left[-c \,N\, (2\ell+1) \,\min\{\eta^2, \eta\}\right],
\end{align*}
We then choose $\eta=\eta_\ell = c_0 \sqrt{\frac{\log(p\LN)}{N(2\ell+1)}}$, and since
$$
 = \FN = c_0 a_0 \maxCZ\left(1+ \frac{3+(1+M)pC_\Phi}{\mu_{\min; \ell}} \right) 
,$$
we have 
\begin{align*}  
    &P\left(\left \|\frac{X'_{\ell, I}(\mathbf{Y}_{\ell, I} - X_{\ell, I}\phiproxi_{\ell})}{\sqrt{N(2\ell+1)}} \right \|_\infty> \FN  \sqrt{\log(p\LN)} \right) \le 6 p \exp\left[-c \min\{c_0^2, c_0\} \log(p\LN)\right].
\end{align*}
Hence,
\begin{align*}
    &P\left( \bigcap_{\ell = 0}^{\LN-1}\left \|\frac{X'_{\ell, I}(\mathbf{Y}_{\ell, I} - X_{\ell, I}\phiproxi_{\ell})}{\sqrt{N(2\ell+1)}} \right \|_\infty \le  \FN\sqrt{\log(p\LN)}\right) \\
    =&P\left( \underset{\ell <\LN}{\max}\left \|\frac{X'_{\ell, I}(\mathbf{Y}_{\ell, I} - X_{\ell, I}\phiproxi_{\ell})}{\sqrt{N(2\ell+1)}} \right \|_\infty  \le  \FN\sqrt{\log(p\LN)}\right) \\
    \ge &  1 - 6 p\LN \exp\left[-c  \,\min\{c_0^2, c_0\}\log(p\LN)\right]\\
    =&  1 - 6 \exp\left[- (c  \,\min\{c_0^2, c_0\}-1)\log(p\LN)\right]
\end{align*}
and the statement is proved with $c_1 = 6, \, c_2 = c  \,\min\{c_0^2, c_0\}-1$, where $c_0$ is any positive constant that satisfies $c_2>0$.

\subsubsection{Proposition \ref{prop::re_compatibility}: Compatibility Condition}
We want to prove that
$$
P\left(\bigcap_{\ell=0}^{\LN-1}A_{\ell}\right) \ge 1 - c_1 e^{-c_2 \log(pL)},
$$
where $A_{\ell} = \left\{v_{\ell}'\widehat{\Gamma}_{\ell, I}v_{\ell}\ge \alpha_{\ell} \|v_{\ell}\|_2^2 -\tau _{\ell} \|v_{\ell}\|_1^2, \ \forall v_\ell \in \mathbb{R}^p \right\}$.
Starting from \eqref{eq:devbound1}, it is known that
$$
P\left(\left|v_{\ell}'(\widehat{\Gamma}_{\ell, I}-\Gammastar)v_{\ell}\right|>2\pi r \mathcal{M}_\ell\eta\right) \le 2 \exp\left[-c \,N\, (2\ell+1) \,\min\{\eta^2, \eta\}\right]
$$
and taking into account that
$$
2\pi r \mathcal{M}_\ell \le p \ \frac{\maxCZ}{\mu_{\min; \ell}}, \qquad r\ge1,
$$
we get the following result
$$
P\left(\left|v_{\ell}'(\widehat{\Gamma}_{\ell, I}-\Gammastar)v_{\ell}\right|> p \ \frac{\maxCZ}{\mu_{\min; \ell}} \eta\right) \le 2 \exp\left[-c \,N\, (2\ell+1) \,\min\{\eta^2, \eta\}\right].
$$
Using Lemma F.2 of the supplementary material of \cite{basu} we obtain
\begin{align*}
    &P\left(\underset{v_{\ell} \in \mathcal{K}(2s_\ell)}{\sup} \left|v_{\ell}'(\widehat{\Gamma}_{\ell, I}-\Gammastar)v_{\ell}\right|> p \ \frac{\maxCZ}{\mu_{\min; \ell}} \eta\right) \\ & \le 2 \exp\left[-c \,N\, (2\ell+1) \,\min\{\eta^2, \eta\}+ 2s_\ell \min \left \{\log p, \log \left(  \frac{21 e p}{2s_\ell}\right)\right\}\right]
\end{align*}
where $ \mathcal{K}(2s)= \left \{v \in \mathbb{R}^p : \|v\|_2 \le 1, \|v\|_0 \le 2s\right\}$, for an integer $s\ge 1$.
Setting 
$$
 \eta = \left(\omega_{\ell}\right)^{-1} = \frac{1}{54 \ p}\frac{\mu_{\min; \ell}}{\mu_{\max; \ell}} \frac{\minCZ}{\maxCZ}
$$
we obtain
$$
P\left(\underset{v_{\ell} \in \mathcal{K}(2s_\ell)}{\sup} \left|v_{\ell}'(\widehat{\Gamma}_{\ell, I}-\Gammastar)v_{\ell}\right|\le\frac{1}{54}\frac{\minCZ}{\mu_{\max; \ell}}\right) \ge1- 2 e^{-c \,N\, (2\ell+1) \,
\omega_{\ell}^{-2} + 2s_\ell \min \left \{\log p, \log \left(  \frac{21 e p}{2s_\ell}\right)\right\}}
$$
and applying Lemma 12 of the supplementary material of \cite{LW12} it results in
\begin{align*}
    &P\left( \left|v_{\ell}'(\widehat{\Gamma}_{\ell, I}-\Gammastar)v_{\ell}\right|\le\frac{1}{2}\frac{\minCZ}{\mu_{\max; \ell}}\left\{\|v_{\ell}\|_2^2+ \frac 1 s \|v_{\ell}\|_1^2\right\}, \ \forall v_{\ell} \in \mathbb{R}^p\right) \\ &\qquad\ge1- 2 e^{-c \,N\, (2\ell+1) \, \omega_\ell^{-2} + 2s_\ell \min \left \{\log p, \log \left(  \frac{21 e p}{2s_\ell}\right)\right\}}.
\end{align*}
Moreover, it holds
\begin{align*}
    \left|v_{\ell}'(\widehat{\Gamma}_{\ell, I} -\Gammastar) v_{\ell}\right| & = \left|v_{\ell}' \widehat{\Gamma}_{\ell, I}v_{\ell} - v_{\ell}' \Gammastar v_{\ell}\right| = \left|v_{\ell}' \Gammastar v_{\ell} - v_{\ell}' \widehat{\Gamma}_{\ell, I}v_{\ell} \right| \ge \left|v_{\ell}' \Gammastar v_{\ell} \right |- \left | v_{\ell}' \widehat{\Gamma}_{\ell, I}v_{\ell} \right|\\
    & \ge \Lambda_{\min}\left(\Gammastar\right) \| v_{\ell}\|^2_2 - \left | v_{\ell}' \widehat{\Gamma}_{\ell, I}v_{\ell} \right| \ge \frac{\minCZ}{\mu_{\max; \ell}} \| v_{\ell}\|^2_2 - \left | v_{\ell}' \widehat{\Gamma}_{\ell, I}v_{\ell} \right| \\
    &=\frac{\minCZ}{\mu_{\max; \ell}} \| v_{\ell}\|^2_2 -  v_{\ell}' \widehat{\Gamma}_{\ell, I}v_{\ell}, 
\end{align*}
which implies that
\begin{align*}
    &P\left( \frac{\minCZ}{\mu_{\max; \ell}} \| v_{\ell}\|^2_2 -  v_{\ell}' \widehat{\Gamma}_{\ell, I}v_{\ell}\le\frac{1}{2}\frac{\minCZ}{\mu_{\max; \ell}}\left\{\|v_{\ell}\|_2^2+ \frac{1}{s_\ell} \|v_{\ell}\|_1^2\right\}, \ \forall v_{\ell} \in \mathbb{R}^p\right) \\ &\qquad\ge1- 2 e^{-c \,N\, (2\ell+1) \,\omega_\ell^{-2}+ 2s_\ell \min \left \{\log p, \log \left(  \frac{21 e p}{2s_\ell}\right)\right\}}.
\end{align*}
Hence, we can rearrange the terms in the previous relation to have
\begin{align*}
    &P\left(  v_{\ell}' \widehat{\Gamma}_{\ell, I}v_{\ell}\ge \frac{1}{2} \frac{\minCZ}{\mu_{\max; \ell}} \| v_{\ell}\|^2_2 - \frac{1}{2s_\ell} \frac{\minCZ}{\mu_{\max; \ell}}    \|v_{\ell}\|_1^2, \ \forall v_{\ell} \in \mathbb{R}^p\right) \\ &\qquad\ge1- 2 e^{-c \,N\, (2\ell+1) \,\omega_\ell^{-2}+ 2s_\ell \min \left \{\log p, \log \left(  \frac{21 e p}{2s_\ell}\right)\right\}},
\end{align*}
and we set
$$
s_\ell = \frac{c N (2\ell+1) \omega_\ell^{-2}}{4 \log (p \LN)} \ge 1, \qquad c = \frac{1}{2e},
$$
so that
\begin{align*}
    &P\left(  v_{\ell}' \widehat{\Gamma}_{\ell, I}v_{\ell}\ge \frac{1}{2} \frac{\minCZ}{\mu_{\max; \ell}} \| v_{\ell}\|^2_2 - \frac{1}{2s_\ell} \frac{\minCZ}{\mu_{\max; \ell}}    \|v_{\ell}\|_1^2, \ \forall v_{\ell} \in \mathbb{R}^p\right) \\ &\qquad\ge1- 2 e^{-c \,N\, (2\ell+1) \,\omega_\ell^{-2}\left ( 1- \frac{1}{2} \frac{\log p}{\log(pL)} \right)},
\end{align*}
Now, since by assumption $c N(2\ell+1) \omega_\ell^{-2} \ge 4 \log (pL)$ for any $\ell < L$, we obtain
\begin{align*}
    P\left(\bigcup_{\ell=0}^{\LN-1}\overline{A}_{\ell}\right) & \le \sum_{\ell=0}^{\LN-1} \left[1 - P \left(A_{\ell}\right)\right]\\
    &\le 2 L e^{- 4 \log(pL) \left ( 1- \frac{1}{2} \frac{\log p}{\log(pL)} \right)}\\
    & = 2 e^{-4\log(pL) + 2\log p  + \log L}\\
    & \le 2  e^{-2 \log(pL)}
\end{align*}
which concludes the proof.

\subsubsection{Proposition \ref{th::oracle_in}: Oracle Inequalities}
The goal of this lemma is to obtain an upper bound for $\|\Delta_\ell\|_2$ and $\|\Delta_\ell\|_1$.

The basic inequality in \eqref{eq:basic_ineq} states that 
\begin{align*}
\sqrt{N(2\ell+1)}\bm{\Delta}'_{\ell}\widehat{\Gamma}_{\ell, I}\bm{\Delta}_\ell &\le \frac{2\bm{\Delta}'_{\ell}X'_{\ell, I}(\mathbf{Y}_{\ell, I} - X_{\ell, I}\phiproxi_\ell)}{\sqrt{N(2\ell+1})} + \lambdaN \left( \|\phiproxi_{\ell}\|_1 - \|\phiproxi_{\ell} +v_{\ell}\|_1 \right),
\end{align*}
and hence, applying the deviation condition in \eqref{eq:dev_cond_cp}, it holds that 
\begin{align}
\sqrt{N(2\ell+1)}\bm{\Delta}'_{\ell}\widehat{\Gamma}_{\ell, I}\bm{\Delta}_\ell &\le 2 \|\bm{\Delta}_{\ell}\|_1 \left \| \frac{X'_{\ell, I}(\mathbf{Y}_{\ell, I} - X_{\ell, I}\phiproxi_\ell)}{\sqrt{N(2\ell+1)}} \right\|_\infty+ \lambdaN \left( \|\phiproxi_{\ell}\|_1 - \|\phiproxi_{\ell} +v_{\ell}\|_1 \right)\nonumber\\
&\le 2 \|\bm{\Delta}_{\ell}\|_1 \FN \sqrt{\log(p\LN)} + \lambdaN \left( \|\phiproxi_{\ell}\|_1 - \|\phiproxi_{\ell} +v_{\ell}\|_1 \right).\label{eq:proof_res}
\end{align} 
Now, let $J = \text{supp}\left(\phiproxi_{\ell}\right)=\left \{j_{1}, \ \dots \ , j_{\qproxi_{\ell}}\right \}$ be such that $|J| = \qproxi_{\ell}$, then $J^c = \{1,  \dots,  p\} \setminus J$, $\|\phiproxi_{\ell, J}\|_1 = \|\phiproxi_{\ell}\|_1$ and $\|\phiproxi_{\ell, J^c}\|_1=0.$ Consequently, it holds that
\begin{align*}
    \|\phiproxi_{\ell}+ \bm{\Delta}_{\ell}\|_1 &= \|\phiproxi_{\ell, J}+ \bm{\Delta}_{\ell,J}\|_1 +\|\bm{\Delta}_{\ell,J^c}\|_1 \\
    & \ge  \|\phiproxi_{\ell, J}\|_1 - \|\bm{\Delta}_{\ell,J}\|_1 +\|\bm{\Delta}_{\ell,J^c}\|_1 
\end{align*}
which implies
\begin{align*}
    \lambdaN \left( \|\phiproxi_{\ell}\|_1 - \|\phiproxi_{\ell}+ \bm{\Delta}_{\ell}\|_1 \right) & \le \lambdaN \left( \|\phiproxi_{\ell}\|_1 - \|\phiproxi_{\ell, J}\|_1 + \|\bm{\Delta}_{\ell,J}\|_1 -\|\bm{\Delta}_{\ell,J^c}\|_1  \right)\\ & \le \lambdaN \left( \|\bm{\Delta}_{\ell,J}\|_1 -\|\bm{\Delta}_{\ell,J^c}\|_1  \right).
\end{align*}
Having explicitly required that $ \lambdaN \ge 4 \FN\sqrt{\log(p\LN)}$, we obtain
\begin{align}
    0 \le \sqrt{N(2\ell+1)}\bm{\Delta}'_{\ell}\widehat{\Gamma}_{\ell, I}\bm{\Delta}_{\ell}& \le \frac{\lambdaN }{2}\|\bm{\Delta}_{\ell}\|_1 + \lambdaN \left( \|\bm{\Delta}_{\ell,J}\|_1 -\|\bm{\Delta}_{\ell,J^c}\|_1  \right) \nonumber\\
    & = \frac{\lambdaN }{2}\left( \|\bm{\Delta}_{\ell,J}\|_1 +\|\bm{\Delta}_{\ell,J^c}\|_1  \right) + \lambdaN \left( \|\bm{\Delta}_{\ell,J}\|_1 -\|\bm{\Delta}_{\ell,J^c}\|_1  \right) \nonumber\\ & = \frac{3\lambdaN }{2}\|\bm{\Delta}_{\ell,J}\|_1- \frac{\lambdaN }{2}\|\bm{\Delta}_{\ell,J^c}\|_1 \le \frac{3\lambdaN }{2}\|\bm{\Delta}_{\ell}\|_1. \label{eq:proof4.12_1}
\end{align}
This ensures that $\|\bm{\Delta}_{\ell,J^c}\|_1 \le 3\|\bm{\Delta}_{\ell,J}\|_1 $ and hence, adding $\|\bm{\Delta}_{\ell,J}\|_1 $ on both sides, that $\|\bm{\Delta}_{\ell}\|_1 \le 4\|\bm{\Delta}_{\ell,J}\|_1 $, which implies that 
$$
\|\bm{\Delta}_{\ell}\|_1 \le 4 \sqrt{\qproxi_{\ell}} \|\bm{\Delta}_{\ell}\|_2,
$$
from Cauchy-Schwartz inequality.\\
Now we use this property into the (RE) inequality in Proposition \ref{prop::re_compatibility}, keeping in mind that we specifically required that $\qproxi_{\ell}\tau_{\ell}\le \alpha_{\ell} /32$, and we obtain

\begin{align}
    \bm{\Delta}'_{\ell}\widehat{\Gamma}_{\ell, I}\bm{\Delta}_{\ell}& \ge \alpha_{\ell} \|\bm{\Delta}_{\ell}\|_2^2 -\tau _{\ell} \|\bm{\Delta}_{\ell}\|_1^2\ge \alpha_{\ell} \|\bm{\Delta}_{\ell}\|_2^2 - 16\qproxi_{\ell}\tau _{\ell} \|\bm{\Delta}_{\ell}\|_2^2 \nonumber\\
    & \ge \alpha_{\ell} \|\bm{\Delta}_{\ell}\|_2^2 - \frac{\alpha_{\ell}}{2} \|\bm{\Delta}_{\ell}\|_2^2 \ge \frac{\alpha_{\ell}}{2} \|\bm{\Delta}_{\ell}\|_2^2 .\label{eq:proof4.12_2}
\end{align}

Hence, combining \eqref{eq:proof4.12_1} and \eqref{eq:proof4.12_2}, we get
$$
\sqrt{N(2\ell+1}) \frac{\alpha_{\ell}}{2} \|\bm{\Delta}_{\ell}\|_2^2 \le \sqrt{N(2\ell+1}) \bm{\Delta}'_{\ell}\widehat{\Gamma}_{\ell, I}\bm{\Delta}_{\ell} \le \frac{3 \lambdaN}{2}\|\bm{\Delta}_{\ell}\|_1 \le 6\sqrt{\qproxi_{\ell}}\lambdaN\|\bm{\Delta}_{\ell}\|_2,
$$
which results in the following estimate for the norm of the error
$$
\sqrt{N(2\ell+1)}\frac{\alpha_{\ell}}{3} \|\bm{\Delta}_{\ell}\|_2^2 \le \lambdaN\|\bm{\Delta}_{\ell}\|_1 \le \lambdaN 4 \sqrt{\qproxi_{\ell}}\|\bm{\Delta}_{\ell}\|_2.
$$
As consequence,
\begin{align}
    &\sqrt{N(2\ell+1)} \|\bm{\Delta}_{\ell}\|_2 \le 12 \sqrt{\qproxi_\ell} \frac{\lambdaN}{\alpha_\ell} , \\
    &\sqrt{N(2\ell+1)}\|\bm{\Delta}_{\ell}\|_1 \le 4 \sqrt{\qproxi_{\ell} } \sqrt{N(2\ell+1)} \|\bm{\Delta}_{\ell}\|_2 \le  48\qproxi_\ell \frac{\lambdaN}{\alpha_\ell} , \label{eq:bound_l1} \\
    & N(2\ell+1)\bm{\Delta}'_{\ell}\widehat{\Gamma}_{\ell, I}\bm{\Delta}_{\ell} \le\frac{3 \lambdaN}{2} \sqrt{N(2\ell+1)}\|\bm{\Delta}_{\ell}\|_1 \le 72 \qproxi_\ell \frac{\lambdaN^2}{\alpha_\ell}.
\end{align}

\subsection{Proof of Theorem \ref{th::error_diff}: Prediction Error}

We are considering an interval $I$ that contains no true change point. Let $\bm{\Delta}_\ell = \bm{\widehat{\phi}}_{\ell, I}-\phiproxi_\ell$, for every fixed $\ell = 0, \dots, \LN-1$. 

Moreover, let 
$$\mathcal{N}_L = \{\ell < L : N(2\ell+1) \ge 32 \omega_\ell^2 \max \{\qproxi_\ell, 1\} \log(pL)\},$$ and 
$$\overline{\mathcal{N}}_L = \{\ell < L : N(2\ell+1) < 32 \omega_\ell^2 \max \{\qproxi_\ell, 1\} \log(pL)\}$$

For every fixed $\ell\in \mathcal{N}_L$, starting from the definition of $\bm{\widehat{\phi}}_{\ell, I}$ and using the result obtained in Proposition \ref{th::oracle_in}, we get the following chain of inequalities
\begin{align*}
   \|\mathbf{Y}_{\ell, I} - X_{\ell, I}\bm{\widehat{\phi}}_{\ell, I}\|_2^2 -  \left \| \mathbf{Y}_{\ell, I} - X_{\ell, I}\phiproxi_{\ell} \right\|_2^2  &\le - \lambdaN \sqrt{N(2\ell+1)}\|\bm{\widehat{\phi}}_{\ell, I}\|_1 + \lambdaN \sqrt{N(2\ell+1)}\|\phiproxi_{\ell}\|_1 \\
    &\le \lambdaN \sqrt{N(2\ell+1)} \|\bm{\Delta}_\ell\|_1 \le 48\qproxi_{\ell} \frac{\lambdaN^2}{\alpha_\ell}.
\end{align*}
We can get the other side of the bound using the deviation condition, the definition on $\lambdaN$, and the result in Proposition \ref{th::oracle_in}, as follows
\begin{align*}
     \|\mathbf{Y}_{\ell, I} - X_{\ell, I}\phiproxi_{\ell}\|_2^2  - \|\mathbf{Y}_{\ell, I} - X_{\ell, I}\bm{\widehat{\phi}}_{\ell, I}\|_2^2&=
-\|X_{\ell, I}\bm{\Delta}_\ell\|_2^2 + 2\bm{\Delta}_\ell'X'_{\ell, I}(\mathbf{Y}_{\ell, I} - X_{\ell, I}\phiproxi_\ell) \\
    &\le 2 \|\bm{\Delta}_\ell\|_1 \left \|X'_{\ell, I}(\mathbf{Y}_{\ell, I} - X_{\ell, I}\phiproxi_\ell) \right \|_{\infty}   \\&\le \frac{\lambdaN}{2}    \sqrt{N(2\ell+1)} \|\bm{\Delta}_\ell\|_1 \le 
 48\qproxi_{\ell} \frac{\lambdaN^2}{\alpha_\ell}.
\end{align*}

Let us consider now $\ell \in \overline{\mathcal{N}}_L$.
Starting from the definition of $\bm{\widehat{\phi}}_{\ell, I}$ and we get the following chain of inequalities
\begin{align*}
   & \|\mathbf{Y}_{\ell, I} - X_{\ell, I}\bm{\widehat{\phi}}_{\ell, I}\|_2^2 -  \left \| \mathbf{Y}_{\ell, I} - X_{\ell, I}\phiproxi_{\ell} \right\|_2^2  \le  \lambdaN \sqrt{N(2\ell+1)}\|\phiproxi_{\ell}\|_1   \\
   & \le \lambdaN \omega_\ell \sqrt{32 \max\{\qproxi_\ell, 1\} \log(p\LN)}\|\phiproxi_{\ell}\|_1 \\
   &  \le \lambdaN \omega_\ell \qproxi_{\ell} \sqrt{32C_\Phi \log(p\LN) }\\
   &\le 32p\sqrt{C_\Phi} \qproxi_\ell \frac{\lambdaN^2}{\alpha_\ell},
\end{align*}
where for the last inequality we used the fact that $\omega_\ell \sqrt{\log (pL) } \le \sqrt{32}p \lambdaN/\alpha_\ell$.
On the other hand, since the segment $I$ contains no true change point, it holds that
$$
\left \| \mathbf{Y}_{\ell, I} - X_{\ell, I}\phiproxi_{\ell} \right\|_2^2 = \|\mathbf{E}_{\ell, I} \|_2^2;
$$
moreover, for any $\eta > 0$,
\begin{equation}
    P\left ( \| \mathbf{E}_{\ell, I} \|^2_2 > \mathbb{E} \| \mathbf{E}_{\ell, I} \|^2_2 + \maxCZ  N(2\ell+1) \eta   \right) \le     P\left (\left | \| \mathbf{E}_{\ell, I} \|^2_2 - \mathbb{E} \| \mathbf{E}_{\ell, I} \|^2_2 \right |>  \maxCZ  N(2\ell+1) \eta  \right)
\end{equation}
with $\mathbb{E} \| \mathbf{E}_{\ell, I} \|^2_2  = \maxCZ N(2\ell+1)$. By adapting the deviation bounds with $\eta=\eta_\ell = c_0 \frac{\log(p\LN)}{N(2\ell+1)} \ge 1$ and using the constraint on $\lambdaN$, we obtain
\begin{align*}
     &\|\mathbf{Y}_{\ell, I} - X_{\ell, I}\phiproxi_{\ell}\|_2^2  - \|\mathbf{Y}_{\ell, I} - X_{\ell, I}\bm{\widehat{\phi}}_{\ell, I}\|_2^2  \le  \| \mathbf{E}_{\ell, I} \|_2^2 \\ &\le 2c_0 \maxCZ \log(pL) \le  \lambdaN \sqrt{\log(pL)},
\end{align*}
with probability at least $1 - c_1 e^{-c_2 \log(pL)}$, for some absolute constants $c_1, c_2>0$.

The just obtained results lead to 
\begin{align*}
& \left |\sum_{\ell=0}^{\LN-1}\| \mathbf{Y}_{\ell, I} - X_{\ell, I}\bm{\widehat{\phi}}_{\ell, I}\|_2^2 -   \sum_{\ell=0}^{\LN-1} \| \mathbf{Y}_{\ell, I} - X_{\ell, I}\phiproxi_{\ell} \|_2^2 \right | \nonumber \\&\le  \max \{48, 32p\} \max\{C_\Phi, 1\}
\sum_{\ell=0}^{\LN-1} \max \{\qproxi_\ell,1\} \frac{\lambdaN^2}{\alpha_\ell}.
\end{align*}
To conclude this proof note moreover that, for any set of coefficients $\{\bm{\beta}_\ell, \ell = 0, \dots, \LN-1\}, \bm{\beta}_\ell \in \mathbb{R}^p$, we can define an upper bound to the prediction error obtained using these coefficients 
\begin{align*}
& \sum_{\ell=0}^{\LN-1} \| \mathbf{Y}_{\ell, I} - X_{\ell, I}\phiproxi_{\ell} \|_2^2 - \sum_{\ell=0}^{\LN-1}\| \mathbf{Y}_{\ell, I} - X_{\ell, I}\bm{\beta}_\ell\|_2^2 \nonumber \\&\le  \max \{48, 32p\} \max\{C_\Phi, 1\}
\sum_{\ell=0}^{\LN-1} \max \{\qproxi_\ell,1\} \frac{\lambdaN^2}{\alpha_\ell}.
\end{align*}

\subsection{Proof of Proposition \ref{pr:4cases}} 

\subsubsection{Case 1}  We prove by contradiction assuming that
% $$\min \{|I_1|,|I_2|\} > c_4\left( \frac{\gamma + 48 \lambdaN^2 \sum_{\ell=0}^{\LN-1} \frac{\sumq}{\alpha_{\ell}}}{\kappa_\LN}\right).$$
$$\min \{|I_1|,|I_2|\} > C_\epsilon \left( \frac{\gamma+ 3\brutti}{\kappa_\LN}\right).$$

We first observe that
%\begin{align*}
%    &\min \{|I_1|,|I_2|\} > c_4\left( \frac{\gamma + 48 \lambdaN^2 \sum_{\ell=0}^{\LN-1} \frac{\sumq}{\alpha_{\ell}}}{\kappa_\LN}\right) \ge c_4\left( \frac{\gamma}{\kappa_\LN} + \frac{ 48 \lambdaN^2 \sumq \LN}{4 C_{\Phi}\alpha^2_\LN \sum_{\ell = 0}^{\LN-1}(2\ell+1)}\right)\\
%    & \ge c_4\left( \frac{\gamma}{\kappa_\LN} + \frac{48\cdot 4\cdot 10}{C_{\Phi}} \frac{\FN^2}{\alpha^2_\LN} \frac{\log(p\LN) p\LN}{\sum_{\ell = 0}^{\LN-1}(2\ell+1)}\right)
%\end{align*}
\begin{align} \label{eq:contr_12}
    &\min \{|I_1|,|I_2|\} > C_\epsilon \left( \frac{\gamma+ 3\brutti}{\kappa_\LN}\right) > 32 \,\underset{\ell < L}{\max} \left\{\frac{\omega_\ell^2}{2\ell+1} \right\}\qN \log(p\LN),
\end{align}
since $\kappa_\LN \le 4 C_{\Phi} \sum_{\ell = 0}^{\LN-1}(2\ell+1) \alpha_\ell $ and
$$\gamma \ge 4 \brutti, \qquad \lambdaN \ge 4 \FN \sqrt{\log p\LN} \cdot \sqrt{\frac{32p^2 \sum_{\ell=0}^{\LN-1} (2\ell+1)\alpha_\ell}{\alpha_\ell}}.
$$
Now, since $|I_j| \ge 32 \,\underset{\ell < L}{\max} \left\{\frac{\omega_\ell^2}{2\ell+1} \right\}\qN \log(p\LN)$, $j =1,2$, the deviation condition and the compatibility condition (Propositions \ref{prop::dev_condition} and \ref{prop::re_compatibility}) are satisfied for both intervals with probability at least $1 - c_1 e^{-c_2 \log(pL)}$ for some absolute constants $c_1, c_2>0$.

Without loss of generality, we assume $I_1 \subset [\eta_1, \eta_2)$ and $I_2 \subset [\eta_2, \eta_3)$.
It follows from \eqref{eq:cond_lemma12} and \eqref{eq:prederr_phihat} in Theorem \ref{th::error_diff} that, with probability at least $1 - c_1 e^{-c_2 \log(pL)}$, for some absolute constants $c_1, c_2>0$,
\begin{align*}
    &\sum_{\ell = 0}^{\LN-1}\mathcal{L}_{\ell}(I_1, \phihat_{\ell,I}) + \sum_{\ell = 0}^{\LN-1}\mathcal{L}_{\ell}(I_2, \phihat_{\ell,I}) \le \sum_{\ell = 0}^{\LN-1}\mathcal{L}_{\ell}(I, \phihat_{\ell,I})\\
    &\le \sum_{\ell = 0}^{\LN-1}\mathcal{L}_{\ell}(I_1, \phihat_{\ell,I_1}) + \sum_{\ell = 0}^{\LN-1}\mathcal{L}_{\ell}(I_2, \phihat_{\ell,I_2}) + \gamma \\
    & \le  \sum_{\ell = 0}^{\LN-1}\mathcal{L}_{\ell}(I_1, \bm{\phi}_{\ell}^{(1)}) + \sum_{\ell = 0}^{\LN-1}\mathcal{L}_{\ell}(I_2, \bm{\phi}_{\ell}^{(2)}) + \gamma  + 48  \sum_{\ell=0}^{\LN-1} \lambdaN^2 \frac{q^{(1)}_{\ell}}{\alpha_{\ell}} + 48  \sum_{\ell=0}^{\LN-1} \lambdaN^2\frac{q^{(2)}_{\ell}}{\alpha_{\ell}}.
\end{align*}
%where $\phiproxi_{\ell, I_j}$ is defined as in \eqref{eq:phiproxi}, considering the set $I_j$ instead of the entire interval $I$, for $j =1,2$ and $\ell = 0, \dots, \LN-1$.

Let us define $\bm{\Delta}_{\ell, j} = \phihat_{\ell,I} - \bm{\phi}_{\ell}^{(j)}$, for $j =1,2$, and using similar computation to \eqref{eq:proof_res}, we get
\begin{align}
& \sum_{\ell = 0}^{\LN-1}|I_1|(2\ell+1)\bm{\Delta}'_{\ell, 1}\widehat{\Gamma}_{\ell, I_1}\bm{\Delta}_{\ell, 1} +  \sum_{\ell = 0}^{\LN-1}|I_2|(2\ell+1)\bm{\Delta}'_{\ell, 2}\widehat{\Gamma}_{\ell, I_2}\bm{\Delta}_{\ell, 2}\nonumber \\ 
& \le 2\sum_{\ell = 0}^{\LN-1} \|\bm{\Delta}_{\ell, 1}\|_1 \FN \sqrt{|I_1|(2\ell+1)\log(p\LN)} +  2 \sum_{\ell = 0}^{\LN-1}\|\bm{\Delta}_{\ell, 2}\|_1 \FN \sqrt{|I_2|(2\ell+1)\log(p\LN)}  \nonumber \\ 
&+ \gamma  + 48 \sum_{\ell=0}^{\LN-1}  \lambdaN^2 \frac{q_{\ell}^{(1)}+ q_{\ell}^{(2)}}{\alpha_{\ell}}\nonumber \\ 
& \le \sum_{\ell = 0}^{\LN-1} \frac{\lambdaN}{2}\|\bm{\Delta}_{\ell, 1}\|_1 \sqrt{|I_1|(2\ell+1)} + \sum_{\ell = 0}^{\LN-1} \frac{\lambdaN}{2}\|\bm{\Delta}_{\ell, 2}\|_1 \sqrt{|I_2|(2\ell+1)}\nonumber \\ 
&+ \gamma  + 48  \sum_{\ell=0}^{\LN-1} \lambdaN^2 \frac{q_{\ell}^{(1)}+ q_{\ell}^{(2)}}{\alpha_{\ell}}. \label{eq:gamma_scomp}
\end{align} 
Since
\begin{equation}\label{eq:delta_scomp}
    \|\bm{\Delta}_{\ell, j}\|_1 \le  \sqrt{p}\|\bm{\Delta}_{\ell, j}\|_2, \quad j=1,2,
\end{equation}
we have that
\begin{align}
    \eqref{eq:gamma_scomp} &\le \sum_{\ell = 0}^{\LN-1} \frac{\lambdaN}{2}\sqrt{p}\|\bm{\Delta}_{\ell, 1}\|_2 \sqrt{|I_1|(2\ell+1)} + \sum_{\ell = 0}^{\LN-1} \frac{\lambdaN}{2}\sqrt{p}\|\bm{\Delta}_{\ell, 2}\|_2 \sqrt{|I_2|(2\ell+1)}\nonumber \\ 
& + \gamma  + 48 \sum_{\ell=0}^{\LN-1} \lambdaN^2 \frac{q_{\ell}^{(1)}+ q_{\ell}^{(2)}}{\alpha_{\ell}} \nonumber\\
    & \le   \frac{1}{256} \sum_{\ell = 0}^{\LN-1} \alpha_{\ell} |I_1|(2\ell+1)\|\bm{\Delta}_{\ell, 1}\|^2_2 + \frac{1}{256} \sum_{\ell = 0}^{\LN-1} \alpha_{\ell} |I_2|(2\ell+1)\|\bm{\Delta}_{\ell, 2}\|^2_2 \nonumber \\ 
&+ \gamma  + 48 \sum_{\ell=0}^{\LN-1}  \lambdaN^2 \frac{q_{\ell}^{(1)}+ q_{\ell}^{(2)}}{\alpha_{\ell}} + 32p \sum_{\ell = 0}^{\LN-1}    \frac{\lambdaN^2}{\alpha_{\ell}} \nonumber\\
    & \le  \frac{1}{256} \sum_{\ell = 0}^{\LN-1} \alpha_{\ell} |I_1|(2\ell+1)\|\bm{\Delta}_{\ell, 1}\|^2_2 + \frac{1}{256} \sum_{\ell = 0}^{\LN-1} \alpha_{\ell} |I_2|(2\ell+1)\|\bm{\Delta}_{\ell, 2}\|^2_2 \nonumber \\ 
&+ \gamma  + 3\brutti, \label{eq:re_gamma}
\end{align}
where the second inequality is obtained using $2a_\ell b_\ell \le a_\ell^2 + b_\ell^2$ with
$$
a_\ell = 4 \lambdaN\frac{\sqrt{p}}{\sqrt{\alpha_{\ell}}}, \qquad b_\ell =\sqrt{ \alpha_\ell |I_j|(2\ell +1)}\|\bm{\Delta}_{\ell, j}\|_2/16.
$$
From the Compatibility Condition (Propositions \ref{prop::re_compatibility}) it holds
\begin{align}\label{eq:re_j}
 \sum_{\ell = 0}^{\LN-1}|I_j|(2\ell+1)\alpha_{\ell} \|\bm{\Delta}_{\ell, j}\|^2_2\le \sum_{\ell = 0}^{\LN-1}|I_j|(2\ell+1)\bm{\Delta}'_{\ell, 1}\widehat{\Gamma}_{\ell, I_j}\bm{\Delta}_{\ell, j} + \sum_{\ell = 0}^{\LN-1}|I_j|(2\ell+1)\tau_{\ell,j} \|\bm{\Delta}_{\ell, j}\|^2_1,
\end{align} 
where $\tau_{\ell,j} = \alpha_{\ell}  \omega_\ell^2 \frac{\log(p\LN)}{|I_j|(2\ell+1)}$, as stated in \eqref{eq:alpha_tau_def}.
Considering the last term of the just stated inequality and using \eqref{eq:delta_scomp}, we obtain
\begin{align*}
    & \sum_{\ell = 0}^{\LN-1}|I_j|(2\ell+1)\tau_{\ell,j} \|\bm{\Delta}_{\ell, j}\|^2_1 \le \sum_{\ell = 0}^{\LN-1}|I_j|(2\ell+1)\tau_{\ell,j}  p\|\bm{\Delta}_{\ell, j}\|^2_2\\
    %& = 2  \sum_{\ell = 0}^{\LN-1}|I_j|(2\ell+1)\tau_{\ell,j} \qproxi_{\ell}\|\bm{\Delta}_{\ell, j}\|^2_2 + 2 (48)^2 \sum_{\ell = 0}^{\LN-1}\tau_{\ell,j} (\qproxi_\ell)^2 \frac{\lambdaN^2}{\alpha^2_\ell }\\
    %& = 2  \sum_{\ell = 0}^{\LN-1}(2\ell+1)\alpha_{\ell}  \max \left\{  \omegaN^{2}, 1\right\} \log(p\LN) \qproxi_{\ell} \| \bm{\Delta}_{\ell, j} \|^2_2 + 2 (48)^2 \sum_{\ell = 0}^{\LN-1} \max \left\{  \omegaN^{2}, 1\right\}\frac{\log(p\LN)}{|I_j|} (\qproxi_\ell)^2 \frac{\lambdaN^2}{\alpha_\ell }\\
    &\le \frac{1}{16} |I_j| \sum_{\ell = 0}^{\LN-1}(2\ell+1)\alpha_{\ell} \| \bm{\Delta}_{\ell, j} \|^2_2 ,
    %+ \frac{ (48)^2}{16} \sum_{\ell = 0}^{\LN-1}  \qproxi_\ell \frac{\lambdaN^2}{\alpha_\ell },
\end{align*}
since $p\tau_{\ell,j}\le \alpha_{\ell} /32$. It follows that
\begin{align}
    & \sum_{\ell = 0}^{\LN-1}|I_1|(2\ell+1)\tau_{\ell,1} \|\bm{\Delta}_{\ell, 1}\|^2_1 + \sum_{\ell = 0}^{\LN-1}|I_2|(2\ell+1)\tau_{\ell,2} \|\bm{\Delta}_{\ell, 2}\|^2_1\nonumber\\
    &\le \frac{1}{16} |I_1| \sum_{\ell = 0}^{\LN-1}(2\ell+1)\alpha_{\ell} \| \bm{\Delta}_{\ell, 1} \|^2_2  + \frac{1}{16} |I_2| \sum_{\ell = 0}^{\LN-1}(2\ell+1)\alpha_{\ell} \| \bm{\Delta}_{\ell, 2} \|^2_2.\label{eq:re_tau}
\end{align}
From \eqref{eq:re_j}, thanks to the results in \eqref{eq:re_gamma} and \eqref{eq:re_tau}, we obtain the following upper bound
\begin{align*}
 &\sum_{\ell = 0}^{\LN-1}|I_1|(2\ell+1)\alpha_{\ell} \|\bm{\Delta}_{\ell, 1}\|^2_2 + \sum_{\ell = 0}^{\LN-1}|I_1|(2\ell+1)\alpha_{\ell} \|\bm{\Delta}_{\ell, 1}\|^2_2  \\
 &\le \frac{1}{256} \sum_{\ell = 0}^{\LN-1} \alpha_{\ell} |I_1|(2\ell+1)\|\bm{\Delta}_{\ell, 1}\|^2_2 + \frac{1}{256} \sum_{\ell = 0}^{\LN-1} \alpha_{\ell} |I_2|(2\ell+1)\|\bm{\Delta}_{\ell, 2}\|^2_2\\ 
 & + \frac{1}{16} |I_1| \sum_{\ell = 0}^{\LN-1}(2\ell+1)\alpha_{\ell} \| \bm{\Delta}_{\ell, 1} \|^2_2 + \frac{1}{16} |I_2| \sum_{\ell = 0}^{\LN-1}(2\ell+1)\alpha_{\ell} \| \bm{\Delta}_{\ell, 2} \|^2_2\\ 
 &+ \gamma  + 3\brutti  \\
&= \frac{17}{256} \sum_{\ell = 0}^{\LN-1} \alpha_{\ell} |I_1|(2\ell+1)\|\bm{\Delta}_{\ell, 1}\|^2_2 + \frac{17}{256}  \sum_{\ell = 0}^{\LN-1} \alpha_{\ell} |I_2|(2\ell+1)\|\bm{\Delta}_{\ell, 2}\|^2_2 \\ 
 & + \gamma + 3\brutti.
\end{align*}
The above equation can be written as
\begin{align*}
    \frac{239}{256}\sum_{\ell = 0}^{\LN-1}|I_1|(2\ell+1)\alpha_{\ell} \|\bm{\Delta}_{\ell, 1}\|^2_2 &+ \frac{239}{256}\sum_{\ell = 0}^{\LN-1}|I_2|(2\ell+1)\alpha_{\ell} \|\bm{\Delta}_{\ell, 2}\|^2_2 \le  \gamma + 3\brutti.
\end{align*}
The lower bound is obtained as follows
\begin{align*}
     & \sum_{\ell = 0}^{\LN-1}|I_1|(2\ell+1)\alpha_{\ell} \|\bm{\Delta}_{\ell, 1}\|^2_2 +  \sum_{\ell = 0}^{\LN-1}|I_2|(2\ell+1)\alpha_{\ell} \|\bm{\Delta}_{\ell, 2}\|^2_2 \\
     &\ge \sum_{\ell = 0}^{\LN-1}\underset{v_{\ell} \in \mathbb{R}^p}{\inf}\left( |I_1| \| \bm{\phi}_\ell^{(1)}-v_\ell\|_2^2 + |I_2| \| \bm{\phi}_\ell^{(2)}-v_\ell\|_2^2\right)(2\ell+1)\alpha_{\ell} \\
     & \ge \frac{|I_1||I_2|}{|I|} \sum_{\ell = 0}^{\LN-1}(2\ell+1)\alpha_{\ell} \| \bm{\phi}_\ell^{(2)} - \bm{\phi}_\ell^{(1)} \|^2_2 \\
     & \ge \frac{\min \{|I_1|,|I_2|\}}{2} \sum_{\ell = 0}^{\LN-1}(2\ell+1)\alpha_{\ell} \| \bm{\phi}_\ell^{(2)} - \bm{\phi}_\ell^{(1)} \|^2_2 .
\end{align*}
Combining the upper and lower bounds we get
%\begin{align*}
%&\frac{239}{256}\frac{\min \{|I_1|,|I_2|\}}{2} \sum_{\ell = 0}^{\LN-1}(2\ell+1)\alpha_{\ell} \| \bm{\phi}_\ell(\eta) - \bm{\phi}_\ell(\eta-1) \|^2_2 \le \gamma + 48 \lambdaN^2 \sum_{\ell=0}^{\LN-1} \frac{q_{\ell}^{(1)}+ q_{\ell}^{(2)}+8\qproxi_{\ell}}{\alpha_{\ell}}\\
%&\min \{|I_1|,|I_2|\} \sum_{\ell = 0}^{\LN-1}(2\ell+1)\alpha_{\ell} \| \bm{\phi}_\ell(\eta) - \bm{\phi}_\ell(\eta-1) \|^2_2 \le \frac{2\cdot256}{239}\left( \gamma + 48 \lambdaN^2 \sum_{\ell=0}^{\LN-1} \frac{q_{\ell}^{(1)}+ q_{\ell}^{(2)}+8\qproxi_{\ell}}{\alpha_{\ell}}\right)\\
%&\min \{|I_1|,|I_2|\} \le \frac{2\cdot256}{239}\left( \frac{\gamma + 48 \lambdaN^2 \sum_{\ell=0}^{\LN-1} \frac{q_{\ell}^{(1)}+ q_{\ell}^{(2)}+8\qproxi_{\ell}}{\alpha_{\ell}}}{\sum_{\ell = 0}^{\LN-1}(2\ell+1)\alpha_{\ell} \| \bm{\phi}_\ell(\eta) - \bm{\phi}_\ell(\eta-1) \|^2_2}\right)\\
%&\min \{|I_1|,|I_2|\} \le c_4\left( \frac{\gamma + 48 \lambdaN^2 \sum_{\ell=0}^{\LN-1} \frac{\sumq}{\alpha_{\ell}}}{\kappa_\LN}\right),
%\end{align*}

\begin{align*}
&\frac{239}{256}\frac{\min \{|I_1|,|I_2|\}}{2} \sum_{\ell = 0}^{\LN-1}(2\ell+1)\alpha_{\ell} \| \bm{\phi}_\ell^{(2)} - \bm{\phi}_\ell^{(1)} \|^2_2 \le \gamma + 3\brutti,\\
&\min \{|I_1|,|I_2|\} \sum_{\ell = 0}^{\LN-1}(2\ell+1)\alpha_{\ell} \| \bm{\phi}_\ell^{(2)} - \bm{\phi}_\ell^{(1)} \|^2_2 \le \frac{2\cdot256}{239}\left( \gamma + 3\brutti\right),\\
%&\min \{|I_1|,|I_2|\} \le \frac{2\cdot256}{239}\left( \frac{\gamma + 3\brutti}{\sum_{\ell = 0}^{\LN-1}(2\ell+1)\alpha_{\ell} \| \bm{\phi}_\ell(\eta) - \bm{\phi}_\ell(\eta-1) \|^2_2}\right)\\
%&\min \{|I_1|,|I_2|\} \le \frac{2\cdot256}{239}\left( \frac{\gamma + 3\brutti}{\kappa_\LN}\right),\\
&\min \{|I_1|,|I_2|\} \le C_\epsilon \left( \frac{\gamma+ 3\brutti}{\kappa_\LN}\right),
\end{align*}
setting $C_\epsilon = 3$ and $\kappa_\LN = \min_{k=0,\dots,K}\sum_{\ell = 0}^{\LN-1}(2\ell+1)\alpha_{\ell} \| \bm{\phi}_\ell^{(k+1)} - \bm{\phi}_\ell^{(k)} \|^2_2 $. This contradicts \eqref{eq:contr_12}, proving the proposition.\\

\subsubsection{Case 2}  Without loss of generality, we assume $I_1 \subset [\eta_1, \eta_2)$, $I_2 = [\eta_2, \eta_3)$  and $I_3 \subset [\eta_3, \eta_4)$.

We need to prove that 
$$
|I_1| \le C_\epsilon \left( \frac{2\gamma+ 5\brutti }{\kappa_\LN}\right),
$$
and we proceed by contradiction, assuming that
\begin{equation} \label{eq:contr_13}
    |I_1| > C_\epsilon \left( \frac{2\gamma+ 5\brutti}{\kappa_\LN}\right)>32\, \underset{\ell<L}{\max} \left\{\frac{\omega_\ell}{2\ell+1} \right\} \qN \log(p\LN).
\end{equation}
We have then
\begin{align*}
    &\sum_{\ell = 0}^{\LN-1}\mathcal{L}_{\ell}(I_1, \phihat_{\ell,I}) + \sum_{\ell = 0}^{\LN-1}\mathcal{L}_{\ell}(I_2, \phihat_{\ell,I}) + \sum_{\ell = 0}^{\LN-1}\mathcal{L}_{\ell}(I_3, \phihat_{\ell,I})
    \le   \sum_{\ell = 0}^{\LN-1}\mathcal{L}_{\ell}(I, \phihat_{\ell,I}) \\&\le \sum_{\ell = 0}^{\LN-1}\mathcal{L}_{\ell}(I_1, \phihat_{\ell,I_1})  + \sum_{\ell = 0}^{\LN-1}\mathcal{L}_{\ell}(I_2, \phihat_{\ell,I_2}) + \sum_{\ell = 0}^{\LN-1}\mathcal{L}_{\ell}(I_3, \phihat_{\ell,I_3}) + 2\gamma\\
    &\le\sum_{\ell = 0}^{\LN-1}\mathcal{L}_{\ell}(I_1, \bm{\phi}_{\ell}^{(1)})  + \sum_{\ell = 0}^{\LN-1}\mathcal{L}_{\ell}(I_2, \bm{\phi}_{\ell}^{(2)}) + \sum_{\ell = 0}^{\LN-1}\mathcal{L}_{\ell}(I_3, \bm{\phi}_{\ell}^{(3)}) + 2\gamma + 48  \sum_{\ell=0}^{\LN-1} \lambdaN^2\frac{q^{(1)}_{\ell} + q^{(2)}_{\ell}}{\alpha_{\ell}} +\brutti,
\end{align*}
which implies
\begin{align*}
    &\sum_{\ell = 0}^{\LN-1}\mathcal{L}_{\ell}(I_1, \phihat_{\ell,I}) + \sum_{\ell = 0}^{\LN-1}\mathcal{L}_{\ell}(I_2, \phihat_{\ell,I}) \\
    &\le \sum_{\ell = 0}^{\LN-1}\mathcal{L}_{\ell}(I_1, \bm{\phi}_{\ell}^{(1)})  + \sum_{\ell = 0}^{\LN-1}\mathcal{L}_{\ell}(I_2, \bm{\phi}_{\ell}^{(2)})  + 2\gamma + 48 \sum_{\ell=0}^{\LN-1} \lambdaN^2\frac{q^{(1)}_{\ell} + q^{(2)}_{\ell}}{\alpha_{\ell}}+2\brutti. 
\end{align*}

It follows from identical arguments in the proof of \textbf{Case 1} of Proposition \ref{pr:4cases} that 
$$
\min \{|I_1|,|I_2|\} \le C_\epsilon \left( \frac{2\gamma+ 5\brutti}{\kappa_\LN}\right). 
$$
Since $|I_2|>\Delta$ by definition (see \eqref{eq:delta}), it follows from Condition \ref{cond:snr} that
$$
|I_1| \le C_\epsilon \left( \frac{2\gamma+5\brutti}{\kappa_\LN}\right), 
$$
which contradicts \eqref{eq:contr_13}.

\subsubsection{Case 3}  Without loss of generality we assume $I \subset [\eta_1, \eta_2)$. It follows from Theorem \ref{th::error_diff} that, with probability at least $1 - c_1 e^{-c_2 \log(pL)}$, for some absolute constants $c_1, c_2>0$,
\begin{align*}
&\max_{J \in \{[s,b),[b, e],I\}} \left |\sum_{\ell = 0}^{\LN-1}\mathcal{L}_{\ell}(J, \phihat_{\ell,J}) - \sum_{\ell = 0}^{\LN-1}\mathcal{L}_{\ell}(J, \bm{\phi}_\ell^{(1)}) \right | \le   \brutti < \gamma/4
\end{align*}

Let $B = [b-p, b+p-1]$,
$$
\sum_{\ell = 0}^{\LN-1}\mathcal{L}_{\ell}(B, \bm{\phi}_\ell^{(1)}) = \sum_{\ell = 0}^{\LN-1} \|\mathbf{E}_{\ell,B}\|_2^2 \le \brutti.
$$
Then, the final claim holds due to the choice of $\gamma$.\\

\subsubsection{Case 4} 

Without loss of generality, we assume 
\(I_1 = [\eta_1, \eta_2)\) and \(I_2 = [\eta_2, \eta_3)\). Choosing \(\lambda_\ell \ge 4 \, F_N \sqrt{\log(p \, \LN)}\), for \(\ell = 0, \dots, \LN-1\), allows us to apply Proposition \ref{th::oracle_in} and Theorem \ref{th::error_diff}.

By the definition of \(\Delta\) (see \eqref{eq:delta}), Condition \ref{cond:snr}, and the choice of the penalty parameters, it follows that, for any \(j \in \{1, \dots, J-1\}\),

\begin{equation}
    |I_j| \ge \Delta \ge  C_\epsilon \left( \frac{J\gamma+ (2J+1)\brutti}{\kappa_\LN}\right) > 32 \underset{\ell < L}{\max} \left\{\frac{\omega_\ell^2}{2\ell+1} \right\} \qN \log(p\LN). \label{eq:delta_num}
\end{equation}
The proposition is proved by contradiction by showing that \eqref{eq:delta_num} does not hold for at least one $j \in \{1, \dots, J-1\}$. We will focus our attention on $j=1,2$. 
Hence, assume that
\begin{equation*}
    \sum_{\ell = 0}^{\LN-1}\mathcal{L}_{\ell}(I, \phihat_{\ell,I}) \le \sum_{j=0}^{J}\sum_{\ell = 0}^{\LN-1}\mathcal{L}_{\ell}(I_j, \phihat_{\ell,I_j}) + J\gamma.
\end{equation*}

Let us define $\bm{\Delta}_{\ell, j} = \phihat_{\ell,I} - \bm{\phi}_{\ell}^{(j)}$, for $j =0,\dots, J.$ It follows from Theorem \ref{th::error_diff} that, with probability at least $1 - c_1 e^{-c_2 \log(pL)}$, for some absolute constants $c_1, c_2>0$,
\begin{align*}
     &\sum_{j=0}^{J}\sum_{\ell = 0}^{\LN-1}\mathcal{L}_{\ell}(I_j, \phihat_{\ell,I}) \le \sum_{\ell = 0}^{\LN-1}\mathcal{L}_{\ell}(I, \phihat_{\ell,I}) \\& \le \sum_{j=0}^{J}\sum_{\ell = 0}^{\LN-1}\mathcal{L}_{\ell}(I_j, \phihat_{\ell,I_j}) + J\gamma \\
     & \le \sum_{j=0}^{J}\sum_{\ell = 0}^{\LN-1}\mathcal{L}_{\ell}(I_j, \phi_{\ell}^{(j)}) + J\gamma + 48  \sum_{\ell=0}^{\LN-1} \lambdaN^2 \frac{q^{(1)}_{\ell} + q^{(2)}_{\ell}}{\alpha_{\ell}}\\ &+(J-1)\brutti,  
\end{align*}
%where $\mathcal{J} = \{0, 3, \dots, J\}$, 
which implies that
\begin{align}
    & \sum_{j = 1}^{2}\sum_{\ell = 0}^{\LN-1}|I_j|(2\ell+1)\bm{\Delta}'_{\ell, j}\widehat{\Gamma}_{\ell, I_j}\bm{\Delta}_{\ell, j}\nonumber \\
    & \le 2\sum_{j = 1}^{2}\sum_{\ell = 0}^{\LN-1} \|\bm{\Delta}_{\ell, j}\|_1 \FN \sqrt{|I_j|(2\ell+1)\log(p\LN)} +  J \gamma  + 48 \sum_{\ell=0}^{\LN-1} \lambdaN^2 \frac{q^{(1)}_{\ell} + q^{(2)}_{\ell}}{\alpha_{\ell}} \nonumber\\ 
    &+2(J-1) \brutti. \label{eq:step_lemma15}
\end{align}

Following the same steps of the proof of \textbf{Case 2} of Proposition \ref{pr:4cases}, \eqref{eq:step_lemma15} leads to the following result 
$$
\min \{|I_1|,|I_2|\} \le C_\epsilon \left( \frac{J\gamma+ (2J+1)\brutti}{\kappa_\LN}\right),
$$
which contradicts \eqref{eq:delta_num} and proves the proposition.

\subsection{Proof of Proposition \ref{pr:k=k}}
Consider the true partition 
$$
I_k = [\eta_k, \eta_{k+1}) \qquad k = 0, \dots, K,
$$
and the estimated partition
$$
\hat{I}_k = [\hat{\eta}_k, \hat{\eta}_{k+1}) \qquad k = 0, \dots, \hat{K},
$$
with estimated change points $1=\hat{\eta}_0 < \hat{\eta}_1 < \dots < \hat{\eta}_{\hat{K}} < n < \hat{\eta}_{\hat{K}+1}=n+1$.

Define moreover the sequence $\{w_k, k = 0, \dots, D+1\}$ obtained by sorting the set $\{\eta_k\}_{k=1}^K \cup \{\hat{\eta}_k\}_{k=0}^{\hat{K}+1}$. Note that $D\le K + \hat K$. 
Consider the partition
$$
W_k = [w_k,w_{k+1}), \qquad k = 0, \dots, D.
$$
We want to prove the following chain of inequalities
\begin{align}
    \sum_{k=0}^K\sum_{\ell = 0}^{\LN-1}\mathcal{L}_\ell(I_k, \bm{\phi}_\ell^{(k)}) + K\gamma &\ge \sum_{k=0}^K\sum_{\ell = 0}^{\LN-1}\mathcal{L}_\ell(I_k, \phihat_{\ell, I_k}) + K\gamma - (K+1)\brutti \label{eq:chain1} \\ 
    & \ge \sum_{k=0}^{\hat{K}}\sum_{\ell = 0}^{\LN-1}\mathcal{L}_\ell(\hat{I}_k, \phihat_{\ell, \hat{I}_k}) + \hat{K}\gamma - (K+1)\brutti \label{eq:chain2} \\
    & \ge \sum_{k=0}^{D}\sum_{\ell = 0}^{\LN-1}\mathcal{L}_\ell(W_k, \phihat_{\ell,W_k}) + \hat{K}\gamma - 3(K+1)\brutti\label{eq:chain3} 
\end{align}
and that 
\begin{align}
    \left | \sum_{k=0}^K\sum_{\ell = 0}^{\LN-1}\mathcal{L}_\ell(I_k, \bm{\phi}^{(k)}_{\ell}) - \sum_{k=0}^{D}\sum_{\ell = 0}^{\LN-1}\mathcal{L}_\ell(W_k, \phihat_{\ell,W_k}) \right|  \le (2\hat{K}+K+1)\brutti
\end{align}
If so, it must hold that $|\hat{\mathcal{P}}|=K$, as otherwise if $\hat{K}\ge K +1$, then
\begin{align} 
     (2\hat{K}+K+1)\brutti \ge (\hat{K}-K)\gamma - 3(K+1)\brutti \ge \gamma - 3(K+1)\brutti,
\end{align}
and then
$$
(10K+4)\brutti \ge \gamma,
$$
since $\hat{K} \le 3K$. This contradicts the choice of $\gamma.$

Note that \eqref{eq:chain1} is a direct consequence of Theorem \ref{th::error_diff} since
\begin{align}
    &\left | \sum_{k=0}^K\sum_{\ell = 0}^{\LN-1}\mathcal{L}_\ell(I_k, \bm{\phi}_\ell^{(k)}) - \sum_{k=0}^K\sum_{\ell = 0}^{\LN-1}\mathcal{L}_\ell(I_k, \phihat_{\ell, I_k}) \right | \le (K+1)\brutti, 
\end{align}
and \eqref{eq:chain2} follows from $\hat{\mathcal{P}}$ being a minimizer of \eqref{eq:min_problem}. \\

Let us consider the first estimated interval $\hat{I}_0$ and assume it contains $q$ true change points $\eta_1, \dots, \eta_q$ which have not been detected, that is $\hat{I}_0 = \bigcup_{k=0}^q W_k$. 

Using \eqref{eq:prederr_phihat} from Theorem \ref{th::error_diff} we get
\begin{align*}
    & \sum_{k = 0}^q \mathcal{L}_\ell(W_k,\phihat_{\ell, W_k}) \le \sum_{k = 0}^q \mathcal{L}_\ell(W_k,\bm{\phi}^{(k)}_{\ell})+ (q+1)\brutti.
\end{align*}
Using \eqref{eq:prederr_beta} from Theorem \ref{th::error_diff}, since we are considering the coefficients $\{\phihat_{\ell, \hat{I}_0}\}$ estimated on the interval $\hat{I}_0$ instead of the segments $W_k, k= 0, \dots,q$, we get
\begin{align*}
    & \sum_{k = 0}^q \mathcal{L}_\ell(W_k,\phihat_{\ell, \hat{I}_0}) \ge \sum_{k = 0}^q \mathcal{L}_\ell(W_k,\bm{\phi}^{(k)}_{\ell})
    -  (q+1)\brutti.
\end{align*}
So we have that for the fixed interval $\hat{I}_0$
\begin{align*}
    & \sum_{k = 0}^q \mathcal{L}_\ell(W_k,\phihat_{\ell, \hat{I}_0}) \ge \sum_{k = 0}^q \mathcal{L}_\ell(W_k,\phihat_{\ell, W_k})
    - 2 (q+1)\brutti
\end{align*}
which leads to \eqref{eq:chain3}. \\

For the fixed interval $I_0$ containing $r$ estimated change points $\hat{\eta}_1, \dots, \hat{\eta}_r$, i.e., $I_0 = \bigcup_{k=0}^r W_k$, note that
\begin{align}
    &\left| \sum_{\ell = 0}^{\LN-1}\mathcal{L}_\ell(I_0, \bm{\phi}^{(0)}_{\ell}) - \sum_{k=0}^{r}\sum_{\ell = 0}^{\LN-1}\mathcal{L}_\ell(W_k, \phihat_{\ell,W_k})\right | \\ 
    & \le \left|\sum_{k=0}^r\sum_{\ell = 0}^{\LN-1}\mathcal{L}_\ell(W_k, \bm{\phi}^{(0)}_{\ell}) - \sum_{k=0}^{r}\sum_{\ell = 0}^{\LN-1}\mathcal{L}_\ell(W_k, \phihat_{\ell,W_k})\right | + \sum_{k=1}^{r}\sum_{\ell = 0}^{\LN -1} \| \mathbf{E}_{\ell, B_k}\|_2^2\\ 
    & \le  (2r+1)\brutti
\end{align}
where $B_k = [\hat{\eta}_{k} - p, \hat{\eta}_k + p -1]$.

%%%%%%%%%%%%%%%%%%

\end{document}